\documentclass[11pt,reqno]{amsart}
\usepackage[a4paper,textwidth=154mm,hcentering,top=28mm,bottom=28mm]{geometry}
\usepackage{amsmath,amssymb,amsthm,booktabs,hyperref}
\hypersetup{colorlinks=true,urlcolor=blue,linkcolor=blue,citecolor=blue,filecolor=blue,pdfborder={0 0 0}}
\numberwithin{equation}{section}
\newtheorem{theorem}{Theorem}[section]
\newtheorem{lemma}[theorem]{Lemma}
\newtheorem{proposition}[theorem]{Proposition}
\newtheorem{corollary}[theorem]{Corollary}
\theoremstyle{definition}

\theoremstyle{remark}
\newtheorem{remark}[theorem]{Remark}
\newtheorem{problem}[theorem]{Problem}
\newcommand{\E}{\mathbb E}
\title[Near-Gaussian counterexamples to entropy concavity]{Near-Gaussian counterexamples to the Ball--Nayar--Tkocz entropy concavity conjecture}
\author{Congyi Luo}
\address{School of Data Science, Fudan University, Shanghai 200433, China}
\email{cyluo24@m.fudan.edu.cn}
\keywords{Differential entropy, strongly log-concave densities, Gaussian smoothing, Hermite polynomials, interpolation inequalities}
\date{}
\begin{document}
\begin{abstract}
The Ball--Nayar--Tkocz conjecture asserts that, for independent real random variables $X,Y$ with a common log-concave density, the function $t\mapsto h(\sqrt tX+\sqrt{1-t}Y)$ is concave on $[0,1]$.
Gaussian distributions have this property. We construct strongly log-concave counterexamples arbitrarily close to the Gaussian and study their persistence under Gaussian smoothing.
Let $X$ have mean zero and variance one, let $G$ be an independent standard Gaussian, and let $q\in[0,1]$ denote the signal variance proportion in the smoothed variable $\sqrt qX+\sqrt{1-q}G$.
\textbf{In the asymmetric case, for every fixed $0<q\le1$, there are counterexamples for which the entropy of the weighted sum of two independent copies of the smoothed variable has strictly positive second derivative near the endpoints.}
\textbf{In the symmetric case, for every $4/7<q\le1$, there are counterexamples for which an unequally weighted sum has strictly greater entropy than the equally weighted sum.}
Both families have smooth, strictly positive densities and are strongly log-concave before and after smoothing. The density ratios $f/\varphi$ converge uniformly to one, and each derivative of fixed positive order converges uniformly to zero, where $\varphi$ is the standard Gaussian density.
The symmetric counterexamples can also match any prescribed finite number of Gaussian moments exactly.
The asymmetric construction uses an endpoint expansion of entropy, whereas the symmetric construction exploits the different decay rates of high-order Hermite perturbations under different weights.
Thus, densities violating concavity approach the Gaussian density in the strong sense described above.
The question of uniform concavity in the symmetric class for $0<q\le4/7$ remains open.
\end{abstract}
\maketitle
\section{Introduction and main theorems}\label{sec:introduction}
Let $X,Y$ be independent real random variables with common density $f$, and write
\[
 h(X)=-\int_{\mathbb R}f\log f,\qquad
 F_f(t)=h(\sqrt tX+\sqrt{1-t}Y),\quad 0\le t\le1.
\]
Throughout, logarithms are natural, and every entropy considered is required to have an absolutely integrable integrand $f\log f$.
Ball, Nayar, and Tkocz~\cite[Conjecture 2]{BNT} ask whether $F_f$ is concave whenever $f$ is log-concave.
This normalization keeps the variance of the weighted sum fixed. Since $F_f(t)=F_f(1-t)$, concavity necessarily implies
\begin{equation}\label{eq:necessary-midpoint}
 F_f(1/2)\ge\tfrac12\{F_f(t)+F_f(1-t)\}=F_f(t).
\end{equation}
For a Gaussian density $f$, the function $F_f$ is constant. Two questions therefore arise: does concavity hold in some neighborhood of the Gaussian, and does Gaussian smoothing with a fixed proportion suffice to ensure concavity? We give counterexamples to both assertions, distinguishing symmetric and asymmetric inputs.

Write $\varphi(x)=(2\pi)^{-1/2}e^{-x^2/2}$. For a density $f$ with mean zero and variance one, let
\[
 X_q=\sqrt qX+\sqrt{1-q}G,\qquad 0\le q\le1,
\]
where $X\sim f$ and $G$ is an independent standard Gaussian, and denote the density of $X_q$ by $\mathcal S_qf$.
The parameter $q$ is the fraction of the total variance contributed by the signal: $q=1$ gives the original distribution, whereas $q=0$ gives the standard Gaussian.
We study $F_{\mathcal S_qf}$, the entropy of the weighted sum of two independent copies of the smoothed input.
We prove that asymmetric counterexamples arbitrarily close to the Gaussian exist for every $0<q\le1$, and symmetric ones exist for every $4/7<q\le1$.
Both families approach the Gaussian uniformly in relative density and in derivatives of every fixed order, and remain strongly log-concave before and after smoothing.
The symmetric counterexamples can also match any prescribed finite number of Gaussian moments exactly.
The symmetric case $0<q\le4/7$ remains unresolved; we do not establish $4/7$ as a critical value for entropy concavity.

\subsection{Main theorems}
For a smooth function $u$, write
\[
 \|u\|_{C_b^m}=\max_{0\le j\le m}\sup_{x\in\mathbb R}|u^{(j)}(x)|,
 \qquad m=0,1,2,\ldots.
\]
A strictly positive $C^2$ density $f$ is called strongly log-concave if $(\log f)''\le-c$ for some $c>0$.
All convergence of relative densities below means convergence in $C_b^m$ for every fixed $m$.

\begin{theorem}[Asymmetric counterexamples: $0<q\le1$]\label{thm:asymmetric-all-noise}
Fix $0<q\le1$. There exists a sequence $(f_n)$ of strictly positive, smooth, asymmetric probability densities, each with mean zero and variance one, such that:
\begin{enumerate}
\item[(i)] \textnormal{\bfseries Strict convexity near the endpoint.} For every sufficiently large $n$, there exists $\delta_n\in(0,1/2)$ such that
\begin{equation}\label{eq:main-asymmetric-curvature}
 F_{\mathcal S_qf_n}''(t)>0,
 \qquad 0<t<\delta_n.
\end{equation}
\item[(ii)] \textnormal{\bfseries Strongly log-concave approximation of the Gaussian.} Both $f_n$ and $\mathcal S_qf_n$ are strongly log-concave, and
\[
 \left\|\frac{f_n}{\varphi}-1\right\|_{C_b^m}\longrightarrow0,
 \qquad
 \left\|\frac{\mathcal S_qf_n}{\varphi}-1\right\|_{C_b^m}\longrightarrow0,
 \qquad m=0,1,2,\ldots.
\]
\end{enumerate}
\end{theorem}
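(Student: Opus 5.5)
We work with the smoothed density $g=\mathcal S_qf$ throughout and expand $F_g$ near $t=0$. Write $Z_t=\sqrt t X+\sqrt{1-t}Y$ with $X,Y\sim g$ independent. For small $t$, $Z_t$ is a small perturbation of $Y$ by the independent variable $\sqrt t X$, rescaled by $\sqrt{1-t}$. Expanding with the de Bruijn-type cumulant expansion gives an endpoint formula in powers of $t$:
\[
F_g(t)=h(g)+\tfrac12\log(1-t)+\sum_{k\ge2}\frac{t^{k/2}}{(1-t)^{k/2}}\,\frac{\kappa_k(g)}{k!}\,A_k(g)+\cdots,
\]
where $\kappa_k$ is the $k$-th cumulant of $X$ and $A_k(g)=\int g\,(\log g)^{(k)}$, up to sign and convention. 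The $k=2$ term combines with $\tfrac12\log(1-t)$ into $\tfrac12 t\,(J(g)-1)$ to first order, where $J$ is the Fisher information; with variance one this is linear in $t$. The decisive term is the $t^{3/2}$ term. Since $\kappa_3(g)=q^{3/2}\kappa_3(f)\neq0$ for asymmetric $f$, the $k=3$ contribution is
\[
c_3\,t^{3/2}\,\kappa_3(g)\int g\,(\log g)'''.
\]
A $t^{3/2}$ term has second derivative $\tfrac34 c\,t^{-1/2}$, which dominates all smoother terms as $t\to0^+$. So $F_g''(t)>0$ on some $(0,\delta_n)$ exactly when $c_3\,\kappa_3(g)\int g(\log g)'''>0$. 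Integrating by parts, $\int g(\log g)'''=-\int g'(\log g)''$, a third-order "score skewness" quantity, and for near-Gaussian $g$ its sign is that of $-\kappa_3(g)$ to leading order. The product therefore has a definite sign $\kappa_3^2\cdot(\text{constant})$. The plan is to compute this constant and confirm it yields positive curvature.

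If it instead yields negative curvature, a single $H_3$ perturbation is not enough. We then decouple the two factors by using two Hermite modes: set $f=\varphi(1+\varepsilon H_3+\eta H_N)$, with a large-order $N$ arranged to keep mean zero and variance one. Under smoothing, the $H_k$ coefficient scales by $q^{k/2}$. The cumulant $\kappa_3$ depends only on the $H_3$ coefficient, but $(\log g)'''$ also picks up contributions from $H_N$ through nonlinearity and through $H_N$'s own derivatives. The parameters $\varepsilon,\eta\to0$ are then tuned so the product has the desired sign. This flexibility is why the result holds for every fixed $q>0$: all relevant coefficients are nonzero multiples of powers of $q$.

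\textbf{Construction of $f_n$.} Take
\[
f_n=\varphi\,(1+\varepsilon_n P_n),
\]
with $P_n$ a fixed finite Hermite combination including $H_3$, multiplied by a smooth cutoff or a bounded modification such as $\tanh$ of a polynomial so that $P_n$ and all its derivatives are bounded. Then $\|f_n/\varphi-1\|_{C_b^m}=O(\varepsilon_n)\to0$. Strong log-concavity follows because
\[
(\log f_n)''=-1+\bigl(\log(1+\varepsilon_nP_n)\bigr)''\le-\tfrac12
\]
for small $\varepsilon_n$, provided $P_n$ has bounded derivatives up to order two. Smoothing preserves this: $\mathcal S_q f_n/\varphi-1$ is a Mehler/Ornstein--Uhlenbeck semigroup applied to $f_n/\varphi-1$, which is a contraction in $C_b^m$. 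The mean and variance are corrected by adding small multiples of $H_1$ and $H_2$.

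\textbf{Main obstacle.} The hardest step is justifying the endpoint expansion rigorously. We need $F_g(t)=a+bt+c\,t^{3/2}+O(t^2)$ with matching bounds on $F_g''$, not merely on $F_g$, and uniformly enough to conclude $F_g''>0$ on $(0,\delta_n)$. We would prove it by writing the density of $Z_t$ as $\int g(x)\,g\bigl((z-\sqrt t x)/\sqrt{1-t}\bigr)\,dx/\sqrt{1-t}$. We then Taylor-expand in $s=\sqrt t$ to fourth order with an integral remainder, using Gaussian-type tails of $g$ and the bounded derivatives of $\log g$ to control $\int p_t\log p_t$. Derivatives in $t$ bring factors $s^{-1}$, so the $s^3$ term gives the $t^{-1/2}$ singularity in $F''$, while the remainder contributes $O(1)$. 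Note that $\kappa_3\neq0$ requires genuine asymmetry, which is consistent with the theorem's asymmetric hypothesis.
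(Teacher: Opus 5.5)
\textbf{There is a genuine gap: your construction does not produce the required sign.} Your endpoint analysis is the right tool and matches the paper: expand in $s=\sqrt t$ to fourth order to get $F''(t)=-\mu_3J_3/(16\sqrt t)+O(1)$, with $J_3=\int g\,((\log g)')^3$.

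\textbf{Your primary plan fails.} Take any perturbation $g=\varphi(1+\varepsilon V)$ with $V$ preserving mass, mean and variance. Expanding $(-x+r'/(1+r))^3$ and using $\int x^2r'\varphi=\mu_3$ gives $J_3=2\mu_3+O(\varepsilon^2)$. Hence $\mu_3J_3\approx2\mu_3^2>0$ and $F''(t)\approx-\mu_3^2/(8\sqrt t)<0$, so a single $H_3$ mode makes $F$ concave near the endpoint. Note also that $\int g(\log g)'''=J_3/2$ has the sign of $+\kappa_3$, not $-\kappa_3$.

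\textbf{Your fallback lacks a working mechanism.} With $H_3$ plus one high mode $H_N$, the quadratic part $-3\int x(r')^2\varphi$ of $J_3$ vanishes identically. Indeed $\int x(H_N')^2\varphi=0$ and $\int x(H_3')^2\varphi=0$ by parity, and $\int xH_2H_{N-1}\varphi=0$ for $N\ge5$ by Hermite orthogonality. So through second order $J_3$ is still $2\mu_3$, and ``tuning $\varepsilon,\eta\to0$'' cannot reverse the sign. The only way out would be at cubic order, with $\varepsilon\ll\eta^3$ and a proven nonvanishing cubic coefficient, and you have not supplied that.

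\textbf{The missing idea is a two-scale decoupling, as in the paper.}
\begin{itemize}
\item At order $\varepsilon$, use a perturbation $U$ with vanishing moments through order three, so it contributes nothing to $\mu_3$ or to the $2\mu_3$ term. It must have mixed parity with $K_U=\int x(U')^2\varphi\neq0$.
\item Put the third-moment carrier $V$ only at order $\varepsilon^2$. Then $\mu_3=\varepsilon^2m_3(V)>0$ while $J_3=\varepsilon^2(2m_3(V)-3K_U)+O(\varepsilon^3)<0$ once $0<m_3(V)<\tfrac32K_U$.
\item The paper builds $U$ from $e^{-x^2/2}H_4(\sqrt2x)$ and $e^{-x^2/2}H_5(\sqrt2x)$, and $V$ from $e^{-x^2/2}H_3(\sqrt2x)$.
\item It then proves $K_q>0$ after smoothing for every $q\in(0,1]$, by reducing $K_q$ to a degree-six polynomial with positive Bernstein coefficients.
\end{itemize}

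\textbf{Smoothing does not simply rescale your coefficients.} Your claim that the relevant coefficients are ``nonzero multiples of powers of $q$'' holds only for genuine Hermite polynomials. Those are unbounded, so $\varphi(1+\varepsilon H_3)$ is negative in one tail. Once you apply a cutoff or $\tanh$, the smoothed perturbation is no longer a rescaled Hermite polynomial, and nonvanishing of the relevant coefficient has to be proved.

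\textbf{Your mean/variance correction breaks positivity.} Adding multiples of $H_1,H_2$ to the relative density destroys positivity and log-concavity in the tails. Gaussian-weighted Hermite functions with built-in vanishing moments, or an affine renormalization, avoid this.
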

The sequence may depend on the fixed $q$. Inequality~\eqref{eq:main-asymmetric-curvature} directly contradicts concavity. Thus no fixed positive signal fraction guarantees concavity throughout the general strongly log-concave class.
The explicit construction is given in Section~\ref{sec:asymmetric-noise}; the required endpoint expansion is proved in Appendix~\ref{sec:endpoint}.

\begin{theorem}[Symmetric counterexamples: $4/7<q\le1$]\label{thm:scaled-hermite-counterexamples}
Fix an integer $K\ge1$. There exists a sequence $(f_n)$ of strictly positive, smooth, symmetric probability densities of variance one with the following properties.
\begin{enumerate}
\item[(i)] \textnormal{\bfseries The equally weighted sum does not maximize entropy.} For every $4/7<q\le1$, there are $t_q\in(0,1/2)$, independent of $n$, and an integer $N_q$ such that
\begin{equation}\label{eq:main-gap}
 F_{\mathcal S_qf_n}(t_q)>F_{\mathcal S_qf_n}(1/2),
 \qquad n\ge N_q.
\end{equation}
\item[(ii)] \textnormal{\bfseries Smooth approximation of every fixed order.} For every fixed $4/7<q\le1$ and integer $m\ge0$,
\begin{equation}\label{eq:main-smooth}
 \begin{aligned}
 \left\|\frac{f_n}{\varphi}-1\right\|_{C_b^m}&\longrightarrow0,\\
 \left\|\frac{\mathcal S_qf_n}{\varphi}-1\right\|_{C_b^m}&\longrightarrow0.
 \end{aligned}
\end{equation}
Moreover, for sufficiently large $n$, the densities before and after smoothing satisfy
\[
 (\log f_n)''<-\frac12,
 \qquad (\log\mathcal S_qf_n)''<-\frac12.
\]
\item[(iii)] \textnormal{\bfseries Exact matching of finitely many Gaussian moments.} For every $4/7<q\le1$, every $n$, and $j=0,1,\ldots,2K$,
\begin{equation}\label{eq:main-moments}
 \begin{aligned}
 \int_{\mathbb R}x^j f_n(x)\,dx
 &=\int_{\mathbb R}x^j\varphi(x)\,dx,\\
 \int_{\mathbb R}x^j(\mathcal S_qf_n)(x)\,dx
 &=\int_{\mathbb R}x^j\varphi(x)\,dx.
 \end{aligned}
\end{equation}
\end{enumerate}
The same sequence $(f_n)$ works for all these $q$; its choice depends only on $K$.
\end{theorem}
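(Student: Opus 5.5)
We take $f_n=\varphi\,(1+\varepsilon_n P_n)$ with $P_n=c_{k_n}\He_{k_n}$-type perturbations, i.e. a normalized Hermite polynomial $H_k$ of large even degree $k=k_n\to\infty$ and $k>2K$. Because $H_k$ is orthogonal to all polynomials of degree $<k$ in $L^2(\varphi)$, the moments of order $\le 2K$ agree with those of $\varphi$; this gives (iii) for $f_n$. Evenness gives symmetry, and the variance is one. Since $H_k$ is unbounded, we cannot multiply by it directly. We replace $\varphi H_k$ by $\varphi H_k$ damped by a Gaussian factor: we use the scaled Hermite function $u_{k,s}(x)=\varphi(x)\,\He_k(x/s)\,e^{-x^2(1/s^2-1)/2}$-type objects with $s<1$, or more simply the perturbation $\psi_k=\varphi^{(k)}$ evaluated at a dilated argument. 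The cleanest choice is $f_n=\varphi+\varepsilon_n\,\partial_x^{k}\gamma_{\sigma}$, where $\gamma_\sigma$ is a centred Gaussian of variance $\sigma^2<1$. This is orthogonal to all polynomials of degree $<k$ and is rapidly decaying relative to $\varphi$. Choosing $\varepsilon_n$ small compared with $\sup|\partial^{k+m}\gamma_\sigma/\varphi|$ gives $C_b^m$ convergence of $f_n/\varphi-1$ and $(\log f_n)''\to-1$ uniformly, hence $<-1/2$. Smoothing commutes with derivatives: $\mathcal S_q(\partial^k\gamma_\sigma)=q^{k/2}\partial^k\gamma_{\sigma_q}$ with $\sigma_q^2=q\sigma^2+1-q$. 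So $\mathcal S_qf_n$ has the same form, and (ii) and (iii) follow for it as well.

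For (i), convolution of two copies at weights $(\sqrt t,\sqrt{1-t})$ also maps this family into itself. Write $g=\mathcal S_qf_n=\varphi+\eta\,\partial^k\gamma_\rho$. The density of $\sqrt t X+\sqrt{1-t}Y$ is
\[
\varphi+\eta\bigl(t^{k/2}\partial^k\gamma_{\rho_t}+(1-t)^{k/2}\partial^k\gamma_{\rho_{1-t}}\bigr)+\eta^2 t^{k/2}(1-t)^{k/2}\,\partial^{2k}\gamma_{\rho'},
\]
where the variances $\rho_t^2=t\rho^2+1-t$, etc., are explicit. We then expand the entropy to second order in $\eta$: $h=h(\varphi)-\tfrac12\int (\text{pert})^2/\varphi+O(\eta^3\cdot)$. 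The linear term vanishes by the moment matching. This reduces the comparison $F(t_q)>F(1/2)$ to comparing quadratic forms $\int(\partial^k\gamma_a)(\partial^k\gamma_b)/\varphi$. These are Gaussian integrals computable via generating functions, and their exponential growth rate in $k$ is governed by something like $\bigl(\text{effective ratio}\bigr)^{k}$. The key point is the ``different decay rates'' mechanism. At $t=1/2$ the cross term and the diagonal terms each carry the factor $2^{-k}$. At $t$ near $0$, the term $(1-t)^{k/2}$ is only mildly damped. So the deficit $\tfrac12\int(\cdot)^2/\varphi$ is exponentially smaller at $1/2$ unless the variance $\rho_{1-t}$ makes $\int(\partial^k\gamma_{\rho_{1-t}})^2/\varphi$ grow faster. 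Picking $\sigma$ and $t_q$ appropriately, one shows that the ratio of the $t_q$-deficit to the $1/2$-deficit tends to $0$ as $k\to\infty$. We also need $\eta$ so small that the cubic remainder is negligible against the $1/2$-deficit. This requires $\eta_n$ to decay faster than that deficit scale, which is compatible with the smoothness requirement because $\eta$ may be chosen freely once $k$ is fixed.

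The main obstacle is the explicit asymptotics of $I_k(a,b)=\int\partial^k\gamma_a\,\partial^k\gamma_b/\varphi$. By Parseval in the Hermite basis, $I_k$ is a sum $\sum_j$ of products of coefficients. One obtains $I_k\approx C\,k^{\alpha}\,r(a,b)^k$ with an explicit $r$, and the growth condition $t^{k}r(\cdot)^k\ll(1-t)^{k}r(\cdot)^k$ versus the midpoint rate produces an inequality in $q$. I expect this inequality to be exactly where the threshold $q>4/7$ appears, after optimizing over $\sigma\to$ the boundary value and $t$. I would first compute $r$ in closed form from the Mehler kernel, then verify the exponent comparison, and finally choose $t_q$ independent of $n$. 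Uniform control of the cubic and higher entropy remainder is handled by the $C_b^0$ smallness of the relative perturbation, using $|\log(1+u)-u+u^2/2|\le|u|^3$.
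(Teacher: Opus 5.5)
Your proposal has a genuine gap in part (i). The reduction to second order in the amplitude is exactly where the argument fails, and the missing ingredient is the one that produces the counterexample.

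\textbf{The second-order reduction points the wrong way.} The $O(\eta)$ part of the output relative density is $\eta L_t$, where $L_t=T_{\sqrt t}g+T_{\sqrt{1-t}}g$ and $g=\partial^k\gamma_\rho/\varphi$. In the Hermite basis, the $j$th coefficient of $L_t$ is that of $g$ multiplied by $t^{j/2}+(1-t)^{j/2}$. For $j\ge3$ this factor is strictly minimized at $t=1/2$, and $g$ has only modes $j\ge k$. Hence $\|L_t\|_2>\|L_{1/2}\|_2$ for every $t\ne1/2$. Comparing the forms $\int\partial^k\gamma_a\,\partial^k\gamma_b/\varphi$, as you propose, therefore always makes the midpoint a strict entropy \emph{maximizer}. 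Your remark that $(1-t)^{k/2}$ is only mildly damped near an endpoint points the same way: it makes the deficit larger there.

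The violation must come from the term you discard as a remainder, namely $\eta^2M_t$ with $M_t=[t(1-t)]^{k/2}\partial^{2k}\gamma_\rho/\varphi$, which peaks at $t=1/2$. At a fixed $t\ne1/2$ one needs
$\eta^2\|M_{1/2}\|_2\gg\eta(\|L_t\|_2+\|L_{1/2}\|_2)$ and $\|M_t\|_2\ll\|M_{1/2}\|_2$.
This is a \emph{lower} bound on the amplitude, so $\eta$ cannot be shrunk freely once $k$ is fixed. The paper takes $\epsilon_n=e^{-\sqrt n}$, only subexponentially below the $L^\infty$ scale allowed by (ii). It controls the Taylor remainder multiplicatively via Lemma~\ref{lem:entropy-quadratic}: $D=\tfrac12\|r\|_2^2(1+O(\|r\|_\infty))$, with $r$ the full relative perturbation. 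No extra smallness is needed.

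\textbf{Your family cannot reach $4/7$.} Even with the mechanism corrected, Gaussian derivatives stop short of the claimed range. Write $s=\sigma^2$ and $v_0=qs+1-q$.
Up to polynomial factors, $\|\partial^m\gamma_v/\varphi\|_2^2\asymp m!\,v^{-m}$.
Condition (ii) forces $\varepsilon\lesssim1/\sup|\partial^k\gamma_\sigma/\varphi|$. This supremum is of order $\sqrt{k!}\,s^{-k/2}$ for $s\le1/2$, and $\sqrt{k!}\,(1-s)^{-k/2}$ for $s>1/2$.
Consequently $\eta^2\|M_{1/2}\|_2^2/\|L_{1/2}\|_2^2$ has exponential rate at most:
\begin{itemize}
\item $qs(1+v_0)/v_0^2$ for $s\le1/2$, which exceeds one iff $q^2s>(1-q)^2$;
\item $w(2-w)/(1-w)^2$ for $s>1/2$, where $w=q(1-s)<q/2$.
\end{itemize}
Either way you need $q>2-\sqrt2\approx0.586$, with the optimum at $s=1/2$. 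That optimum is exactly the family $\varphi h_n\propto\partial^n\varphi_{1/2}$ of Appendix~\ref{app:midpoint-curvature}, where $\mathcal R(q)>1\iff q>2-\sqrt2$. So your expectation that optimizing $\sigma$ yields $4/7$ is wrong: repaired, your route proves the statement only for $q>2-\sqrt2$.

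\textbf{How the paper gets $4/7$.} It uses Hermite functions $e^{-ax^2/2}H_n(\sqrt{2a}x)$ at the scale $a=7/(3\sqrt5)\ne1$, which are not Gaussian derivatives. This has several consequences:
\begin{itemize}
\item The family is no longer closed under smoothing and convolution.
\item Low-order moments must be restored by an exponentially small finite-dimensional correction.
\item The linear term is bounded by $\tau(q\max(t,1-t))^n$ via Mehler's generating function (Lemma~\ref{lem:linear-mehler}).
\item The quadratic term is bounded above and below by a four-variable Gaussian coefficient extraction with nonnegative coefficients. Its rate is $\kappa^{2n}$ at the midpoint and $\kappa_t^{2n}<\kappa^{2n}$ away from it.
\end{itemize}
The resulting dominance condition $\kappa^2>\tau(q/2)$ is precisely $q>4/7$.
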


Inequality~\eqref{eq:main-gap} reverses the necessary condition~\eqref{eq:necessary-midpoint} for concavity.
Since $F_{\mathcal S_qf_n}(t_q)=F_{\mathcal S_qf_n}(1-t_q)$, it gives a strict violation at a pair of weights symmetric about the midpoint.
Taking $q=1$ gives symmetric counterexamples to the Ball--Nayar--Tkocz conjecture while retaining the approximation and moment constraints in (ii) and (iii).

\begin{remark}[Comparison weights and uniformity in the parameter]
The comparison weights in Theorem~\ref{thm:scaled-hermite-counterexamples} can be chosen on either side of the midpoint.
For every fixed $4/7<q\le1$, there exists $a_q>0$ such that every fixed $t\in(0,1)$ satisfying
\[
 0<\left|t-\frac12\right|<a_q
\]
has $F_{\mathcal S_qf_n}(t)>F_{\mathcal S_qf_n}(1/2)$ for all sufficiently large $n$.
The lower bound on the degree may depend on $q$ and the chosen $t$.
Furthermore, for every fixed $q_0\in(4/7,1]$, a common comparison weight and a common lower bound on the degree can be chosen for $q\in[q_0,1]$.
\end{remark}
Section~\ref{sec:consequences} expresses these approximation properties as boundary statements at the Gaussian density.

\subsection{Related results and the proofs}
Gaussian scale mixtures provide a class for which the proposed concavity holds. Such variables have the form $SG$, where $S>0$ is independent of a standard Gaussian $G$.
Eskenazis, Nayar, and Tkocz~\cite[Theorem 8]{ENT} proved that the equally weighted sum of independent copies of a Gaussian scale mixture maximizes weighted entropy whenever the relevant entropies are finite;
Eskenazis and Gavalakis~\cite[Theorem 1]{EG} further proved entropy concavity for weighted sums of two independent Gaussian scale mixtures.
Theorem~\ref{thm:scaled-hermite-counterexamples} shows that the same conclusion fails in the symmetric strongly log-concave class, even arbitrarily close to the Gaussian and under any prescribed finite number of exact Gaussian moment constraints.

The normalization of the weights also affects the direction of entropy comparisons. Yu~\cite[Theorem 1]{Yu} proved that, for independent copies of a log-concave variable,
\[
 h(\lambda X+(1-\lambda)Y)\ge h((X+Y)/2),\qquad 0\le\lambda\le1.
\]
Here the sum of the coefficients is fixed. We instead fix their squared sum, preserving the variance of the weighted sum; the proposed midpoint inequality is~\eqref{eq:necessary-midpoint}.

Madiman, Nayar, and Tkocz~\cite[Theorem 1]{MNT} studied comparisons involving a fixed copy and a normalized sum of several copies; their counterexample is not the two-copy weight comparison in~\eqref{eq:main-gap}.
Appendix~\ref{sec:main-family} gives a more elementary construction with a single cosine, showing directly how equal and unequal weights produce different entropies under convolution.

The asymmetric theorem relies on an endpoint expansion. For a density $g$ with mean zero and variance one whose relative perturbation of the Gaussian is a Schwartz function, let
\[
 \mu_3(g)=\int x^3g(x)\,dx,\qquad
 J_3(g)=\int g(x)\bigl((\log g)'(x)\bigr)^3\,dx.
\]
Under the assumptions of Appendix~\ref{sec:endpoint}, we prove
\[
 F_g''(t)=-\frac{\mu_3(g)J_3(g)}{16\sqrt t}+O(1),\qquad t\downarrow0.
\]
Thus $\mu_3(g)J_3(g)<0$ suffices to give strictly positive curvature near the endpoint.
The construction takes the form $f_\varepsilon=\varphi(1+\varepsilon u+\varepsilon^2d_qv)$:
the fourth and fifth Hermite components of $u$ preserve the first three moments and produce the required contribution to the third moment of the logarithmic derivative;
the third-order component of $v$ adjusts the third central moment.
For fixed $q>0$, we choose $d_q$ so that the smoothed density satisfies $\mu_3(g)>0$ and $J_3(g)<0$, and then choose $\varepsilon$ sufficiently small to retain strong log-concavity.
Section~\ref{sec:asymmetric-noise} gives this choice and controls the remainder.

For symmetric densities, $\mu_3(g)=J_3(g)=0$, so this endpoint term vanishes.
We instead use Hermite functions of high degree at a fixed spatial scale and cancel finitely many low-order moments.
The relative perturbation of the convolution density splits into terms linear and quadratic in the amplitude.
Mehler's formula gives an exact generating function for the linear term; a four-variable Gaussian integral and comparison of positive coefficients give upper and lower bounds for the quadratic term.
When $q>4/7$, the quadratic term dominates the linear term at the midpoint, whereas at a fixed nearby nonmidpoint weight it has additional exponential decay.
The linear term and the moment correction do not alter this comparison. Converting the norm comparison of relative perturbations into an entropy comparison gives~\eqref{eq:main-gap}.

Section~\ref{sec:boundary} summarizes the two families and the boundary properties at the Gaussian; Section~\ref{sec:open-problems} asks about uniform concavity, local concavity near the Gaussian, and midpoint curvature.
Appendices~\ref{sec:endpoint}--\ref{sec:asymmetric} establish the endpoint expansion and moment estimates used in the asymmetric theorem.
Appendices~\ref{sec:main-family}--\ref{sec:periodic-resonance} extend the single-cosine example to general periodic waveforms and derive a scaling limit near the midpoint;
Appendix~\ref{app:midpoint-curvature} then studies the second derivative at the midpoint directly.
The Gaussian interpolation theorem in Appendix~\ref{sec:interpolation} explains the appearance of $4/7$ in the comparison of linear and quadratic terms; its parameter boundary is not asserted to be a sufficient condition for entropy concavity.

\section{Gaussian smoothing and Hermite functions}
\label{sec:preliminaries}
The proofs of both main theorems reduce the sign of entropy differences or derivatives to explicit perturbation estimates.
We establish two-sided quadratic bounds for relative entropy and compute a generating function for Gaussian smoothing of Hermite functions.
The former reduces entropy comparisons at equal variance to comparisons of Gaussian $L^2$ norms; the latter gives an exact expression for these norms as the degree varies.
Bounds for Hermite derivatives ensure that the small perturbations below remain positive and strongly log-concave.
We first introduce the entropy bounds and smoothing operator common to both constructions, and then the Hermite estimates needed for the symmetric construction. Section~\ref{sec:asymmetric-noise} uses only the first part and the finite-degree polynomials in the appendices.
For convenience, $\|\cdot\|_2$ below denotes the norm in $L^2(\mathbb R,\varphi(x)\,dx)$, and $G$ always denotes a standard Gaussian variable.
If $p=\varphi(1+r)$ is a probability density with $\|r\|_\infty\le\eta<1$, define
$D(p\Vert\varphi)=\int p\log(p/\varphi)$. For comparisons of relative entropy and quadratic divergence under bounded density ratios, see~\cite[Section IV]{SV}; the constants needed here follow from the integral form of Taylor's formula below.
\begin{lemma}\label{lem:entropy-quadratic}
Under the preceding assumptions,
\[
 \frac{\|r\|_{L^2(\varphi\,dx)}^2}{2(1+\eta)}\le D(p\Vert\varphi)
 \le\frac{\|r\|_{L^2(\varphi\,dx)}^2}{2(1-\eta)}.
\]
If $p$ has mean zero and variance one, then $h(p)=\frac12\log(2\pi e)-D(p\Vert\varphi)$.
\end{lemma}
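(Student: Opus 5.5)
We write $D(p\Vert\varphi)=\int\varphi\,(1+r)\log(1+r)$ and compare the integrand pointwise with $r+\tfrac12 r^2$. The strategy is to Taylor-expand $\psi(s)=(1+s)\log(1+s)$ about $s=0$ with integral remainder. We have $\psi(0)=0$, $\psi'(s)=1+\log(1+s)$ so $\psi'(0)=1$, and $\psi''(s)=1/(1+s)$. Hence
\[
 \psi(s)=s+\int_0^s\frac{s-u}{1+u}\,du .
\]
For $|s|\le\eta<1$ and $u$ between $0$ and $s$, we have $1-\eta\le 1+u\le1+\eta$. The kernel $(s-u)$ keeps a constant sign relative to the orientation of the integral, and $\int_0^s (s-u)\,du=s^2/2$. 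Therefore
\[
 s+\frac{s^2}{2(1+\eta)}\le\psi(s)\le s+\frac{s^2}{2(1-\eta)},\qquad |s|\le\eta .
\]

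Next we substitute $s=r(x)$, which is allowed since $\|r\|_\infty\le\eta$, multiply by $\varphi(x)\ge0$, and integrate. The linear term vanishes because $p$ and $\varphi$ are both probability densities, so $\int\varphi r=\int p-\int\varphi=0$. The quadratic term gives exactly $\tfrac12\|r\|_{L^2(\varphi\,dx)}^2$ divided by $1\pm\eta$. Integrability is not an issue. The bound $|r|\le\eta$ makes $\psi(r)$ bounded, so $p\log(p/\varphi)=\varphi\psi(r)$ is integrable. It also shows $p\log p=p\log\varphi+\varphi\psi(r)$ is integrable, because $p\le(1+\eta)\varphi$ and $\log\varphi$ grows quadratically.

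For the identity, we expand
\[
 -\int p\log p=-\int p\log\varphi-D(p\Vert\varphi),
\]
and use $-\log\varphi(x)=\tfrac12\log(2\pi)+x^2/2$. Then $-\int p\log\varphi=\tfrac12\log(2\pi)+\tfrac12\int x^2p=\tfrac12\log(2\pi)+\tfrac12$, because $p$ has second moment one (mean zero, variance one). This gives $h(p)=\tfrac12\log(2\pi e)-D(p\Vert\varphi)$.

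No step is a genuine obstacle. The only care needed is the sign bookkeeping in the remainder when $s<0$. There both $s-u$ and $du$ change orientation together, so the integrand $(s-u)\,du$ is still nonnegative, and the bounds on $1/(1+u)$ transfer directly.
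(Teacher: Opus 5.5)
Your proposal is correct and follows essentially the same route as the paper. Both use the integral-remainder Taylor expansion of $(1+s)\log(1+s)$ at $s=0$, where your kernel $\int_0^s (s-u)/(1+u)\,du$ becomes the paper's $s^2\int_0^1(1-\sigma)/(1+\sigma s)\,d\sigma$ under $u=\sigma s$. Both then use the cancellation $\int\varphi r=0$, bound the kernel between $1/[2(1+\eta)]$ and $1/[2(1-\eta)]$, and obtain the entropy identity by substituting $-\log\varphi=\tfrac12\log(2\pi)+x^2/2$.
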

\begin{proof}
Since $\int\varphi r=0$, Taylor's formula with integral remainder gives
\[
 D(p\Vert\varphi)=\int\varphi(x)r(x)^2
   \left\{\int_0^1\frac{1-s}{1+sr(x)}\,ds\right\}\,dx.
\]
The inner integral lies between $1/[2(1+\eta)]$ and $1/[2(1-\eta)]$.
The positive upper and lower bounds on $p/\varphi$ also bound $p|\log p|$ by a Gaussian density times a quadratic polynomial, so the entropy is finite.
Substituting $\log\varphi=-\log(2\pi)/2-x^2/2$ into the definition of relative entropy gives the last identity.
\end{proof}

We use the probabilists' convention for Hermite polynomials:
\[
 e^{xz-z^2/2}=\sum_{n=0}^\infty H_n(x)\frac{z^n}{n!}.
\]
Let $\varphi_s$ denote the centered Gaussian density of variance $s$. For $0\le\rho\le1$, define
\[
 T_\rho g(x)=\mathbb E g(\rho x+\sqrt{1-\rho^2}G).
\]
If $X$ has density $\varphi(1+g)$, then $\sqrt qX+\sqrt{1-q}G$ has density
$\varphi(1+T_{\sqrt q}g)$. Indeed, let $U,V$ be standard Gaussians with correlation $\rho$. The conditional law
$U\mid V=x$ is $N(\rho x,1-\rho^2)$. Weighting the joint Gaussian density by $1+g(U)$ gives $V$ the relative density $1+T_\rho g$.
Exchanging $U,V$ also shows that $T_\rho$ is self-adjoint on $L^2(\varphi)$.
Conditional Jensen's inequality and invariance of the Gaussian marginals give
$\int|T_\rho g|^2\varphi\le\int T_\rho(|g|^2)\varphi=\int|g|^2\varphi$.
The definition also gives $T_\rho T_\sigma=T_{\rho\sigma}$, because the two independent Gaussian noises combine to have variance $1-\rho^2\sigma^2$.
The operator contracts both $L^2(\varphi)$ and $L^\infty$ norms, and satisfies
$\|(T_\rho g)^{(m)}\|_\infty\le\rho^m\|g^{(m)}\|_\infty$.
When the corresponding derivative of $g$ is bounded, dominated convergence justifies differentiation under the expectation.
The Hermite generating function further gives $T_\rho H_j=\rho^jH_j$: applying $T_\rho$ to $e^{zx-z^2/2}$ gives $e^{\rho zx-\rho^2z^2/2}$, and one compares coefficients. This diagonalization of Gaussian convolution also appears in Abbe--Zheng~\cite{AZ}.

\begin{lemma}\label{lem:hermite-bounds}
Let $a>0$ and
\[
 h_{n,a}(x)=\sqrt2e^{-ax^2/2}\frac{H_n(\sqrt{2a}x)}{\sqrt{n!}}.
\]
Then $\|h_{n,a}\|_\infty\le B_n:=\sqrt{2e}(n+1)^{1/4}$, and the norm of every fixed-order derivative
grows at most polynomially in $n$. In particular,
$\|h_{n,a}''\|_\infty\le a(2n+2)B_{n+2}$.
\end{lemma}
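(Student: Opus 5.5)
The sup-norm does not depend on $a$, so the plan is to substitute $y=\sqrt{2a}\,x$ and work with the normalized functions
\[
 g_n(y)=e^{-y^2/4}\frac{H_n(y)}{\sqrt{n!}},\qquad h_{n,a}(x)=\sqrt2\,g_n(\sqrt{2a}\,x).
\]
Then $\|h_{n,a}\|_\infty=\sqrt2\|g_n\|_\infty$, and every $x$-derivative of $h_{n,a}$ carries one factor $\sqrt{2a}$. With respect to Lebesgue measure $dy$, the $g_n$ are orthogonal with $\int g_n^2\,dy=\sqrt{2\pi}$, since $\int H_nH_m\varphi=n!\,\delta_{nm}$.

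\textbf{Step 1 (derivative recursion).} From $H_n'=nH_{n-1}$ and the three-term relation $yH_n=H_{n+1}+nH_{n-1}$, I will compute
\[
 \frac{d}{dy}\bigl(e^{-y^2/4}H_n\bigr)=e^{-y^2/4}\Bigl(nH_{n-1}-\tfrac y2H_n\Bigr)=\tfrac12e^{-y^2/4}\bigl(nH_{n-1}-H_{n+1}\bigr).
\]
After normalizing, this gives
\[
 g_n'=\tfrac12\bigl(\sqrt n\,g_{n-1}-\sqrt{n+1}\,g_{n+1}\bigr),
\]
with the convention $g_{-1}=0$. Orthogonality then yields $\int (g_n')^2\,dy=\sqrt{2\pi}\,(2n+1)/4$.

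\textbf{Step 2 (sup bound via a Sobolev/Agmon inequality).} For every $y$, I will write $g_n(y)^2=2\int_{-\infty}^y g_ng_n'=-2\int_y^\infty g_ng_n'$. Adding the two representations and applying Cauchy--Schwarz gives
\[
 g_n(y)^2\le\int|g_ng_n'|\,dy\le\sqrt{2\pi}\cdot\tfrac12\sqrt{2n+1}.
\]
Hence
\[
 \|h_{n,a}\|_\infty^2\le\sqrt{2\pi}\sqrt{2n+1}\le\sqrt{2\pi}\sqrt2\,(n+1)^{1/2}\approx3.55\,(n+1)^{1/2}\le2e\,(n+1)^{1/2}=B_n^2.
\]
This is the step that requires checking the numerical constant. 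It works, but only because the symmetric (two-sided) averaging saves the factor $2$ over the one-sided estimate.

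\textbf{Step 3 (derivatives).} Translating Step 1 to $x$ gives
\[
 h_{n,a}'=\tfrac{\sqrt{2a}}{2}\bigl(\sqrt n\,h_{n-1,a}-\sqrt{n+1}\,h_{n+1,a}\bigr).
\]
By induction, the $m$-th derivative is $(2a)^{m/2}2^{-m}$ times a combination of $h_{n+k,a}$ with $|k|\le m$. There are at most $2^m$ terms, and each coefficient is at most $(n+m)^{m/2}$. Since $B_n$ is increasing, this gives polynomial growth in $n$.

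For $m=2$, I will expand explicitly:
\[
 h_{n,a}''=\tfrac a2\Bigl(\sqrt{n(n-1)}\,h_{n-2,a}-(2n+1)\,h_{n,a}+\sqrt{(n+1)(n+2)}\,h_{n+2,a}\Bigr).
\]
The coefficient moduli sum to at most $n+(2n+1)+(n+2)=4n+3\le4n+4$. Bounding every $\|h_{k,a}\|_\infty$ by $B_{n+2}$ then yields $\|h_{n,a}''\|_\infty\le a(2n+2)B_{n+2}$.
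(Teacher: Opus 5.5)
Your proof is correct. The derivative part is exactly the paper's argument: the ladder relation $h_{n,a}'=\sqrt{a/2}\,(\sqrt n\,h_{n-1,a}-\sqrt{n+1}\,h_{n+1,a})$, the $2^m$-term iteration with coefficients at most $(n+m)^{m/2}$, and the explicit second-derivative expansion using $\sqrt{n(n-1)}\le n$ and $\sqrt{(n+1)(n+2)}\le n+2$.

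The difference lies in the sup bound.

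The paper evaluates Mehler's formula on the diagonal, $\sum_j r^j h_{j,a}(u)^2=2(1-r^2)^{-1/2}\exp(-a\frac{1-r}{1+r}u^2)$. It keeps only the $j=n$ term and takes $r=n/(n+1)$, so that $r^{-n}\le e$ and $(1-r^2)^{-1/2}\le\sqrt{n+1}$; this is where the constant $2e$ in $B_n^2$ comes from.

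You instead use the two-sided Agmon inequality $\|g\|_\infty^2\le\|g\|_{L^2(dy)}\|g'\|_{L^2(dy)}$. You feed it with the $L^2(dy)$-orthogonality of the $g_n$, and with the same ladder relation to get $\|g_n'\|_{L^2(dy)}^2=\sqrt{2\pi}(2n+1)/4$. This yields $\|h_{n,a}\|_\infty^2\le\sqrt{2\pi}\sqrt{2n+1}\le 2\sqrt\pi\,(n+1)^{1/2}$. That constant is slightly better than $2e$, and you are right that the one-sided version would fail it. Your route also needs only the three-term recurrence and orthogonality, not Mehler's kernel.

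The paper's choice is economical in context, since Mehler's formula is reused immediately in the generating-function lemmas. Yours is more self-contained. Both give the same $(n+1)^{1/4}$ growth, which is all that is used later: summability of $\sum_n z_0^nB_n^2$, and polynomial-in-$n$ bounds that are beaten by $\epsilon_n=e^{-\sqrt n}$.
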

\begin{proof}
Mehler's identity (\cite[Eq. 18.18.28]{DLMF}, converted using $H_n^{\rm prob}(x)=2^{-n/2}H_n^{\rm phys}(x/\sqrt2)$) reads
\[
 \sum_{j\ge0}\frac{r^jH_j(x)H_j(y)}{j!}
 =\frac1{\sqrt{1-r^2}}
 \exp\!\left(\frac{2rxy-r^2(x^2+y^2)}{2(1-r^2)}\right),\quad |r|<1.
\]
Setting $x=y=\sqrt{2a}\,u$ gives
\[
 \sum_{j\ge0}r^j h_{j,a}(u)^2
 =\frac2{\sqrt{1-r^2}}
       \exp\!\left(-a\frac{1-r}{1+r}u^2\right).
\]
For $n\ge1$, take $r=n/(n+1)$. The inequalities $r^{-n}\le e$ and
$1-r^2\ge1/(n+1)$ yield the bound. For $n=0$, the function $h_{0,a}(x)=\sqrt2e^{-ax^2/2}$ satisfies the same bound.
The Hermite recurrence gives
\[
 h_{n,a}'=\sqrt{a/2}\bigl(\sqrt n\,h_{n-1,a}
                         -\sqrt{n+1}\,h_{n+1,a}\bigr).
\]
Terms with negative indices are understood to be zero. Iterating $m$ times gives at most $2^m$ terms, each with coefficient at most $(a/2)^{m/2}(n+m)^{m/2}$. Thus, for $n+m\ge1$,
\[
 \|h_{n,a}^{(m)}\|_\infty\le(2a)^{m/2}(n+m)^{m/2}B_{n+m}.
\]
In particular,
\[
 h_{n,a}''=\frac a2\{\sqrt{n(n-1)}h_{n-2,a}-(2n+1)h_{n,a}
 +\sqrt{(n+1)(n+2)}h_{n+2,a}\}.
\]
Using $\sqrt{n(n-1)}\le n$, $\sqrt{(n+1)(n+2)}\le n+2$, and monotonicity of $B_j$ gives the stated second-derivative bound.
\end{proof}

\begin{lemma}\label{lem:linear-mehler}
For $0\le r\le1$, in a neighborhood of $z=0$,
\[
 \sum_{n\ge0}z^n\|T_{\sqrt r}h_{n,a}\|_2^2
 =\frac2{\sqrt{(1+a)^2-a^2r^2}}
   \frac1{\sqrt{(1-\tau_a(r)z)(1-\nu_a(r)z)}},
\]
where
\[
 \tau_a(r)=\frac{1-a+ar}{1+a-ar},\qquad
 \nu_a(r)=\frac{a-1+ar}{1+a+ar}.
\]
\end{lemma}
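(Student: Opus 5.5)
Since $\|T_{\sqrt r}h_{n,a}\|_2^2=\int (T_{\sqrt r}h_{n,a})^2\varphi$, the generating function $\sum_n z^n (T_{\sqrt r}h_{n,a})(x)(T_{\sqrt r}h_{n,a})(x)$ is obtained by applying $T_{\sqrt r}\otimes T_{\sqrt r}$ to the Mehler kernel. I will use the version of Mehler's identity in the proof of Lemma~\ref{lem:hermite-bounds}, with $(x,y)$ replaced by $(\sqrt{2a}u,\sqrt{2a}v)$ and parameter $w$:
\[
 \sum_{n\ge0}w^n h_{n,a}(u)h_{n,a}(v)
 =\frac{2}{\sqrt{1-w^2}}\exp\!\Bigl(-\tfrac a2(u^2+v^2)+\frac{a(2wuv-w^2(u^2+v^2))}{1-w^2}\Bigr).
\]
This is a centered Gaussian kernel in $(u,v)$. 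The plan is to integrate it against the joint law arising from $\int (T_{\sqrt r}g)(x)(T_{\sqrt r}g)(x)\varphi(x)\,dx$. Write $U=\sqrt r\,V+\sqrt{1-r}\,G_1$ and $U'=\sqrt r\,V+\sqrt{1-r}\,G_2$ with $V,G_1,G_2$ independent standard Gaussians. Then
\[
 \int (T_{\sqrt r}g)^2\varphi=\E\,[g(U)g(U')],
\]
and $(U,U')$ is a standard Gaussian pair with correlation $r$. Consequently
\[
 \sum_n w^n\|T_{\sqrt r}h_{n,a}\|_2^2=\E\Bigl[\sum_n w^n h_{n,a}(U)h_{n,a}(U')\Bigr],
\]
with the interchange justified for small $|w|$ by the bound $|h_{n,a}|\le B_n$ and dominated convergence. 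One could alternatively use $T_\rho T_\sigma=T_{\rho\sigma}$ and self-adjointness to write the left side as $\langle T_r h_{n,a},h_{n,a}\rangle$, but the pair formulation is more direct.

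The right-hand side is the expectation of $\exp(-\frac12 Q(U,U'))$ for the quadratic form
\[
 Q=\alpha(u^2+v^2)-2\beta uv,\qquad
 \alpha=a\,\frac{1+w^2}{1-w^2},\quad \beta=\frac{2aw}{1-w^2}.
\]
Indeed $a+\frac{2aw^2}{1-w^2}=a\frac{1+w^2}{1-w^2}$. The standard Gaussian integral gives
\[
 \E e^{-Q/2}=\det(I+\Sigma A)^{-1/2},
\qquad
 \Sigma=\begin{pmatrix}1&r\\ r&1\end{pmatrix},\quad
 A=\begin{pmatrix}\alpha&-\beta\\-\beta&\alpha\end{pmatrix}.
\]
Both matrices are diagonalized by $(1,\pm1)/\sqrt2$. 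Hence the determinant is
\[
 \bigl(1+(1+r)(\alpha-\beta)\bigr)\bigl(1+(1-r)(\alpha+\beta)\bigr),
\]
where
\[
 \alpha-\beta=a\frac{1-w}{1+w},\qquad \alpha+\beta=a\frac{1+w}{1-w}.
\]
Multiplying by the prefactor $2/\sqrt{1-w^2}=2/\sqrt{(1-w)(1+w)}$, the total is
\[
 \frac{2}{\sqrt{\bigl((1+w)+a(1+r)(1-w)\bigr)\bigl((1-w)+a(1-r)(1+w)\bigr)}}.
\]

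The remaining step, and the only real obstacle, is the bookkeeping that puts this into the stated factored form. The first factor is
\[
 (1+a+ar)-w(a-1+ar)=(1+a+ar)\bigl(1-\nu_a(r)w\bigr).
\]
The second factor is
\[
 (1+a-ar)-w(1-a+ar)=(1+a-ar)\bigl(1-\tau_a(r)w\bigr).
\]
Since
\[
 (1+a+ar)(1+a-ar)=(1+a)^2-a^2r^2,
\]
this yields exactly the claimed formula with $z=w$. Positivity of $\det(I+\Sigma A)$ for small $|w|$ justifies the Gaussian integral and the choice of branch of the square root. Convergence of the series near $z=0$ follows from $\|T_{\sqrt r}h_{n,a}\|_2\le\|h_{n,a}\|_\infty\le B_n$, which has polynomial growth. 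The endpoint cases $r=0$ and $r=1$ are covered directly by the same computation, since $\Sigma$ remains positive semidefinite there.
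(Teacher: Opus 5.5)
Your proposal is correct and follows essentially the same route as the paper: you represent the norm as $\mathbb E\,h_{n,a}(U)h_{n,a}(U')$ for a standard Gaussian pair with correlation $r$, integrate Mehler's formula against it, and diagonalize along $(1,\pm1)/\sqrt2$, which is the paper's $S,D$ coordinates. Your determinant then factors into the same two linear terms in $z$ as in the paper, and you justify the interchange by the same dominated-convergence bound using $B_n$.
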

\begin{proof}
Let $U,V$ be standard Gaussians with correlation $r$. Then
$\|T_{\sqrt r}h_{n,a}\|_2^2=\mathbb E h_{n,a}(U)h_{n,a}(V)$.
Integrating Mehler's formula and using the independent coordinates
$S=(U+V)/\sqrt2$ and $D=(U-V)/\sqrt2$, of variances $1+r$ and $1-r$, respectively, gives
\[
 \frac2{\sqrt{1-z^2}}\,
 \mathbb E\exp\!\left[-\frac a2\frac{1-z}{1+z}S^2
                     -\frac a2\frac{1+z}{1-z}D^2\right].
\]
Evaluating the one-dimensional Gaussian integrals, the square of the denominator is
\[
 \bigl(1+a(1+r)+[1-a(1+r)]z\bigr)
 \bigl(1+a(1-r)+[a(1-r)-1]z\bigr),
\]
Factoring gives the formula. To justify interchanging summation and integration, take $|z|\le z_0<1$.
The preceding lemma and the Cauchy--Schwarz inequality give
$\sum_{n\ge0}|z|^n|h_{n,a}(U)h_{n,a}(V)|\le\sum_{n\ge0}z_0^nB_n^2<\infty$,
Thus dominated convergence justifies the interchange. Both sides are analytic near the origin and therefore have the same coefficients.
\end{proof}

\begin{lemma}\label{lem:gaussian-coefficient}
Let $a>0$, and let the centered Gaussian vector $W\in\mathbb R^d$ have covariance matrix $C\ge0$. Then
\[
 \mathbb E\exp\!\left(-\frac a2W^\mathsf TW
                +\sqrt{2a}\,z^\mathsf TW-\frac12z^\mathsf Tz\right)
 =\frac{\exp\!\left(\frac12z^\mathsf T
 [2aC(I+aC)^{-1}-I]z\right)}{\sqrt{\det(I+aC)}}.
\]
\end{lemma}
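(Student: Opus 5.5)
The identity is a Gaussian integral computed by completing the square. I would first handle the nondegenerate case $C>0$ and then remove that restriction by continuity.

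\emph{Nondegenerate case.} Write the expectation against the density $(2\pi)^{-d/2}(\det C)^{-1/2}\exp(-\tfrac12 w^\mathsf T C^{-1}w)$. The exponent of the integrand becomes
\[
-\tfrac12 w^\mathsf T A w+\sqrt{2a}\,z^\mathsf T w-\tfrac12 z^\mathsf T z,
\qquad A=C^{-1}+aI .
\]
The matrix $A$ is positive definite, so the standard formula applies:
\[
\int\exp\bigl(-\tfrac12 w^\mathsf TAw+b^\mathsf Tw\bigr)\,dw=(2\pi)^{d/2}(\det A)^{-1/2}\exp\bigl(\tfrac12 b^\mathsf TA^{-1}b\bigr).
\]
This holds first for real $b$, and for complex $z$ by analyticity, should complex arguments be needed later.

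Take $b=\sqrt{2a}\,z$. The prefactor is then $(\det C)^{-1/2}(\det A)^{-1/2}=\det(CA)^{-1/2}=\det(I+aC)^{-1/2}$, since $CA=I+aC$. In the exponent, $A^{-1}=(C^{-1}+aI)^{-1}=C(I+aC)^{-1}$, so
\[
\tfrac12 b^\mathsf TA^{-1}b=\tfrac12 z^\mathsf T\,2aC(I+aC)^{-1}z .
\]
Combining this with the term $-\tfrac12 z^\mathsf Tz$ gives exactly the claimed exponent $\tfrac12 z^\mathsf T[2aC(I+aC)^{-1}-I]z$.

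The matrix $C(I+aC)^{-1}$ is symmetric because $C$ commutes with $(I+aC)^{-1}$. The formula is therefore well defined, and the right-hand side is continuous in $C\ge0$ because $I+aC$ stays invertible.

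\emph{Degenerate case.} This is the only real obstacle, and it is mild. For $C\ge0$ singular, realize $W$ as $C^{1/2}Z$ with $Z$ standard, and set $W_\epsilon=(C+\epsilon I)^{1/2}Z$. For real $z$, the integrand is bounded by $\exp(\sqrt{2a}\,|z|\,|W|)$. Hence the expectations are dominated uniformly for $\epsilon\le1$ by $\mathbb E\exp(c|Z|)<\infty$. Dominated convergence then passes the nondegenerate formula to the limit $\epsilon\to0$, and continuity of the right-hand side in $C$ finishes the proof. Alternatively, one can diagonalize $C$ and compute on its range directly, but the limiting argument is shorter.
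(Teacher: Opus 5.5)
Your proposal is correct and is essentially the paper's proof: you complete the square for $C>0$ using $(C^{-1}+aI)^{-1}=C(I+aC)^{-1}$, then reach singular $C$ through $C+\epsilon I$ and dominated convergence. The only difference is the dominating function: the paper completes the square in $W$ to bound the integrand's modulus by a constant independent of $W$ and $\epsilon$, which directly covers complex $z$ on compact sets (as used later for coefficient extraction), whereas your bound $e^{\sqrt{2a}|z||W|}$ is for real $z$, with complex $z$ then following by analytic continuation.
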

\begin{proof}
When $C>0$, complete the square in the Gaussian integral. The tilted covariance is
$(C^{-1}+aI)^{-1}=C(I+aC)^{-1}$, which gives the identity.
For a general positive semidefinite $C$, apply the result to $C+\delta I$ and let $\delta\downarrow0$.
This limit can be taken under the expectation: restrict the complex parameter $z$ to any fixed compact set and use
$-a|W|^2/2+\sqrt{2a}\operatorname{Re}(z)^\mathsf TW\le|\operatorname{Re}z|^2$.
The modulus of the integrand has a bound independent of $W,\delta$. Couple $W_\delta=W+\sqrt\delta G_d$, where $G_d$ is an independent standard Gaussian vector, and apply dominated convergence.
\end{proof}

\section{Counterexamples for asymmetric inputs}\label{sec:asymmetric-noise}
In the asymmetric case, we seek strictly positive entropy curvature near the endpoint after a fixed amount of Gaussian smoothing.
For every fixed $q\in(0,1]$, we show that arbitrarily small smooth strongly log-concave perturbations achieve this.
The construction adjusts the third central moment and the third moment of the logarithmic derivative separately, making the former positive and the latter negative; the expansion in Appendix~\ref{sec:endpoint} then gives strict convexity near the endpoint.
Unlike the symmetric construction, whose oscillation degree tends to infinity, this construction only adjusts the amplitudes of two fixed functions.
Fix $q\in(0,1]$ and take
\begin{align}
 u(x)&=e^{-x^2/2}\{H_4(\sqrt2x)+H_5(\sqrt2x)\},\label{eq:u}\\
 v(x)&=e^{-x^2/2}H_3(\sqrt2x).\label{eq:v}
\end{align}

Here $H_j$ is the probabilists' Hermite polynomial. Write $u_q=T_{\sqrt q}u$ and $v_q=T_{\sqrt q}v$.

\begin{proof}[Proof of Theorem~\ref{thm:asymmetric-all-noise}]
Put $w=1-q/2$ and $z=x/\sqrt w$. Convolution identities for Gaussian derivatives give
\begin{align*}
 u_q(x)&=e^{-(1-w)z^2/2}
 \left[\frac{q^2}{4\sqrt2\,w^{5/2}}H_4(z)
       +\frac{q^{5/2}}{8w^3}H_5(z)\right],\\
 v_q(x)&=e^{-(1-w)z^2/2}\frac{q^{3/2}}{4w^2}H_3(z).
\end{align*}
More explicitly, setting $W_j(x)=e^{-x^2/2}H_j(\sqrt2x)$ yields
\[
 \varphi W_j=(-1)^j2^{-(j+1)/2}\varphi_{1/2}^{(j)}.
\]
After scaling by $\sqrt q$ and convolving with $\varphi_{1-q}$, the derivative term becomes
\[
 (-1)^j2^{-(j+1)/2}q^{j/2}\varphi_w^{(j)}.
\]

Using $\varphi_w^{(j)}=(-1)^jw^{-j/2}\varphi_wH_j(x/\sqrt w)$, we obtain
\[
 T_{\sqrt q}W_j(x)=\frac{q^{j/2}}{2^{(j+1)/2}w^{(j+1)/2}}
 e^{-(1-w)z^2/2}H_j(z),
\]
For $q=1$, this is understood as the identity without added noise. Taking $j=3,4,5$ gives the displayed formulas.
All these functions and their derivatives are bounded, and belong to the Schwartz class when $q>0$.
Let $K_q=\int x(u_q')^2\varphi$. Differentiating the even and odd parts and retaining only the cross term gives
\[
 K_q=\frac{60\sqrt2\,q^{9/2}P(q)}{(2-q)^{11/2}(q+2)^{13/2}},
\]
\[
 P(q)=3q^6-39q^5+264q^4-752q^3+1664q^2-1408q+1024.
\]
To evaluate this integral, write
\[
 A_4=\frac{q^2}{4\sqrt2w^{5/2}},\quad
 A_5=\frac{q^{5/2}}{8w^3},\quad D=\frac{q+2}{2},\quad
 B_j=H_j'-\frac q2 zH_j.
\]
After the substitution $x=\sqrt w\,z$, the cross term between the even and odd parts gives
\[
 K_q=\frac{2A_4A_5}{\sqrt{2\pi}}\int zB_4(z)B_5(z)e^{-Dz^2/2}\,dz.
\]
Here
\begin{align*}
 B_4(z)&=-\tfrac q2z^5+(4+3q)z^3-(12+3q/2)z,\\
 B_5(z)&=-\tfrac q2z^6+(5+5q)z^4-(30+15q/2)z^2+15.
\end{align*}
Writing $zB_4B_5=\sum_{j=1}^6b_{2j}z^{2j}$ and applying the Gaussian even-moment formula term by term gives
\[
 \sum_{j=1}^6b_{2j}(2j-1)!!D^{6-j}=\frac{15}{32}P(q),\qquad
 K_q=2A_4A_5D^{-13/2}\frac{15}{32}P(q),
\]
which is the asserted expression in rational powers.
To determine its sign, expand $P$ in the Bernstein basis as
$P(q)=\sum_{j=0}^6 p_j\binom6j q^j(1-q)^{6-j}$, where
\[
 (p_0,\ldots,p_6)=(1024,2368/3,3328/5,3076/5,9272/15,1331/2,756).
\]
All coefficients are strictly positive, so $K_q>0$.

Take $d_q=K_q/(2q^{3/2})$ and define
\[
 f_\epsilon=\varphi(1+\epsilon u+\epsilon^2d_qv),\qquad
 \mathcal S_qf_\epsilon=\varphi(1+\epsilon u_q+\epsilon^2d_qv_q).
\]
Lemma~\ref{lem:moments} gives $\int x^ju\varphi=\int x^jv\varphi=0$ for $0\le j\le2$, so both densities have exactly unit mass, mean zero, and variance one.
Indeed, self-adjointness of $T_{\sqrt q}$ and the binomial formula give, for $W=u,v$,
\[
 \int x^jT_{\sqrt q}W(x)\varphi(x)\,dx
 =\sum_{\ell=0}^j\binom j\ell q^{\ell/2}(1-q)^{(j-\ell)/2}
 \mathbb EG^{j-\ell}\int x^\ell W(x)\varphi(x)\,dx.
\]
Thus $\int x^ju_q\varphi=0$ for $0\le j\le3$, and
$\int x^3v_q\varphi=3q^{3/2}/2$. The third moment of the smoothed density is
\[
 \mu_3(\mathcal S_qf_\epsilon)=\tfrac32d_qq^{3/2}\epsilon^2=\tfrac34K_q\epsilon^2>0.
\]
Identity~\eqref{eq:score-exact} remains valid for this perturbation. Take
\[
 M_q=\max\{1,\|u\|_{C_b^2},d_q\|v\|_{C_b^2},
                 \|u_q\|_{C_b^2},d_q\|v_q\|_{C_b^2}\}.
\]
Gaussian smoothing preserves the vanishing moments. Applying Lemma~\ref{lem:general-score-remainder} with $U=u_q$ and $V=d_qv_q$ gives
\[
 |J_3(\mathcal S_qf_\epsilon)+\tfrac32K_q\epsilon^2|\le62M_q^3\epsilon^3.
\]
Hence $J_3(\mathcal S_qf_\epsilon)<0$ whenever $0<\epsilon<\min\{(8M_q)^{-1},K_q/(100M_q^3)\}$.
The same amplitude restriction makes the uniform norms of both relative perturbations and their second derivatives less than $1/4$.
If $r$ denotes either relative perturbation, then
\[
 (\log(\varphi(1+r)))''
 \le-1+\frac{\|r''\|_\infty}{1-\|r\|_\infty}<-2/3.
\]
Proposition~\ref{prop:endpoint} now gives strictly positive curvature near the endpoint.
Let $\epsilon$ run through any sequence tending to zero. Uniform convergence of each fixed-order derivative to zero, before and after smoothing, follows from the representation as a linear combination of fixed Schwartz functions.
\end{proof}

\section{Perturbations preserving finitely many Gaussian moments}\label{sec:construction-main}
For a symmetric density, both the third central moment and the third moment of the logarithmic derivative vanish, so the preceding endpoint criterion no longer gives a counterexample.
We now construct symmetric densities whose midpoint entropy is strictly smaller.
The first step is to satisfy positivity, strong log-concavity, and finite moment constraints simultaneously: for any fixed $K$, a finite-dimensional correction of the following high-degree Hermite perturbations matches the Gaussian moments through order $2K$ exactly and approaches the Gaussian in every fixed-order relative norm.
The moment correction decays exponentially with the degree; its effect will be controlled in the proof of the strict entropy gap.

Fix constants independent of $q$,
\[
 a=\frac7{3\sqrt5},\qquad b=\frac{a-1}{a+1}=\frac{47-21\sqrt5}{2}>0,
\]
and let $n$ tend to infinity through even integers, with
\[
 h_n(x)=\sqrt2 e^{-ax^2/2}\frac{H_n(\sqrt{2a}x)}{\sqrt{n!}},
 \qquad \epsilon_n=e^{-\sqrt n}.
\]
By Lemma~\ref{lem:hermite-bounds}, $\|h_n\|_\infty\le B_n=\sqrt{2e}(n+1)^{1/4}$, and the norm of each fixed-order derivative grows at most polynomially in $n$. In particular,
\[
 \|h_n''\|_\infty\le a(2n+2)B_{n+2}.
\]
We choose the moment correction in a finite-dimensional space of low-degree Hermite functions.

Let $\psi_j(x)=\sqrt2e^{-x^2/2}H_{2j}(\sqrt2x)/\sqrt{(2j)!}$, $0\le j\le K$, and choose coefficients $c_{n,j}$ such that
\[
 \widetilde h_n=h_n-\sum_{j=0}^K c_{n,j}\psi_j,
 \qquad \int x^{2\ell}\widetilde h_n(x)\varphi(x)dx=0
 \quad(0\le\ell\le K).
\]
Set $A_{\ell j}=\int x^{2\ell}\psi_j(x)\varphi(x)\,dx$.
Rodrigues' formula $H_j\varphi=(-1)^j\varphi^{(j)}$ and $2j$ integrations by parts give
\[
 A_{\ell j}=\frac{2^{-\ell}}{\sqrt{(2j)!}}
 \int y^{2\ell}H_{2j}(y)\varphi(y)\,dy.
\]
This vanishes if $\ell<j$ and equals $2^{-j}\sqrt{(2j)!}$ if $\ell=j$.
Every boundary term is the limit at infinity of a polynomial times a Gaussian function, and hence vanishes.
Thus $A$ is an invertible lower triangular matrix, and the equation $Ac_n=(\int x^{2\ell}h_n\varphi)_{\ell=0}^K$ uniquely determines the correction coefficients.

These coefficients decay exponentially in $n$. Integrating the Hermite generating function against the Gaussian gives
\[
 \int \varphi(x)e^{-ax^2/2}e^{\sqrt{2a}xz-z^2/2}dx
 =\frac1{\sqrt{1+a}}e^{bz^2/2}.
\]
After inserting $x^{2\ell}$, completing the square writes the right-hand side as
\[
 \frac{e^{bz^2/2}}{\sqrt{1+a}}\,
 \mathbb E\left(\frac{G}{\sqrt{1+a}}+\frac{\sqrt{2a}}{1+a}z\right)^{2\ell}.
\]
The expectation is a fixed polynomial containing only even powers, of degree at most $2\ell$.
For its $z^{2j}$ term, coefficient extraction introduces, relative to the $j=0$ term, the additional factor
$(2/b)^j(n/2)!/(n/2-j)!\le C_jn^j$ for $n\ge2K$. Extracting the coefficient of $z^n$ and using
$\sqrt{n!}/[2^{n/2}(n/2)!]=O(n^{-1/4})$, we obtain
\[
 \left|\int x^{2\ell}h_n\varphi\right|
 \le C_K n^K b^{n/2}\quad(\ell\le K).
\]
Since the inverse matrix is fixed and finite-dimensional, for every fixed $m$ this yields
\begin{equation}\label{eq:scaled-correction-small}
 \|\widetilde h_n-h_n\|_{C_b^m}\le C_{K,m}n^K b^{n/2}.
\end{equation}
Define $f_n=\varphi(1+\epsilon_n\widetilde h_n)$.
For each fixed $m$, the preceding estimates give $\|\widetilde h_n\|_{C_b^m}\le C_{K,m}n^{c_{K,m}}$, with exponent $c_{K,m}$ independent of $n$.
Since $-\sqrt n+c_{K,m}\log n\to-\infty$, we have $\epsilon_n\|\widetilde h_n\|_{C_b^m}\to0$.
Thus $1+\epsilon_n\widetilde h_n>1/2$ eventually, and uniform convergence of the relative density and all its fixed-order derivatives follows.
The zeroth-moment correction gives unit mass, the second-moment correction gives variance one, and evenness makes every odd moment vanish.
Logarithmic differentiation gives
\[
 (\log f_n)''\le-1+
 \frac{\epsilon_n\|\widetilde h_n''\|_\infty}
 {1-\epsilon_n\|\widetilde h_n\|_\infty}<-1/2
\]
for sufficiently large $n$. Now set $Z=\sqrt qX_n+\sqrt{1-q}G$. For $j\le2K$, independence gives
\[
 \mathbb EZ^j=\sum_{\ell=0}^j\binom j\ell q^{\ell/2}(1-q)^{(j-\ell)/2}
     \mathbb EX_n^\ell\,\mathbb EG^{j-\ell}.
\]
Each moment of $X_n$ can be replaced by that of an independent standard Gaussian, so the right-hand side is exactly the $j$th standard Gaussian moment.
At $q=0,1$, this identity is interpreted by continuity of the binomial expression.
Set $u_{q,n}=\epsilon_nT_{\sqrt q}\widetilde h_n$, so the smoothed density is $\varphi(1+u_{q,n})$.
Derivative contraction of the operator gives $\|u_{q,n}\|_{C_b^m}\le\epsilon_n\|\widetilde h_n\|_{C_b^m}\to0$, and
\[
 (\log\mathcal S_qf_n)''\le-1+
 \frac{\epsilon_n\|\widetilde h_n''\|_\infty}{1-\epsilon_n\|\widetilde h_n\|_\infty}<-1/2.
\]
These conclusions hold uniformly for $q\in[0,1]$. Moreover, the identity
$(\log(\varphi(1+u)))''+1=u''/(1+u)-(u')^2/(1+u)^2$ also gives
\[
 \sup_{0\le q\le1}\|(\log\mathcal S_qf_n)''+1\|_\infty\longrightarrow0.
\]
Evenness is preserved because $n$ is even and the Gaussian noise is symmetric.
Henceforth the degree is an even integer sufficiently large for the preceding positivity and curvature bounds to hold. Discarding finitely many initial terms and reindexing gives the sequence in the main theorem; subsequent estimates retain the original even degree as their index.

\section{Weighted convolution and the linear term}\label{sec:linear}
Having constructed densities satisfying the moment conditions, we determine how far their weighted convolutions depart from the Gaussian.
We decompose the output relative density exactly into terms linear and quadratic in the perturbation amplitude, and prove an exponential upper bound for the norm of the linear term.
The bound is given in~\eqref{eq:scaled-L-bound}; comparison with the midpoint lower bound for the quadratic term in the next section determines when the linear term is negligible.
Since the two smoothed copies are independent, their combined Gaussian noise has variance $1-q$. The full weighted convolution can therefore be represented by a conditional expectation of three independent standard Gaussians.

All norms in this section are in $L^2(\varphi)$. Temporarily use the uncorrected $h_n$ and write
\[
 L_t=T_{\sqrt{qt}}h_n+T_{\sqrt{q(1-t)}}h_n,
\]
Let $M_t(x)$ be the conditional expectation of $h_n(G_1)h_n(G_2)$ under three independent standard Gaussians, given
$\sqrt{qt}G_1+\sqrt{q(1-t)}G_2+\sqrt{1-q}G_3=x$. More explicitly, condition on the weighted sum being $x$ and expand
$(1+\epsilon_n\widetilde h_n(G_1))(1+\epsilon_n\widetilde h_n(G_2))$
The two linear conditional expectations are
$T_{\sqrt{qt}}\widetilde h_n(x)$ and $T_{\sqrt{q(1-t)}}\widetilde h_n(x)$.
A tilde denotes the corresponding conditional expectation with $h_n$ replaced by $\widetilde h_n$. The density of the weighted sum is then
\[
 p_{q,n,t}=\varphi(1+\epsilon_n\widetilde L_t+\epsilon_n^2\widetilde M_t).
\]
Define
\[
 \tau(r)=\frac{1-a+ar}{1+a-ar},\qquad
 \nu(r)=\frac{a-1+ar}{1+a+ar}.
\]
Lemma~\ref{lem:linear-mehler} gives
\begin{equation}\label{eq:scaled-linear-generating}
 \sum_{j\ge0}z^j\|T_{\sqrt r}h_j\|_2^2
 =\frac2{\sqrt{(1+a)^2-a^2r^2}}
 \frac1{\sqrt{(1-\tau(r)z)(1-\nu(r)z)}}.
\end{equation}
For $2/7\le r\le1$, we have $0<\nu(r)\le\tau(r)\le1$; equality $\nu(r)=\tau(r)$ holds precisely when $r=2/7$, and $\tau(r)=1$ precisely when $r=1$.
Indeed,
\[
 \tau(r)-\nu(r)=\frac{2(1-a^2+a^2r^2)}{(1+a-ar)(1+a+ar)},
 \qquad 1-\tau(r)=\frac{2a(1-r)}{1+a-ar}.
\]
Together with $a^2=49/45$ and $a>1$, these identities give the stated signs and equality cases.
Put $A_j=4^{-j}\binom{2j}{j}$. Then $(1-z)^{-1/2}=\sum A_jz^j$, and
$\sum_{j=0}^n A_jA_{n-j}=1$, since the product of the two series is $(1-z)^{-1}$.
Thus the $n$th coefficient of the generating function is at most its constant prefactor times $\tau(r)^n$, namely
\[
 \|T_{\sqrt r}h_n\|_2^2
 \le\frac2{\sqrt{(1+a)^2-a^2r^2}}\tau(r)^n.
\]
If $0\le r_1\le r_2$ and $r_2>0$, then
$T_{\sqrt{r_1}}=T_{\sqrt{r_1/r_2}}T_{\sqrt{r_2}}$, so the norm at the smaller parameter does not exceed the norm at the larger one.
Combining this with the triangle inequality, if $w=\max(t,1-t)$ and $q\ge4/7$, then
\begin{equation}\label{eq:scaled-L-bound}
 \|L_t\|_2^2\le
 \frac8{\sqrt{(1+a)^2-a^2q^2w^2}}\tau(qw)^n.
\end{equation}

\section{Positive coefficient comparison for the quadratic term}\label{sec:quadratic}
To produce the required entropy gap, the quadratic term must have different sizes at the midpoint and at the comparison weight.
Fix $q\in[4/7,1]$. We prove that, for every fixed $t\ne1/2$ sufficiently close to the midpoint, the ratio of the quadratic norm to its midpoint value tends to zero.
The midpoint lower bound and the nonmidpoint upper bound are given in~\eqref{eq:scaled-M-lower} and~\eqref{eq:scaled-M-other}, respectively.
We express the squared norm as a four-variable Gaussian integral and make a complex change of variables so that the generating function has nonnegative coefficients.
At the midpoint we retain one term with equal degree in each variable; away from it we bound the sum of coefficients. This gives different exponential decay rates.

Put $v=(\sqrt t,\sqrt{1-t})^\mathsf T$ and $S=\sqrt q\,v^\mathsf T(G_1,G_2)+\sqrt{1-q}G_3$. Take two conditionally independent copies with the conditional law of $(G_1,G_2)$ given $S$. Each copy has unconditional covariance $I_2$, and their cross covariance is $qvv^\mathsf T$, since the conditional mean is $\sqrt q\,vS$. The four Gaussian variables therefore have covariance
\[
 C_t=\begin{pmatrix}I_2&qvv^\mathsf T\\qvv^\mathsf T&I_2\end{pmatrix},
\]
Its eigenvalues are $1+q,1-q,1,1$, including the degenerate Gaussian case $q=1$. Lemma~\ref{lem:gaussian-coefficient} and the Hermite generating function give the exact identity
\begin{equation}\label{eq:scaled-four-coefficient}
 \|M_t\|_2^2=
 \frac{4(n!)^2}{\Delta}
 [z_1^nz_2^nz_3^nz_4^n]\exp\left(\frac12z^\mathsf T K_tz\right),
 \qquad \Delta=(1+a)\sqrt{(1+a)^2-a^2q^2},
\end{equation}
Here $K_t=2aC_t(I_4+aC_t)^{-1}-I_4$, and $[z_1^n\cdots z_4^n]$ denotes the coefficient of the indicated monomial.
The prefactor follows from normalization: the product of four $h_n$ contributes $4/(n!)^2$, while extracting coefficients from four Hermite generating functions contributes $(n!)^4$, giving $4(n!)^2$.
Also, $\sqrt{\det(I_4+aC_t)}=(1+a)\sqrt{(1+a)^2-a^2q^2}=\Delta$.

The following parameters are determined by the eigenvalues of this matrix; $\kappa$ will be the exponential decay factor of the midpoint quadratic term. Write
\[
 \lambda_\pm=\frac{a(1\pm q)-1}{a(1\pm q)+1},\quad
 \alpha=\frac{\lambda_++\lambda_-}{2}-b<0,\quad
 \beta=\frac{\lambda_+-\lambda_-}{2}>0,\quad \kappa=-\lambda_->0.
\]
Make the substitutions $z_1,z_2\mapsto iz_1,iz_2$ and $z_3,z_4\mapsto-iz_3,-iz_4$. The extracted coefficient is unchanged, since the total phase is $i^{2n}(-i)^{2n}=1$. The resulting real symmetric matrix is
\[
 K'_t=\begin{pmatrix}-bI_2-\alpha vv^\mathsf T&\beta vv^\mathsf T\\
 \beta vv^\mathsf T&-bI_2-\alpha vv^\mathsf T\end{pmatrix}.
\]
At $t=1/2$, the diagonal, within-group off-diagonal, and cross-group entries are, respectively,
\[
 d=-b-\alpha/2,\qquad e=-\alpha/2,\qquad c=\beta/2,
 \qquad d+e+2c=\kappa.
\]
These numbers are strictly positive for every $q\in[4/7,1]$, because
\[
 -\alpha=\frac{2a^2q^2}{(1+a)((1+a)^2-a^2q^2)}
\]
increases with $q$, while at $q=4/7$,
$d=(203\sqrt5-453)/38>0$. Thus all entries of $K'_t$ remain strictly positive in a neighborhood of the midpoint, allowing coefficientwise upper and lower bounds.

We first prove the midpoint lower bound. Put $Q(z)=z^\mathsf T K'_{1/2}z/2$, so $Q(1,1,1,1)=2\kappa$. After division by $(2\kappa)^{2n}$, the coefficients of $Q^{2n}$ describe a multinomial distribution from $2n$ independent choices among ten undirected edges, with loops allowed. Each of the four vertices has expected degree $n$. Choose the count $k_e$ of each within-group edge as a nearest even integer to $ne/\kappa$, and the count $k_c$ of each cross-group edge as a nearest integer to $nc/\kappa$, and set the count of each loop to
\[
 k_d=(n-k_e-2k_c)/2.
\]
The ten category probabilities are $d/(4\kappa)$ for each of four loops, $e/(2\kappa)$ for each of two within-group edges, and $c/(2\kappa)$ for each of four cross-group edges.
They sum to one, and the corresponding expected counts are $nd/(2\kappa)$, $ne/\kappa$, and $nc/\kappa$.
Since $n,k_e$ are even, $k_d$ is an integer, and
\[
 |k_e-ne/\kappa|\le1,\quad |k_c-nc/\kappa|\le1/2,\quad
 |k_d-nd/(2\kappa)|\le1.
\]
These counts are positive for sufficiently large $n$. Their sum is $4k_d+2k_e+4k_c=2n$, and each vertex has degree $2k_d+k_e+2k_c=n$.

To estimate this multinomial probability, put $N=2n$, denote the ten probabilities by $p_i>0$, and write the chosen counts as $N_i=Np_i+\delta_i$, where $|\delta_i|\le1$ and $\sum\delta_i=0$.
Stirling's bounds~\cite{Robbins} give absolute positive constants $C_1,C_2$ such that
$C_1\sqrt{k}(k/e)^k\le k!\le C_2\sqrt{k}(k/e)^k$ for $k\ge1$. Hence
\[
 \frac{N!\prod p_i^{N_i}}{\prod N_i!}
 \ge C N^{1/2}\prod N_i^{-1/2}
 \exp\left\{-\sum_i N_i\log\frac{N_i}{Np_i}\right\}.
\]
Take $n$ large enough that $|\delta_i/(Np_i)|\le1/2$.
Using $(1+u)\log(1+u)=u+O(u^2)$ and $\sum_i\delta_i=0$, the absolute value of the sum in the exponent is at most
$C\sum_i\delta_i^2/(Np_i)=O_q(N^{-1})$.
Since $N_i/N$ is bounded above and below by positive constants, this probability is at least $c_qN^{1/2-10/2}=c_qN^{-9/2}$.
On the stated compact parameter interval, the ten probabilities are uniformly bounded away from zero, so the constants and degree threshold can also be chosen uniformly. Combining
\[
 \frac{(n!)^2 4^n}{(2n)!}\asymp\sqrt n
\]
with~\eqref{eq:scaled-four-coefficient} gives
\begin{equation}\label{eq:scaled-M-lower}
 \|M_{1/2}\|_2^2\ge c_q n^{-4}\kappa^{2n}.
\end{equation}
The selected term in the exponential generating function has coefficient exactly
\[
 \frac{(d/2)^{4k_d}e^{2k_e}c^{4k_c}}
 {(k_d!)^4(k_e!)^2(k_c!)^4}.
\]
Appendix~\ref{sec:finite-certificate} uses this positive term directly.

For a fixed nearby weight $t\ne1/2$, all coefficients remain nonnegative. The required coefficient is therefore at most the sum of coefficients of the same total degree. A direct calculation gives
\[
 Q_t(1,1,1,1)=2\kappa_t,\qquad
 \kappa_t=\kappa-(\kappa+b)\left(\frac12-\sqrt{t(1-t)}\right)<\kappa.
\]
Consequently,
\begin{equation}\label{eq:scaled-M-other}
 \|M_t\|_2^2\le
 \frac4\Delta\frac{(n!)^2}{(2n)!}(2\kappa_t)^{2n}
 \le C_q\sqrt n\,\kappa_t^{2n}.
\end{equation}
It follows that $\|M_t\|_2/\|M_{1/2}\|_2\to0$.

\section{A strict entropy gap}\label{sec:comparison}
The preceding sections estimate the linear and quadratic terms. We must still exclude cancellation and show that moment corrections do not change the sign of the entropy gap.
For $q>4/7$, we prove that the entropy at a fixed nonmidpoint weight is strictly greater than at the midpoint, completing Theorem~\ref{thm:scaled-hermite-counterexamples}.
The estimates also permit a common comparison weight and degree threshold on any compact parameter interval $[q_0,1]\subset(4/7,1]$.
The midpoint quadratic norm is bounded below by a constant times $n^{-2}\kappa^n$, while the linear norm is bounded above by a constant times $\tau(q/2)^{n/2}$.

Thus a sufficient condition for the linear term to be negligible relative to the midpoint quadratic term is
\[
 \kappa^2>\tau(q/2).
\]
Let $u=a(1-q)$, so $\kappa=(1-u)/(1+u)$. Putting the terms over a common denominator gives
\[
 \kappa^2-\tau(q/2)=\frac{2a\{-1+3q/2+a^2(1-q/2)(1-q)^2\}}
 {[1+a(1-q)]^2[1+a-aq/2]}.
\]
The denominator is positive, and the polynomial in the numerator satisfies
\begin{align*}
 -1+\frac32q+a^2(1-q/2)(1-q)^2
 &=-\frac{(7q-4)(7q^2-24q+2)}{90}.
\end{align*}
On $q\in[4/7,1]$, the second quadratic factor is strictly negative. Hence the strict inequality holds precisely for $q>4/7$. At $q=4/7$, $\kappa=(3-\sqrt5)/2$ and $\tau(q/2)=\kappa^2$.

\begin{proof}[Proof of Theorem~\ref{thm:scaled-hermite-counterexamples}]
Fix $q>4/7$. By continuity, choose a fixed $t\ne1/2$ near the midpoint such that every entry of $K'_t$ is positive and $\tau(q\max(t,1-t))<\kappa^2$. Equations~\eqref{eq:scaled-L-bound}--\eqref{eq:scaled-M-other} and $\epsilon_n=e^{-\sqrt n}$ give
\[
 \frac{\|L_t\|_2+\|L_{1/2}\|_2}
 {\epsilon_n\|M_{1/2}\|_2}\to0,
 \qquad \frac{\|M_t\|_2}{\|M_{1/2}\|_2}\to0.
\]
More explicitly, let $\gamma=\sqrt{\tau(q\max(t,1-t))}/\kappa<1$ and $\theta=\kappa_t/\kappa<1$.
The two ratios above are at most $C_qn^2e^{\sqrt n}\gamma^n$ and $C_qn^{9/4}\theta^n$, respectively.
For any fixed $\xi\in(0,1)$ and real $c$,
$\log(n^ce^{\sqrt n}\xi^n)=c\log n+\sqrt n+n\log\xi\to-\infty$, so both ratios tend to zero.

Moment corrections do not change these comparisons. Contraction of conditional expectation and expansion of the difference of products give
\[
 \|\widetilde L_t-L_t\|_2\le2\|\widetilde h_n-h_n\|_\infty,
\]
\[
 \|\widetilde M_t-M_t\|_2
 \le2B_n\|\widetilde h_n-h_n\|_\infty
       +\|\widetilde h_n-h_n\|_\infty^2.
\]
Here $\sqrt b<\kappa$ throughout $q\in[4/7,1]$: $\kappa$ is increasing, $b<1/40$, and $\kappa(4/7)>3/8$. Thus all these errors from~\eqref{eq:scaled-correction-small}, divided by $\epsilon_n\|M_{1/2}\|_2$, still tend to zero.

With $q,n$ fixed, abbreviate $p_t=p_{q,n,t}$ and put $r_t=p_t/\varphi-1$. The conditional expectation representation also gives
\[
 \sup_t\|r_t\|_\infty\le
 \eta_n=2\epsilon_n\|\widetilde h_n\|_\infty+
 \epsilon_n^2\|\widetilde h_n\|_\infty^2\to0.
\]
At the midpoint, the reverse triangle inequality yields
\[
 \left|\|r_{1/2}\|_2-\epsilon_n^2\|M_{1/2}\|_2\right|
 \le\epsilon_n\|\widetilde L_{1/2}\|_2
       +\epsilon_n^2\|\widetilde M_{1/2}-M_{1/2}\|_2.
\]
The right-hand side divided by $\epsilon_n^2\|M_{1/2}\|_2$ tends to zero. Applying the triangle inequality at the comparison weight gives
\[
 \|r_{1/2}\|_2\sim\epsilon_n^2\|M_{1/2}\|_2,
 \qquad \|r_t\|_2=o(\epsilon_n^2\|M_{1/2}\|_2).
\]
The integral Taylor bounds in Lemma~\ref{lem:entropy-quadratic},
\[
 \frac{\|r\|_2^2}{2(1+\eta_n)}
 \le D(\varphi(1+r)\Vert\varphi)
 \le\frac{\|r\|_2^2}{2(1-\eta_n)},
\]
imply $D(p_t\Vert\varphi)/D(p_{1/2}\Vert\varphi)\to0$. The lower bound on the midpoint norm makes the denominator strictly positive, so eventually
\[
 D(p_t\Vert\varphi)<D(p_{1/2}\Vert\varphi).
\]
Every weighted sum has variance exactly one. Therefore
\[
 F_{\mathcal S_qf_n}(t)-F_{\mathcal S_qf_n}(1/2)
 =D(p_{1/2}\Vert\varphi)-D(p_t\Vert\varphi)>0.
\]
Exchanging the two independent copies gives $F(t)=F(1-t)$, contradicting Jensen's inequality for concavity between $t$ and $1-t$.

Finally, fix $q_0\in(4/7,1]$. The continuous function $\kappa(q)^2-\tau(q/2)$ has a positive minimum on $[q_0,1]$.
Uniform continuity permits a common $t\ne1/2$ such that all entries of $K'_t$ are positive throughout the interval and
$\sup_{q\in[q_0,1]}\sqrt{\tau(q\max(t,1-t))}/\kappa(q)<1$.
Moreover, $\kappa_t(q)/\kappa(q)<1$ and $\sqrt b/\kappa(q)<1$ have uniform upper bounds strictly below one on the same compact interval.
The polynomial factors, moment-correction constants, and constants in the positive coefficient lower bound are uniformly controlled.
All ratios therefore converge to zero uniformly, giving a common degree threshold.
For fixed $q$, the same continuity argument permits every fixed $t$ in a punctured neighborhood of the midpoint; the degree threshold may still depend on that value.
\end{proof}

\section{Consequences of the main theorems}\label{sec:consequences}
The weighted-sum entropy of the Gaussian is constant, whereas the counterexamples in Theorem~\ref{thm:scaled-hermite-counterexamples} approach it in relative norms of every order. To express this boundary property, fix $K\ge1$ and let $\mathcal A_K$ consist of densities satisfying:
$f$ is strictly positive, even, and smooth, $(\log f)''\le-1/2$, $f/\varphi-1\in C_b^\infty$,
$\inf_x f(x)/\varphi(x)>0$, and the moments of $f$ through order $2K$ agree with the standard Gaussian moments.
Here $C_b^\infty=\bigcap_{m\ge0}C_b^m$. Give $\mathcal A_K$ the relative topology determined by the seminorms
$\|f/\varphi-g/\varphi\|_{C_b^m}$.

For fixed $q$, let
\[
 \mathcal C_{q,K}=\{f\in\mathcal A_K:F_{\mathcal S_qf}\text{ is concave on }[0,1]\},
\]
\[
 \mathcal M_{q,K}=\{f\in\mathcal A_K:F_{\mathcal S_qf}(t)\le F_{\mathcal S_qf}(1/2)
                         \text{ for all }t\in[0,1]\}.
\]
All these entropies are finite: positive upper and lower bounds on the relative density are preserved by smoothing and conditional expectation, so Gaussian tails control the entropy integrals.

\begin{corollary}\label{cor:gaussian-boundary}
For every $K\ge1$ and every $4/7<q\le1$, the standard Gaussian density $\varphi$ belongs to the relative boundaries of both
$\mathcal C_{q,K}$ and $\mathcal M_{q,K}$.
\end{corollary}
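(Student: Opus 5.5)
To show that $\varphi$ lies in the relative boundary of a set $\mathcal B\subset\mathcal A_K$, I will check two things. First, $\varphi$ belongs to the relative closure of $\mathcal B$. Second, every relative neighbourhood of $\varphi$ in $\mathcal A_K$ contains points outside $\mathcal B$. Since $\varphi$ itself lies in both sets, the first part is immediate, and all the work is in the second.

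\textbf{Membership of $\varphi$.} The density $\varphi$ is strictly positive, even and smooth, and $(\log\varphi)''=-1\le-1/2$. Moreover $\varphi/\varphi-1=0\in C_b^\infty$, $\inf\varphi/\varphi=1>0$, and all moments agree trivially, so $\varphi\in\mathcal A_K$. Its smoothing satisfies $\mathcal S_q\varphi=\varphi$, and every weighted sum of two independent copies is again standard Gaussian. Hence $F_{\mathcal S_q\varphi}\equiv\frac12\log(2\pi e)$ is constant. A constant function is concave and attains its maximum at $t=1/2$, so $\varphi\in\mathcal C_{q,K}\cap\mathcal M_{q,K}$. In particular $\varphi$ lies in the closure of each set.

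\textbf{Approximation from outside.} Take the sequence $(f_n)$ of Theorem~\ref{thm:scaled-hermite-counterexamples} for the given $K$; it is the same sequence for every $q\in(4/7,1]$. I will verify that $f_n\in\mathcal A_K$ for all large $n$.
\begin{enumerate}
\item[(a)] Positivity, evenness and smoothness hold by construction.
\item[(b)] Part (ii) gives $(\log f_n)''<-1/2$ eventually.
\item[(c)] $f_n/\varphi-1=\epsilon_n\widetilde h_n$ is a finite combination of Hermite functions, so each of its derivatives is bounded and it lies in $C_b^\infty$.
\item[(d)] Eventually $1+\epsilon_n\widetilde h_n>1/2$, which gives $\inf f_n/\varphi\ge1/2>0$.
\item[(e)] Part (iii) gives the moment matching through order $2K$.
\end{enumerate}
Part (ii) also gives $\|f_n/\varphi-1\|_{C_b^m}\to0$ for every fixed $m$, so $f_n\to\varphi$ in the topology defined by these seminorms. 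Because the seminorm family is nested in $m$, every relative neighbourhood of $\varphi$ contains all $f_n$ with $n$ large.

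\textbf{Conclusion.} Part (i) gives $F_{\mathcal S_qf_n}(t_q)>F_{\mathcal S_qf_n}(1/2)$ for $n\ge N_q$, so $f_n\notin\mathcal M_{q,K}$. It remains to see that $f_n\notin\mathcal C_{q,K}$. Exchanging the two independent copies shows $F(t)=F(1-t)$. If $F$ were concave, then
\[
F(1/2)\ge\tfrac12\{F(t_q)+F(1-t_q)\}=F(t_q),
\]
contradicting part (i). Hence $f_n$ lies in neither set, and $\varphi$ belongs to both relative boundaries.

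The main subtlety is only bookkeeping: confirming that $f_n$ genuinely lies in $\mathcal A_K$, in particular the condition $f/\varphi-1\in C_b^\infty$, which is all orders simultaneously rather than one order at a time. This holds because $\widetilde h_n$ is a fixed finite Hermite combination for each $n$.
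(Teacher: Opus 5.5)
Your proof is correct and follows essentially the same route as the paper. In both, $\varphi\in\mathcal C_{q,K}\subset\mathcal M_{q,K}$ because $F_{\mathcal S_q\varphi}$ is constant, and the sequence from Theorem~\ref{thm:scaled-hermite-counterexamples} eventually lies in $\mathcal A_K\setminus\mathcal M_{q,K}$ while converging to $\varphi$ in every seminorm. Your version additionally spells out two points the paper leaves implicit: the membership $f_n\in\mathcal A_K$ (including $f_n/\varphi-1\in C_b^\infty$), and a derivation of $\mathcal C_{q,K}\subset\mathcal M_{q,K}$ from the symmetry $F(t)=F(1-t)$.
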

\begin{proof}
Gaussian smoothing and weighted addition with unit squared sum preserve the standard Gaussian distribution, so $F_{\mathcal S_q\varphi}$ is constant.
Thus $\varphi\in\mathcal C_{q,K}\subset\mathcal M_{q,K}$.
The densities supplied by the main theorem eventually belong to $\mathcal A_K\setminus\mathcal M_{q,K}$ and converge to $\varphi$ in every fixed-order relative norm.
Every relative neighborhood of $\varphi$ therefore meets the complements of both sets. Since $\varphi$ itself belongs to both, it is a boundary point of each.
\end{proof}

Thus, in the specified relative topology, the Gaussian density is not an interior point of the set where concavity holds.
The limiting Gaussian entropy curve is constant and satisfies the concavity inequality with equality; the strict entropy gaps along the approximating sequence tend to zero with the perturbation amplitude.
Indeed, uniform convergence of the relative densities and Lemma~\ref{lem:entropy-quadratic} imply uniform convergence to zero of the relative entropies of the weighted sums, and hence of the counterexample gaps.
The matching order $2K$ is fixed before the construction, and the approximating sequence may depend on $K$.

\begin{corollary}\label{cor:necessary-noise}
Suppose a fixed $q\in[0,1]$ has the property that, for every symmetric, smooth, strongly log-concave density $f$ of variance one,
$F_{\mathcal S_qf}$ is concave. Then $q\le4/7$.
Consequently, if smoothing is written as $X+\sigma G$, a necessary condition for concavity uniformly over all such signals is
$\sigma^2\ge\frac34\operatorname{Var}(X)$.
\end{corollary}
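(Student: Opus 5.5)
The plan is a contrapositive argument built on Theorem~\ref{thm:scaled-hermite-counterexamples}. Suppose $q>4/7$. Fix any $K\ge1$ and take the sequence $(f_n)$ supplied by that theorem. By part (ii), for all sufficiently large $n$ each $f_n$ is symmetric, smooth and strictly positive. It also has variance one, by part (iii) or by construction, and it satisfies $(\log f_n)''<-1/2$, so it is strongly log-concave. Each such $f_n$ therefore belongs to the class over which concavity is assumed. Part (i) gives $t_q\in(0,1/2)$ and $N_q$ with $F_{\mathcal S_qf_n}(t_q)>F_{\mathcal S_qf_n}(1/2)$ for $n\ge N_q$.

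Next I would combine this with the symmetry $F(t)=F(1-t)$, which holds because the two copies can be exchanged. This yields
\[
 \tfrac12\{F_{\mathcal S_qf_n}(t_q)+F_{\mathcal S_qf_n}(1-t_q)\}>F_{\mathcal S_qf_n}(1/2),
\]
which contradicts the midpoint inequality~\eqref{eq:necessary-midpoint} that every concave function satisfies. Hence the hypothesis can hold only if $q\le4/7$. The case $q=1$ needs no separate treatment, since the theorem already covers it. The only point I would double-check is membership in the class: the hypothesis requires just smoothness, symmetry, strong log-concavity and unit variance, all of which the theorem's $f_n$ satisfy once $n$ is large. The finite-entropy requirement is automatic because the relative densities are bounded above and below.

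For the reformulation, I would normalize the signal. Given $X$ with variance $v$ and $\sigma>0$, write
\[
 X+\sigma G=\sqrt{v+\sigma^2}\,\bigl(\sqrt q\,\widehat X+\sqrt{1-q}\,G\bigr),\qquad \widehat X=X/\sqrt v,\quad q=\frac{v}{v+\sigma^2}.
\]
Scaling by a constant shifts every entropy $h(\sqrt tX_1+\sqrt{1-t}X_2)$ by the same $\log\sqrt{v+\sigma^2}$. Scaling also preserves symmetry, smoothness and strong log-concavity. So concavity for the unnormalized smoothing with noise ratio $\sigma^2/v$ is equivalent to concavity of $F_{\mathcal S_q\widehat f}$. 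Uniform concavity therefore forces $v/(v+\sigma^2)\le4/7$, which rearranges to $7v\le4v+4\sigma^2$, i.e.\ $\sigma^2\ge\frac34\operatorname{Var}(X)$.

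There is no real obstacle: this is bookkeeping around the main theorem. The one subtlety is interpreting "uniformly over all such signals" as a fixed ratio $\sigma^2/\operatorname{Var}(X)$. That is the case, since the family is closed under scaling.
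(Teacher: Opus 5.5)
Your proposal is correct and follows the same route as the paper. For $q>4/7$, Theorem~\ref{thm:scaled-hermite-counterexamples} supplies admissible densities that violate the midpoint inequality. Rescaling $X+\sigma G$ by $\sqrt{v+\sigma^2}$ then shifts the entropy curve only by a weight-independent constant, so $v/(v+\sigma^2)\le4/7$ gives $\sigma^2\ge\frac34\operatorname{Var}(X)$. You also spell out the class-membership and exchange-symmetry checks that the paper leaves implicit.
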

\begin{proof}
If $q>4/7$, the main theorem directly supplies a density contradicting the assumption.
For a signal of variance $v>0$, center $X+\sigma G$ and divide by $\sqrt{v+\sigma^2}$. The signal variance fraction is then
$q=v/(v+\sigma^2)$. This common scaling adds only a weight-independent constant to the entropy curve and therefore does not affect concavity.
Solving $v/(v+\sigma^2)\le4/7$ gives the stated necessary condition.
\end{proof}

\section{Conclusions}\label{sec:boundary}
We have studied the Ball--Nayar--Tkocz entropy concavity conjecture: if $X,Y$ are independent real random variables with a common log-concave density, is the function
\[
 t\longmapsto h\!\left(\sqrt t\,X+\sqrt{1-t}\,Y\right)
\]
concave on $[0,1]$? The square-root weights keep the variance of the weighted sum fixed. For Gaussian inputs this entropy curve is constant and hence concave. We have further asked whether arbitrary proximity to the Gaussian, or Gaussian smoothing with a fixed proportion, suffices to guarantee the same property.

For an input of mean zero and variance one, let $q$ be the signal variance fraction in $\sqrt q\,X+\sqrt{1-q}\,G$, where $G$ is an independent standard Gaussian. We proved:
\begin{itemize}
\item \textbf{Asymmetric inputs: for every $0<q\le1$ there are counterexamples arbitrarily close to the Gaussian, with strictly convex entropy curves near the endpoint.}
\item \textbf{Symmetric inputs: for every $4/7<q\le1$ there are counterexamples arbitrarily close to the Gaussian for which an unequally weighted sum has strictly greater entropy than the equally weighted sum.}
\end{itemize}
Both families remain strongly log-concave before and after smoothing, with uniform convergence of relative densities and their derivatives of each fixed order. The symmetric family can also match any prescribed finite number of Gaussian moments exactly.
Thus, in these parameter ranges, every relative neighborhood of the Gaussian contains densities violating concavity, even under strong log-concavity and arbitrarily accurate uniform approximation of relative densities and all their fixed-order derivatives.

The Gaussian is consequently not an interior point of the corresponding concavity set; Section~\ref{sec:consequences} establishes this boundary property and the necessary noise lower bound in the symmetric class. Equality at the Gaussian limit is consistent with strict violations along the sequence, since the strict entropy gaps also tend to zero. For symmetric inputs, the range $0<q\le4/7$ remains unresolved: $4/7$ is the boundary of the auxiliary interpolation estimate, not an established critical value for entropy concavity. The remaining question is whether symmetry allows some positive signal fraction to restore concavity, and whether such restoration is possible at least near the Gaussian.

\section{Further questions}\label{sec:open-problems}
The main theorems leave a central question: can symmetry allow Gaussian smoothing with a positive signal fraction to restore concavity?
We have given asymmetric counterexamples for every positive signal fraction, while the symmetric range $0<q\le4/7$ remains unresolved.
We distinguish uniform concavity, local properties near the Gaussian, and midpoint curvature.

\subsection{Is there a positive range of uniform concavity for symmetric inputs?}
\begin{problem}\label{prob:uniform-noise}
Determine all $q\in(0,4/7]$ for which $F_{\mathcal S_qf}$ is concave for every symmetric log-concave density $f$ of variance one.
In particular, does any $q>0$ have this property?
\end{problem}
If such a positive signal fraction exists, symmetry provides a uniform concavity guarantee unavailable in the general log-concave class.
If none exists, symmetric counterexamples persist regardless of the amount of Gaussian noise, provided a positive signal fraction remains.
At $q=0$ the output is always Gaussian, so the question is whether this purely Gaussian case extends to any positive signal fraction.

The value $4/7$ arises here from the exact parameter boundary of an auxiliary interpolation inequality.
For $0<q\le4/7$, this inequality prevents the midpoint quadratic term from dominating the linear term for perturbations satisfying the corresponding small-norm condition.
It neither determines the sign of the second entropy derivative nor excludes other constructions.
Turning this limitation of the method into a concavity theorem, or constructing counterexamples beyond it, are therefore two distinct approaches to the question.

\subsection{Can counterexamples still be arbitrarily close to the Gaussian?}
Failure of uniform concavity does not imply that counterexamples occur near the Gaussian.
Our symmetric counterexamples admit relative approximation of every order and arbitrary finite moment matching. Whether these properties can be retained simultaneously in the remaining parameter range is a stronger question.
\begin{problem}\label{prob:local-gaussian}
Fix $q\in(0,4/7]$ and an integer $K\ge1$. Do there exist $f_n\in\mathcal A_K$ such that
\[
 \|f_n/\varphi-1\|_{C_b^m}\longrightarrow0
 \qquad\text{for every fixed }m\ge0,
\]
and $F_{\mathcal S_qf_n}$ fails to be concave for every $n$?
\end{problem}
Here $\mathcal A_K$ is the symmetric strongly log-concave class defined in Section~\ref{sec:consequences}, whose members match the Gaussian moments through order $2K$ exactly.
The question asks whether the Gaussian can still be approximated by counterexamples under these constraints, or whether some positive signal fractions guarantee concavity in a relative neighborhood of the Gaussian.
Such local concavity could hold even if the answer to Problem~\ref{prob:uniform-noise} is negative.
Distinguishing counterexamples in the entire class from counterexamples near the Gaussian would therefore clarify which properties Gaussian smoothing can restore.

\subsection{Can the midpoint entropy comparison be strengthened to strictly positive curvature?}
For every $q>4/7$, the main argument makes the midpoint entropy smaller than the entropy of some unequally weighted sum.
Appendix~\ref{app:midpoint-curvature} proves the stronger statement $F''(1/2)>0$ only for $q>2-\sqrt2$.
An interval between these parameter ranges is still uncovered.
\begin{problem}\label{prob:midpoint-curvature}
For every $q\in(4/7,2-\sqrt2]$, does there exist a sequence of strictly positive, smooth, symmetric, strongly log-concave densities $f_n$ with mean zero and variance one,
converging to the standard Gaussian in each fixed-order relative $C_b^m$ norm, such that, for all sufficiently large $n$,
\[
 F_{\mathcal S_qf_n}''(1/2)>0
 \,?
\]
\end{problem}
Since the weighted-sum entropy is symmetric about $t=1/2$, strictly positive curvature would make equal weighting a strict local minimum.
This is finer than a two-point entropy comparison and does not follow directly from asymptotic entropy gaps at fixed weights.
It requires control of the differentiated error or a new construction allowing exact evaluation of midpoint curvature.

These questions ask, respectively, whether Gaussian smoothing restores concavity throughout the symmetric class, whether it restores concavity only near the Gaussian, and what local shape counterexamples have near equal weighting.
An answer to any of them would clarify the roles of symmetry and Gaussian smoothing in entropy concavity.

\clearpage
\appendix
\section*{Appendices}
The appendices address concrete questions arising from the two main theorems. Appendices~\ref{sec:endpoint}--\ref{sec:asymmetric} first explain why asymmetric counterexamples can be detected by the opposite signs of two moments, and supply the expansions and estimates used in the main text.
We then turn to symmetric constructions: Appendix~\ref{sec:main-family} gives explicit counterexamples based on a single cosine, while Appendix~\ref{sec:periodic-resonance} extends this phenomenon to general smooth even periodic profiles and computes a scaling limit near the midpoint.
This raises two further questions: is the midpoint a strict local minimum, and why does the method of the main text encounter the bound $4/7$? Appendices~\ref{app:midpoint-curvature} and~\ref{sec:interpolation} answer these questions through curvature calculations and sharp interpolation estimates, respectively.
Finally, Appendix~\ref{sec:finite-certificate} specifies finite parameter values and a strictly positive lower bound for the entropy gap, turning asymptotic existence into an explicit example.
\section{Endpoint expansions for asymmetric densities}\label{sec:endpoint}
The asymmetric construction requires a local criterion: which densities make the entropy of the weighted sum strictly convex near the endpoint?
We prove that the leading term of the endpoint curvature is determined by the product of the third central moment and the third moment of the logarithmic derivative, namely
$F''(t)=-\mu_3(f)J_3(f)/(16\sqrt t)+O(1)$.
Thus $\mu_3(f)J_3(f)<0$ suffices to contradict concavity, reducing the construction to adjusting the signs of two integrals.
The proof establishes four continuous derivatives of the entropy in the parameter $s=\sqrt t$ and then controls the remainder after the change of variables.

\begin{proposition}\label{prop:endpoint}
Let $f=\varphi(1+r)$, where $r$ is a Schwartz function, $\|r\|_\infty<1/2$, and $f$ is a probability density of mean zero and variance one. Write
\[
 \mu_3(f)=\int x^3f(x)\,dx,\qquad
 s_f=(\log f)',\qquad J_3(f)=\int f(x)s_f(x)^3\,dx.
\]
If $X,Y$ are independent with density $f$, then $F(t)=h(\sqrt{1-t}X+\sqrt tY)$ is twice differentiable on $0<t<1$, and
\begin{equation}\label{eq:endpoint}
 F''(t)=-\frac{\mu_3(f)J_3(f)}{16\sqrt t}+O(1)\qquad(t\downarrow0).
\end{equation}
Moreover, writing $I(f)=\int fs_f^2$, the function $F$ extends continuously to $[0,1]$, with $F(0)=F(1)=h(X)$, and
\begin{equation}\label{eq:entropy-expansion}
 F(t)=h(X)+\frac{I(f)-1}{2}t-\frac{\mu_3(f)J_3(f)}{12}t^{3/2}+O(t^2).
\end{equation}
The constants in both remainders may depend on the fixed density $f$.
\end{proposition}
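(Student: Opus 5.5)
We work with $s=\sqrt t$ and set $Z_s=\sqrt{1-s^2}\,X+sY$ and $G(s)=h(Z_s)$, so that $F(t)=G(\sqrt t)$. The plan has four steps: show $G$ is $C^4$ on $[0,1)$, compute its Taylor coefficients through order $s^3$, and then convert the $s$-expansion into the $t$-expansions~\eqref{eq:entropy-expansion} and~\eqref{eq:endpoint}.

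\emph{Step 1: regularity.} Write the density of $Z_s$ as $p_s=\varphi(1+R_s)$. Here $R_s$ is the relative perturbation produced by the conditional-expectation representation of Section~\ref{sec:preliminaries}. Since $r$ is Schwartz and $\|r\|_\infty<1/2$, the functions $R_s$ and all their $x$- and $s$-derivatives up to order four are bounded by polynomial weights, uniformly for $s$ in compact subsets of $[0,1)$. We also have $1+R_s\ge 1/2-o(1)$ from the mixture representation. Both $p_s\log p_s$ and its $s$-derivatives are then dominated by a polynomial times a Gaussian. Dominated convergence therefore shows that $G$ is $C^4$ on $[0,1)$, that $F$ is $C^2$ on $(0,1)$, and that $F$ is continuous at $0$ with $F(0)=h(X)$. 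The value $F(1)=h(Y)=h(X)$ follows by symmetry.

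\emph{Step 2: Taylor expansion of $p_s$.} Expand $p_s(x)=\mathbb E\,f_{(s)}(x-sY)$, where $f_{(s)}$ is the density of $\sqrt{1-s^2}X$, in powers of $s$. The scaling $X\mapsto\sqrt{1-s^2}X$ perturbs $f$ by $\tfrac{s^2}{2}(xf)'+O(s^4)$. The translation contributes $-s\,\mathbb E Y\,f'+\tfrac{s^2}{2}\mathbb EY^2 f''-\tfrac{s^3}{6}\mu_3 f'''+O(s^4)$. The linear term vanishes because the mean is zero. The variance is one, and the mixed scaling--translation terms enter only at order $s^4$ or higher. Hence
\[
 p_s=f+\tfrac{s^2}{2}\bigl((xf)'+f''\bigr)-\tfrac{s^3}{6}\mu_3f'''+O(s^4),
\]
with the remainder controlled in weighted norms.

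\emph{Step 3: the entropy coefficients.} Insert this into $h(p_s)=-\int p_s\log p_s$. Because $\int\delta p=0$, the first-order variation is $-\int\delta p\log f$, and the second-order variation already has order $s^4$. Two integrations by parts with $f'=fs_f$ give $-\int((xf)'+f'')\log f=I(f)-1$, which yields the $s^2$ coefficient $(I(f)-1)/2$.

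For the cubic term, write $f''=f(s_f'+s_f^2)$. Then
\[
 \int f''s_f=J_3+\int fs_fs_f'=J_3-\tfrac12J_3=\tfrac12J_3,
 \qquad\text{so}\qquad \int f'''\log f=-\tfrac12 J_3.
\]
The cubic term of $h(p_s)$ is therefore $\tfrac{\mu_3}{6}\int f'''\log f=-\mu_3J_3/12$. This gives
\[
 G(s)=h(X)+\tfrac{I-1}{2}s^2-\tfrac{\mu_3J_3}{12}s^3+\rho(s),\qquad \rho^{(k)}(s)=O(s^{4-k}),\ k\le2,
\]
where the remainder bounds follow from $G\in C^4$ together with Taylor's theorem applied to $G$, $G'$ and $G''$. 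Substituting $s=\sqrt t$ gives~\eqref{eq:entropy-expansion}.

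\emph{Step 4: converting to $t$.} Differentiating $F(t)=G(\sqrt t)$ twice gives
\[
 F''(t)=\frac{G''(s)-G'(s)/s}{4s^2}.
\]
We evaluate this term by term:
\begin{itemize}
\item the $s^2$ term contributes $2a-2a=0$;
\item the $s^3$ term contributes $3cs/(4s^2)=3c/(4\sqrt t)$ with $c=-\mu_3J_3/12$, which is $-\mu_3J_3/(16\sqrt t)$;
\item the remainder contributes $O(s^2)/s^2=O(1)$.
\end{itemize}
This is exactly~\eqref{eq:endpoint}.

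The main obstacle is Step 1 together with the uniform $O(s^4)$ control in Step 2. We need genuine $C^4$ regularity of $G$ at $s=0$ in the variable $s$, not merely a formal expansion, since the $t$-derivatives amplify errors by $s^{-2}$. We would obtain this by writing every $s$-derivative of $p_s$ as an expectation of $x$-derivatives of $f$ multiplied by polynomials in $X,Y$. We would then use the Schwartz bounds on $r$ and the lower bound on $1+R_s$ to control $\partial_s^k(p_s\log p_s)$ by an integrable Gaussian envelope uniformly near $s=0$.
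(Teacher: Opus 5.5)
Your proposal is correct and follows essentially the paper's route. Both arguments establish $C^4$ regularity of $s\mapsto h(\sqrt{1-s^2}X+sY)$ (your $G$, the paper's $Q$) via the rotated conditional-expectation representation and dominated convergence, and use the same integration-by-parts identities to get $G''(0)=I(f)-1$ and $G'''(0)=-\mu_3(f)J_3(f)/2$. They then convert through $F''(s^2)=(sG''(s)-G'(s))/(4s^3)$ with Taylor remainders controlled by $\sup|G^{(4)}|$.

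The differences are cosmetic. You read off the Taylor coefficients by inserting an $s$-expansion of $p_s$ into a second-order expansion of $u\log u$, whereas the paper differentiates the entropy directly. Also, the positive lower bound you need is simply $1+R_s>1/4$ for all $s$ (each factor in the product lies in $(1/2,3/2)$), rather than $1/2-o(1)$.
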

\begin{proof}
Put $a(s)=\sqrt{1-s^2}$ and $Q(s)=h(a(s)X+sY)$ for $|s|<1$, and denote the density by $p_s$. In the two-dimensional integral make the orthogonal change of variables
\[
 x=a(s)z-sw,\qquad y=sz+a(s)w.
\]
Its Jacobian is one and $x^2+y^2=z^2+w^2$. Integrating the product density $f(x)f(y)$ with respect to $w$ gives
\begin{equation}\label{eq:rotated-density}
 p_s(z)=\varphi(z)R_s(z),\quad
 R_s(z)=\mathbb E\big[(1+r(a(s)z-sG))(1+r(sz+a(s)G))\big],
 \quad G\sim N(0,1).
\end{equation}
Each factor lies between $1/2$ and $3/2$, so $1/4<R_s(z)<9/4$. Hence $p_s\log p_s$ is integrable: $|\log R_s|$ is uniformly bounded, and $|\log\varphi(z)|$ grows at most quadratically.

Fix $0<s_0<1$. The first four derivatives of $a$ are bounded on $[-s_0,s_0]$. Differentiating the integrand in~\eqref{eq:rotated-density} $k\le4$ times, each chain-rule term is a product of bounded derivatives of $r$, at most $k$ linear factors in $z,G$, and bounded derivatives of $a$. Its absolute value is therefore at most $C_k(1+|z|+|G|)^k$. All Gaussian moments are finite, so dominated convergence permits continuous differentiation under the expectation and gives
\[
 |\partial_s^kR_s(z)|\le C_k(1+|z|)^k,\qquad 0\le k\le4.
\]
To apply these bounds to entropy, put $\Psi(u)=u\log u$. On $u\in[1/4,9/4]$, the derivatives
\[
 \Psi'(u)=1+\log u,\quad \Psi''(u)=u^{-1},\quad
 \Psi^{(3)}(u)=-u^{-2},\quad \Psi^{(4)}(u)=2u^{-3}
\]
are bounded. Write $p_s\log p_s=\varphi R_s\log\varphi+\varphi\Psi(R_s)$. For the second term, the fourth derivative involves
\begin{align*}
 \partial_s^4\Psi(R_s)
 &=\Psi'(R_s)R_s^{(4)}+4\Psi''(R_s)R_s'R_s^{(3)}
   +3\Psi''(R_s)(R_s'')^2\\
 &\quad+6\Psi^{(3)}(R_s)(R_s')^2R_s''
   +\Psi^{(4)}(R_s)(R_s')^4.
\end{align*}
Superscripts and primes on $R_s$ denote derivatives with respect to $s$, while derivatives of $\Psi$ are with respect to $u$. Each product has total derivative order four and thus grows at most quartically in $z$; the first term gains at most two further powers from $\log\varphi$. Lower derivatives are treated by the same product rule. Hence $\partial_s^k(p_s\log p_s)$ is dominated by
$C\varphi(z)(1+|z|)^6$, which is integrable, for $k\le4$. Another application of dominated convergence gives $Q\in C^4([-s_0,s_0])$. Write
\[
 M_4=\sup_{|s|\le s_0}|Q^{(4)}(s)|<\infty.
\]
Since $s_0$ is arbitrary below one, the chain rule also proves that $F$ is twice differentiable on $(0,1)$.

To compute the derivatives at $s=0$, alternatively write
\[
 p_s(z)=\mathbb E\left[a(s)^{-1}f\left(\frac{z-sY}{a(s)}\right)\right].
\]
Set $b(s)=a(s)^{-1}$ and $w(s)=b(s)(z-sy)$. At zero,
\[
 (b,b',b'',b''')(0)=(1,0,1,0),\qquad
 (w,w',w'',w''')(0)=(z,-y,z,-3y).
\]
The product and chain rules give
\begin{align*}
 \left.\partial_s\{bf(w)\}\right|_0&=-yf'(z),\\
 \left.\partial_s^2\{bf(w)\}\right|_0
 &=\left.b''f+2b'f'w'+b\{f''(w')^2+f'w''\}\right|_0\\
 &=f(z)+y^2f''(z)+zf'(z),\\
 \left.\partial_s^3\{bf(w)\}\right|_0
 &=b'''f+3b''f'w'+3b'\{f''(w')^2+f'w''\}\big|_0\\
 &\quad+b\{f'''(w')^3+3f''w'w''+f'w'''\}\big|_0\\
 &=-y^3f'''(z)-3zyf''(z)-6yf'(z).
\end{align*}
All derivatives of $f$ are bounded. For fixed $z$ and a compact $s$-interval, these derivatives and their continuity are controlled by $C_z(1+|Y|)^4$. All moments of $Y$ are finite since $f\le3\varphi/2$. We may therefore differentiate before taking the expectation in $Y$. Using $\mathbb EY=0$ gives
\[
 \partial_sp_s\big|_0=0,\quad \partial_s^2p_s\big|_0=f+zf'+f'',\quad \partial_s^3p_s\big|_0=-\mu_3(f)f'''(z).
\]
Since $p_s$ is a probability density, $\int p_s=1$. The preceding derivative bounds also apply to $p_s$ itself, so $\int\partial_s^kp_s=0$ for $1\le k\le4$. Differentiating twice first gives
\[
 Q''(s)=-\int(\partial_s^2p_s)(1+\log p_s)
          -\int\frac{(\partial_sp_s)^2}{p_s}.
\]
At $s=0$, the second term vanishes. Since $f+zf'=(zf)'$, integration by parts gives
\[
 \int(f+zf')\log f=-\int zf'=1,\qquad
 \int f''\log f=-\int\frac{(f')^2}{f}=-I(f).
\]
The boundary terms $zf\log f$, $zf$, and $f'\log f$ tend to zero: $f$ and its derivatives have Gaussian decay with at most polynomial factors, and $\log f$ grows at most quadratically. Also $s_f=-z+r'/(1+r)$ grows at most linearly, so $I(f)<\infty$. Consequently $Q''(0)=I(f)-1$.
Differentiating the entropy once more gives
\[
 Q'''(s)=-\int\left\{(\partial_s^3p_s)(1+\log p_s)
 +\frac{3(\partial_sp_s)(\partial_s^2p_s)}{p_s}
 -\frac{(\partial_sp_s)^3}{p_s^2}\right\}.
\]
The preceding domination justifies this identity. Since $\int\partial_s^3p_s=0$, at zero we obtain
\begin{equation}\label{eq:Qthird}
 Q'(0)=0,\qquad Q'''(0)=\mu_3(f)\int f'''\log f.
\end{equation}

To compute $\int f^{(3)}\log f$, use $f'=fs_f$ to get $f''=f(s_f'+s_f^2)$. Integrating by parts on $[-R,R]$ and then letting $R\to\infty$ gives
\[
 \int f'''\log f=-\int f''s_f
 =-\int fs_fs_f'-\int fs_f^3.
\]
Also, $(fs_f^2)'=fs_f^3+2fs_fs_f'$ yields
\[
 \int fs_fs_f'=-\tfrac12\int fs_f^3.
\]
Here $s_f=-x+r'/(1+r)$, $s_f'$ is bounded, and $\log f$ grows at most quadratically; $f$ and its derivatives have Gaussian decay. Thus $f''\log f$ and $fs_f^2$ vanish at both ends, and all integrands are absolutely integrable. Combining the two identities gives
\[
 \int f'''\log f=-J_3(f)/2,\qquad Q'''(0)=-\mu_3(f)J_3(f)/2.
\]

For $s>0$, $F(s^2)=Q(s)$, hence
\[
 F''(s^2)=\frac{sQ''(s)-Q'(s)}{4s^3}.
\]
Taylor's formula gives, respectively,
\begin{align*}
 Q'(s)&=Q''(0)s+\tfrac12Q'''(0)s^2+R_1(s),&|R_1(s)|&\le M_4s^3/6,\\
 Q''(s)&=Q''(0)+Q'''(0)s+R_2(s),&|R_2(s)|&\le M_4s^2/2.
\end{align*}
Upon subtraction, the linear terms cancel, and the cubic remainder satisfies
$|sR_2-R_1|\le2M_4s^3/3$. Therefore
\begin{equation}\label{eq:endpoint-error}
 \left|F''(s^2)-\frac{Q'''(0)}{8s}\right|\le M_4/6,
 \qquad 0<s<s_0.
\end{equation}
Substituting~\eqref{eq:Qthird} and the integral identity proves~\eqref{eq:endpoint}. Applying Taylor's formula to $Q$ itself gives
\[
 Q(s)=Q(0)+\frac{Q''(0)}2s^2+\frac{Q^{(3)}(0)}6s^3+R_0(s),
 \qquad |R_0(s)|\le M_4s^4/24.
\]
Taking $s=\sqrt t$ proves~\eqref{eq:entropy-expansion} and $F(t)\to h(X)$ as $t\downarrow0$. Since $X,Y$ are independent copies, exchanging them preserves the law of the weighted sum, so $F(1-t)=F(t)$. This also gives continuity and the endpoint value as $t\uparrow1$.
\end{proof}

For an even density, $\mu_3(f)=J_3(f)=0$, so the singular cubic contribution vanishes. Symmetric counterexamples therefore require a separate construction.

\section{Moment estimates for asymmetric perturbations}\label{sec:asymmetric}
Appendix~\ref{sec:endpoint} reduces the construction to $\mu_3(f)J_3(f)<0$. We explain how to control these factors independently.
The fourth and fifth Hermite components preserve the first three moments, while their even--odd cross term changes the third moment of the logarithmic derivative.
Adding a third-order component of smaller amplitude then adjusts the third central moment.
We give the required vanishing-moment identities, the exact cross integral, and a general cubic remainder bound.
These estimates make the sign choices in the main argument strict for every fixed $q>0$.
We use the following three probabilists' Hermite polynomials:
\[
 H_3(y)=y^3-3y,\quad H_4(y)=y^4-6y^2+3,\quad
 H_5(y)=y^5-10y^3+15y.
\]

Throughout this appendix, $u,v$ are defined by~\eqref{eq:u}--\eqref{eq:v}.
\subsection{Moment conditions and the third moment of the logarithmic derivative}\label{sec:construction}
The endpoint formula expresses the required sign condition as $\mu_3(f)J_3(f)<0$. To control the factors separately, consider $r=\varepsilon u+\varepsilon^2v$ and $f_\varepsilon=\varphi(1+r)$. After the substitution $y=\sqrt2x$, the fourth and fifth Hermite polynomials in $u$ are orthogonal to every polynomial of degree at most three. Thus $u$ preserves normalization and the first three moments. The third Hermite polynomial in $v$ preserves normalization, mean, and variance, while increasing the third central moment by $3\varepsilon^2/2$.

The cross term between the fourth and fifth components determines the quadratic coefficient of the third logarithmic-derivative moment. Identity~\eqref{eq:score-exact} below shows that this coefficient in $J_3(f_\varepsilon)$ depends on $K=\int x(u')^2\varphi$. If $u$ has a single parity, $(u')^2$ is even and the integral vanishes. For our combination the cross term gives $K=560\sqrt6/27>1$, making the coefficient $3(1-K)$ negative. The following lemmas derive these moments and identities, before controlling the higher-order remainder.

\begin{lemma}\label{lem:moments}
For the functions in~\eqref{eq:u} and~\eqref{eq:v},
\[
 \int x^ju\varphi=0\quad(0\le j\le3),\qquad
 \int x^jv\varphi=0\quad(0\le j\le2),\qquad
 \int x^3v\varphi=3/2.
\]
\end{lemma}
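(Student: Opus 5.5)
The plan is a direct computation: one change of variables turns every integral in the statement into a Gaussian integral of a monomial against a single Hermite polynomial. Hermite orthogonality then gives the vanishing moments, and a single explicit moment gives the constant $3/2$.

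First, combine the exponential factors. Since $e^{-x^2/2}\varphi(x)=(2\pi)^{-1/2}e^{-x^2}$, substituting $y=\sqrt2x$ gives $e^{-x^2/2}\varphi(x)\,dx=\varphi(y)\,dy/\sqrt2$. Hence, for every $j\ge0$ and $k\ge0$,
\[
 \int_{\mathbb R}x^j e^{-x^2/2}H_k(\sqrt2x)\varphi(x)\,dx
 =2^{-(j+1)/2}\int_{\mathbb R}y^jH_k(y)\varphi(y)\,dy .
\]
All integrals converge absolutely, because each integrand is a polynomial times a Gaussian.

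Second, use orthogonality. The probabilists' Hermite polynomial $H_k$ is orthogonal in $L^2(\varphi)$ to every polynomial of degree less than $k$. This follows from Rodrigues' formula $H_k\varphi=(-1)^k\varphi^{(k)}$ and $k$ integrations by parts, exactly as in the computation of $A_{\ell j}$ in Section~\ref{sec:construction-main}; every boundary term is a polynomial times a Gaussian and vanishes at infinity. The same integrations by parts give $\int y^kH_k\varphi=k!$. Applying orthogonality with $k=4,5$ and $j\le3$ gives $\int x^ju\varphi=0$ for $0\le j\le3$. Applying it with $k=3$ and $j\le2$ gives $\int x^jv\varphi=0$ for $0\le j\le 2$.

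Third, compute the remaining moment. For $j=k=3$, we have $\int y^3H_3(y)\varphi(y)\,dy=3!=6$. As a direct check, $\int(y^6-3y^4)\varphi=15-9=6$. Therefore $\int x^3v\varphi=2^{-2}\cdot6=3/2$.

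No step is a real obstacle. The only point needing care is the normalization constant $2^{-(j+1)/2}$ coming from the substitution, which fixes the value $3/2$.
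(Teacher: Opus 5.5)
Your proposal is correct and follows the paper's proof essentially step for step. Both arguments substitute $y=\sqrt2x$ to produce the factor $2^{-(j+1)/2}$, then use Rodrigues' formula and integration by parts to get Hermite orthogonality and $\int y^kH_k\varphi=k!$. Both finish with $2^{-2}\cdot3!=3/2$.
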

\begin{proof}
The probabilists' Hermite polynomials satisfy Rodrigues' identity
$H_k(y)\varphi(y)=(-1)^k\varphi^{(k)}(y)$. Integrating a polynomial $p$ by parts $k$ times, every boundary term vanishes because it is a polynomial times a derivative of a Gaussian density. Hence
\[
 \int p(y)H_k(y)\varphi(y)\,dy=\int p^{(k)}(y)\varphi(y)\,dy.
\]
If $\deg p<k$, the right-hand side is zero; if $p(y)=y^k$, it equals $k!$.

Set $y=\sqrt2x$. Since $\varphi(x)e^{-x^2/2}\,dx=\varphi(y)\,dy/\sqrt2$,
\[
 \int x^je^{-x^2/2}H_k(\sqrt2x)\varphi(x)\,dx
 =2^{-(j+1)/2}\int y^jH_k(y)\varphi(y)\,dy.
\]
Taking $k=4,5$ and $j\le3$ for $u$, and $k=3$ and $j\le2$ for $v$, proves all vanishing moments. Finally, $j=k=3$ gives $2^{-2}3!=3/2$.
\end{proof}

\begin{lemma}\label{lem:score}
Let $r=\varepsilon u+\varepsilon^2v$, assume $\|r\|_\infty<1/2$, and put $f_\varepsilon=\varphi(1+r)$. Then $f_\varepsilon$ is a probability density of mean zero and variance one, and
\begin{align}
 \mu_3(f_\varepsilon)&=3\varepsilon^2/2,\label{eq:mu3}\\
 J_3(f_\varepsilon)&=2\mu_3(f_\varepsilon)-3\int\frac{x(r')^2}{1+r}\varphi
                  +\int\frac{(r')^3}{(1+r)^2}\varphi.\label{eq:score-exact}
\end{align}
Moreover,
\begin{equation}\label{eq:K}
 K:=\int x(u')^2\varphi=\frac{560\sqrt6}{27}.
\end{equation}
\end{lemma}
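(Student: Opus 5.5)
The plan is to verify the four assertions in turn; only the identity~\eqref{eq:score-exact} needs a real idea. First, by Lemma~\ref{lem:moments}, $\int x^j r\varphi=0$ for $j=0,1,2$. Hence $f_\varepsilon=\varphi(1+r)$ has unit mass, mean zero and variance one. It is strictly positive because $\|r\|_\infty<1/2$. For the third moment, the $u$-contribution vanishes and $\int x^3v\varphi=3/2$, so $\mu_3(f_\varepsilon)=\varepsilon^2\cdot 3/2$, which is~\eqref{eq:mu3}.

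For~\eqref{eq:score-exact}, I will write the score as $s_f=(\log\varphi)'+(\log(1+r))'=-x+\rho$, where $\rho=r'/(1+r)$. Expanding the cube gives
\[
 s_f^3=-x^3+3x^2\rho-3x\rho^2+\rho^3 .
\]
I then integrate each term against $f_\varepsilon=\varphi(1+r)$:
\begin{itemize}
\item The first term gives $-\mu_3(f_\varepsilon)$.
\item Since $\rho f_\varepsilon=r'\varphi$, the second term is $3\int x^2r'\varphi$.
\item The third term is exactly $-3\int x(r')^2\varphi/(1+r)$.
\item The fourth term is exactly $\int (r')^3\varphi/(1+r)^2$.
\end{itemize}
The only manipulation is in the second term. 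Integrating by parts, and using $(x^2\varphi)'=(2x-x^3)\varphi$, gives
\[
 3\int x^2r'\varphi=3\int x^3r\varphi-6\int xr\varphi=3\mu_3(f_\varepsilon)-0 .
\]
Here the moment identities $\int x^3 r\varphi=\mu_3(f_\varepsilon)$ and $\int xr\varphi=0$ (mean zero) are used. The boundary terms vanish because $r$ is a Gaussian times a polynomial. Combining the four pieces yields $-\mu_3+3\mu_3=2\mu_3$ plus the two remaining integrals, as claimed. All integrands are absolutely integrable, since $1+r\ge1/2$ and $r'$ is bounded with Gaussian decay.

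For~\eqref{eq:K}, write $u=u_4+u_5$ with $u_k=e^{-x^2/2}H_k(\sqrt2x)$. Each $u_k'$ has a definite parity, so $(u_4')^2$ and $(u_5')^2$ are even and drop out against the odd weight $x\varphi$. What remains is $K=2\int x\,u_4'u_5'\varphi$, a Gaussian integral of an explicit polynomial. I would substitute $y=\sqrt2x$ (or equivalently set $q=1$ in the main-text formula with $w=1/2$). This reduces the computation to a finite sum of Gaussian even moments. As a cross-check, at $q=1$ the formula $K_q=60\sqrt2\,q^{9/2}P(q)/((2-q)^{11/2}(q+2)^{13/2})$ with $P(1)=756$ gives $60\sqrt2\cdot756/3^{13/2}=560\sqrt6/27$, matching~\eqref{eq:K}.

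The main obstacle is purely bookkeeping in this polynomial-times-Gaussian moment sum, where sign and normalization slips are easy. Agreement with the independently derived $K_q$ at $q=1$ serves as the safeguard.
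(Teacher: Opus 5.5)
Your proposal is correct and follows the paper's proof essentially step for step: you use the same expansion of $(1+r)s_{f_\varepsilon}^3$ with $s_{f_\varepsilon}=-x+r'/(1+r)$, the same integration by parts giving $\int x^2r'\varphi=\mu_3(f_\varepsilon)$, and the same parity argument that isolates the cross term in $K$. The one difference is that you leave the final Gaussian-moment sum unevaluated and get $560\sqrt6/27$ by setting $q=1$ in the main-text formula for $K_q$ (there $T_1$ is the identity and $A_4=A_5=1$). That is legitimate and not circular, since the $K_q$ formula is derived without using this lemma; the paper instead expands $2x(E'-xE)(O'-xO)$ explicitly and sums the moments of $N(0,1/3)$ to obtain $\sqrt{2/3}\cdot 560/9$.
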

\begin{proof}
Positivity follows from $1+r>1/2$. Lemma~\ref{lem:moments} and the mean and variance of the standard Gaussian give $\int f_\varepsilon=1$, $\int xf_\varepsilon=0$, and $\int x^2f_\varepsilon=1$; the same lemma gives~\eqref{eq:mu3}.

Using $s_{f_\varepsilon}=-x+r'/(1+r)$, expand $(1+r)s_{f_\varepsilon}^3$ to obtain
\[
 (1+r)s_{f_\varepsilon}^3=-x^3(1+r)+3x^2r'
 -\frac{3x(r')^2}{1+r}+\frac{(r')^3}{(1+r)^2}.
\]
Since $\varphi'=-x\varphi$, integration by parts gives
\[
 \int x^2r'\varphi=-\int r(2x-x^3)\varphi=\mu_3(f_\varepsilon).
\]
The last step uses $\int xr\varphi=0$ and $\int x^3\varphi=0$. The first two integrals therefore sum to $-\mu_3(f_\varepsilon)+3\mu_3(f_\varepsilon)=2\mu_3(f_\varepsilon)$, proving~\eqref{eq:score-exact}.

To compute $K$, write
\[
 u=e^{-x^2/2}(E+\sqrt2O),\quad
 E=4x^4-12x^2+3,\quad O=4x^5-20x^3+15x.
\]
Then
\[
 E'-xE=-4x^5+28x^3-27x,
 \qquad O'-xO=-4x^6+40x^4-75x^2+15.
\]
The first polynomial is odd and the second even. After multiplication by $x$, both square terms are odd and integrate to zero, leaving only the cross term. Direct multiplication gives
\[
 2x(E'-xE)(O'-xO)
 =32x^{12}-544x^{10}+3056x^8-6480x^6+4890x^4-810x^2.
\]
Let $Z\sim N(0,1/3)$. Then
$\int x^{2j}e^{-3x^2/2}\,dx/\sqrt{2\pi}=\mathbb EZ^{2j}/\sqrt3$.
Gaussian integration by parts gives the even-moment recurrence
$\mathbb EG^{2j}=(2j-1)\mathbb EG^{2j-2}$, starting from $\mathbb E1=1$. Thus
$\mathbb EZ^{2j}=(2j-1)!!/3^j$, and hence
\begin{align*}
 K&=\sqrt{2/3}\left(
 32\frac{10395}{729}-544\frac{945}{243}
 +3056\frac{105}{81}-6480\frac{15}{27}
 +4890\frac39-810\frac13\right)\\
 &=\sqrt{2/3}\,\frac{560}{9}
 =\frac{560\sqrt6}{27}.
\end{align*}
\end{proof}

\subsection{A bound for the third moment of the logarithmic derivative}
To determine the sign after Gaussian smoothing, we extend the expansion to any two perturbations satisfying the vanishing-moment conditions and give a uniform cubic remainder bound.
\begin{lemma}\label{lem:general-score-remainder}
Let $U,V$ be real Schwartz functions satisfying
\[
 \int x^jU\varphi=0\quad(0\le j\le3),\qquad
 \int x^jV\varphi=0\quad(0\le j\le2).
\]
Let $M\ge\max\{1,\|U\|_{C_b^1},\|V\|_{C_b^1}\}$, and take
$0<\varepsilon\le1$, $2\varepsilon M\le1/2$.
The density $g=\varphi(1+r)$, with $r=\varepsilon U+\varepsilon^2V$, satisfies
\[
 \left|J_3(g)-\varepsilon^2\{2m_3(V)-3K_U\}\right|
 \le62M^3\varepsilon^3,
\]
where $m_3(V)=\int x^3V\varphi$ and $K_U=\int x(U')^2\varphi$.
\end{lemma}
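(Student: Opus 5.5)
The plan is to start from the exact identity~\eqref{eq:score-exact}. Its derivation in Lemma~\ref{lem:score} used only the vanishing moments $\int x^jr\varphi=0$ for $j\le2$, $\int x^3\varphi=0$, and the bound $1+r>0$. The first step is therefore to check these inputs for $r=\varepsilon U+\varepsilon^2V$. The hypotheses give $\int x^j r\varphi=0$ for $j\le2$ and $\mu_3(g)=\varepsilon^2m_3(V)$, since $U$ has vanishing third moment. Also $\|r\|_\infty\le\varepsilon M+\varepsilon^2M\le2\varepsilon M\le1/2$, so $1+r\ge1/2$. Hence $g$ is a probability density with mean zero and variance one, and
\[
 J_3(g)=2\varepsilon^2m_3(V)-3\int\frac{x(r')^2}{1+r}\varphi+\int\frac{(r')^3}{(1+r)^2}\varphi .
\]
The integrations by parts are justified because $U,V$ are Schwartz. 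Subtracting the target, the error equals
\[
 -3\Bigl[\int\frac{x(r')^2}{1+r}\varphi-\varepsilon^2K_U\Bigr]+\int\frac{(r')^3}{(1+r)^2}\varphi .
\]

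Next I will bound each piece by elementary pointwise estimates together with the Gaussian moment $\int|x|\varphi=\sqrt{2/\pi}\le1$. First, $|r'|\le\varepsilon M+\varepsilon^2M\le 2\varepsilon M$, so the cubic term is at most $(2\varepsilon M)^3\cdot4=32M^3\varepsilon^3$. For the quadratic term, write
\[
 \frac{(r')^2}{1+r}-\varepsilon^2(U')^2=\bigl[(r')^2-\varepsilon^2(U')^2\bigr]-\frac{r(r')^2}{1+r}.
\]
The first bracket is $2\varepsilon^3U'V'+\varepsilon^4(V')^2$, which is at most $3M^2\varepsilon^3$ in absolute value. The second piece is at most $2\cdot2\varepsilon M\cdot4\varepsilon^2M^2=16M^3\varepsilon^3$. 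After multiplying by $|x|$, integrating against $\varphi$ (with $\int|x|\varphi\le1$), and multiplying by $3$, this contributes at most $3(3+16)M^3\varepsilon^3=57M^3\varepsilon^3$, using $M\ge1$.

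The total is then $89M^3\varepsilon^3$, which exceeds $62$. So the main step is bookkeeping the constants more carefully. I would first sharpen the estimates in three places:
\begin{itemize}
\item use $\int|x|\varphi=\sqrt{2/\pi}\approx0.8$ instead of $1$;
\item use $\|r\|_\infty\le\varepsilon M(1+\varepsilon)$ and $1+r\ge1-\|r\|_\infty$;
\item use $\varepsilon\le 1/(4M)$ to control $\varepsilon^4(V')^2\le\varepsilon^3M/4$.
\end{itemize}
With these, the cubic term is about $(2\varepsilon M)^3\cdot 4$ with a better denominator, and the quadratic pieces shrink by the factor $0.8$: roughly $3\cdot0.8\cdot(2.25+16)\approx44$ plus the cubic term. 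If the cubic term still pushes the total past $62$, I would bound $|r'|^3/(1+r)^2$ using $|r'|\le\varepsilon M(1+\varepsilon)\le\tfrac54\varepsilon M$ (valid since $\varepsilon\le1/4$) and $(1+r)^{-2}\le(1-5/16)^{-2}$. Applying the same refinement to the $r(r')^2/(1+r)$ term should bring everything under $62M^3\varepsilon^3$.

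The only delicate part is this constant; the structure of the argument is the exact identity followed by pointwise Taylor-type bounds.
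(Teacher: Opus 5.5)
Your proposal is correct and follows essentially the paper's route: the exact identity~\eqref{eq:score-exact}, pointwise bounds on $r$, $r'$ and $1/(1+r)$, and $\int|x|\varphi\le1$. Your sharpening also closes the constant: bounding only the cubic term by $(5/4)^3(16/11)^2M^3\varepsilon^3=\frac{500}{121}M^3\varepsilon^3$ already gives $57+\frac{500}{121}<62$.

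The one difference is how the quadratic remainder is split. The paper writes $\frac{(r')^2}{1+r}-\varepsilon^2(U')^2=\frac{(r')^2-\varepsilon^2(U')^2}{1+r}+\varepsilon^2(U')^2\bigl(\frac1{1+r}-1\bigr)$, putting the $O(r)$ correction on $\varepsilon^2(U')^2$ rather than on $(r')^2$. With that split, the crude bounds $|r|,|r'|\le2\varepsilon M$ give $(18M^2+12M^3)+32M^3\le62M^3$ with no sharpening.
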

\begin{proof}
The moment conditions give normalization, mean zero, variance one, and $\mu_3(g)=\varepsilon^2m_3(V)$.
Expanding $(-x+r'/(1+r))^3$ and integrating $\int x^2r'\varphi$ by parts, as in~\eqref{eq:score-exact}, gives
\[
 J_3(g)=2\varepsilon^2m_3(V)-3\int\frac{x(r')^2}{1+r}\varphi
             +\int\frac{(r')^3}{(1+r)^2}\varphi.
\]
Schwartz decay makes all boundary terms vanish. The assumptions imply $|r|,|r'|\le2\varepsilon M$, so
\[
 \frac1{1+r}\le2,\quad \frac1{(1+r)^2}\le4,
 \quad\left|\frac1{1+r}-1\right|\le2|r|\le4\varepsilon M.
\]
Also, $r'=\varepsilon U'+\varepsilon^2V'$ gives
\[
 |(r')^2-\varepsilon^2(U')^2|
 \le2\varepsilon^3M^2+\varepsilon^4M^2\le3\varepsilon^3M^2.
\]
Write the difference as
\[
 \frac{(r')^2}{1+r}-\varepsilon^2(U')^2
 =\frac{(r')^2-\varepsilon^2(U')^2}{1+r}
 +\varepsilon^2(U')^2\left(\frac1{1+r}-1\right),
\]
Its absolute value is at most $(6M^2+4M^3)\varepsilon^3$. By Cauchy--Schwarz, $\int|x|\varphi\le(\int x^2\varphi)^{1/2}=1$, so the error in the second term of~\eqref{eq:score-exact} relative to $-3K_U\varepsilon^2$ is at most
$(18M^2+12M^3)\varepsilon^3$. The absolute value of the third term is at most
$4(2\varepsilon M)^3=32M^3\varepsilon^3$. The first term is exactly $2\varepsilon^2m_3(V)$. Adding these bounds and using $M\ge1$ proves the claim.
\end{proof}

For the fixed functions $u,v$ above, $m_3(v)=3/2$, so the lemma gives
$J_3(\varphi(1+\varepsilon u+\varepsilon^2v))
=3(1-K)\varepsilon^2+O(\varepsilon^3)$.
Since $K>1$, sufficiently small positive amplitudes give asymmetric endpoint counterexamples. Section~\ref{sec:asymmetric-noise} realizes the same sign condition after Gaussian smoothing.

\section{An explicit family of cosine counterexamples}\label{sec:main-family}
The Hermite construction accommodates both noise and moment constraints, but without added noise a simpler counterexample is available.
Adding a single cosine mode to the Gaussian density, we prove that for every $k\ge14$ the entropy gap between weights $1/5$ and $1/2$ is strictly greater than $e^{-4k}/256$.
Equal weighting retains a quadratic oscillatory term, whereas the chosen unequal weights exponentially damp all oscillatory terms.
This explicit example also motivates the next appendix: does the phenomenon depend on the particular choice of a cosine waveform?
We do not normalize the variance here, since common scaling does not affect concavity.
\begin{proposition}\label{thm:family}
For every real $k\ge14$, let
\begin{equation}\label{eq:family}
 f_k(x)=\frac{\varphi(x)(1+e^{-k}\cos(kx))}{1+e^{-k-k^2/2}}.
\end{equation}
Then $f_k$ is a strictly positive, even, $C^\infty$ probability density satisfying
$(\log f_k)''<-1/2$, and
\begin{equation}\label{eq:family-gap}
 F_{f_k}(1/5)-F_{f_k}(1/2)>\frac{e^{-4k}}{256}.
\end{equation}
In particular, the midpoint is not a global maximizer of $F_{f_k}$, and $F_{f_k}$ is not concave.
Moreover, for every fixed nonnegative integer $j$,
\begin{equation}\label{eq:family-convergence}
 \left\|\frac{d^j}{dx^j}\left(\frac{f_k}{\varphi}-1\right)\right\|_\infty\longrightarrow0,
 \qquad
 \|(\log f_k)''+1\|_\infty\longrightarrow0
 \quad(k\longrightarrow\infty).
\end{equation}
\end{proposition}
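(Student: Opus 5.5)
The plan is to write everything in terms of relative densities against $\varphi$ and to use the operator $T_\rho$ together with the conditional-expectation decomposition from Section~\ref{sec:linear}. Put $\varepsilon=e^{-k}$, $g(x)=\cos(kx)$ and $Z_k=1+e^{-k-k^2/2}$, so that $f_k=\varphi(1+\varepsilon g)/Z_k$. The normalizing constant comes from $\int\varphi\cos(kx)\,dx=e^{-k^2/2}$.

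\textbf{Elementary properties.} Positivity, evenness and smoothness are immediate. For the derivative bounds, use $\|(\varepsilon g)^{(j)}\|_\infty\le k^je^{-k}\to0$, and note that $Z_k\to1$ super-exponentially; together these give the first limit in~\eqref{eq:family-convergence}. For log-concavity, write
\[
 (\log f_k)''=-1+\frac{\varepsilon g''}{1+\varepsilon g}-\frac{(\varepsilon g')^2}{(1+\varepsilon g)^2},
\]
whose perturbation has absolute value at most $k^2e^{-k}/(1-e^{-k})+k^2e^{-2k}/(1-e^{-k})^2$. This is below $1/2$ for $k\ge14$, and it tends to zero, which gives the second limit.

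\textbf{Explicit weighted-sum density.} With $\rho=\sqrt t$ and $\sigma=\sqrt{1-t}$, the density of $\sqrt tX+\sqrt{1-t}Y$ is
\[
 p_t=\frac{\varphi}{Z_k^2}\Bigl(1+\varepsilon T_\rho g+\varepsilon T_\sigma g+\varepsilon^2M_t\Bigr),
\]
where $M_t(x)=\mathbb E[g(G_1)g(G_2)\mid \rho G_1+\sigma G_2=x]$. The Gaussian characteristic function gives $T_\rho\cos(k\cdot)=e^{-k^2(1-\rho^2)/2}\cos(\rho kx)$. For $M_t$, use $\cos a\cos b=\tfrac12[\cos(a+b)+\cos(a-b)]$. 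Each of $k(G_1\pm G_2)$ is Gaussian; conditioned on $S=x$ it has mean $k(\rho\pm\sigma)x$ and variance $k^2(1-(\rho\pm\sigma)^2)$. Hence
\[
 M_t(x)=\tfrac12\sum_\pm e^{-k^2(1-(\rho\pm\sigma)^2)/2}\cos\bigl(k(\rho\pm\sigma)x\bigr).
\]
At $t=1/2$ we have $\rho+\sigma=\sqrt2$, so one quadratic term $\tfrac12e^{-2k}\cos(\sqrt2kx)$ is completely undamped. Every other oscillatory term carries a factor at most $e^{-k^2/4}$. At $t=1/5$ we have $\rho+\sigma=3/\sqrt5$ and $|\rho-\sigma|=1/\sqrt5$, so every oscillatory term is damped by at least $e^{-k^2/10}$ (from the linear term with $\rho^2=1/5$). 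The constants cancel after multiplying by $Z_k^{-2}$ and subtracting $1$.

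\textbf{Entropy comparison.} Every weighted sum has the same second moment $m=\int x^2f_k$. Since $-\int p\log\varphi=\tfrac12\log(2\pi)+m/2$, we get $h(p_t)=\tfrac12\log(2\pi)+m/2-D(p_t\Vert\varphi)$, and therefore
\[
 F(1/5)-F(1/2)=D(p_{1/2}\Vert\varphi)-D(p_{1/5}\Vert\varphi).
\]
Lemma~\ref{lem:entropy-quadratic} gives two-sided bounds with $\eta\le3e^{-k}$, since the mass condition holds automatically. Write $r_t=p_t/\varphi-1$; it is a finite sum of constants and cosines, with constant part of order $e^{-k^2/2}$.

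For the midpoint, I would lower-bound $\|r_{1/2}\|_2\ge\tfrac12e^{-2k}\|\cos(\sqrt2k\cdot)\|_2-(\text{rest})$. Here $\|\cos(\sqrt2kx)\|_2^2=\tfrac12(1+e^{-4k^2})\ge\tfrac12$. The rest, consisting of the linear terms, the damped quadratic term and the normalization error, is $O(e^{-k-k^2/4})$. This yields $D(p_{1/2}\Vert\varphi)\ge(1-\delta)e^{-4k}/16$ with an explicit small $\delta$. At $t=1/5$, the triangle inequality with the sup-norm contraction gives $\|r_{1/5}\|_2\le 4e^{-k-k^2/10}+O(e^{-k^2/2})$, hence $D(p_{1/5}\Vert\varphi)\le 10e^{-2k-k^2/5}$. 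This is far below $e^{-4k}/32$ once $k^2/5>2k+\log 320$, which holds for $k\ge14$. Combining the two bounds gives~\eqref{eq:family-gap}, with the lower margin $e^{-4k}/256$ leaving ample room. Nonconcavity then follows from $F(1/5)=F(4/5)$ as in~\eqref{eq:necessary-midpoint}.

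\textbf{Main obstacle.} The step I expect to need the most care is keeping every explicit constant valid all the way down to $k=14$: the factor $Z_k^{-2}$, the cross terms in $\|r_{1/2}\|_2$, and the factor $1/(1\pm\eta)$. I would handle this by bounding every subleading term crudely by $e^{-k^2/10}$ and checking the single numerical inequality at $k=14$. That inequality is monotone in $k$ thereafter.
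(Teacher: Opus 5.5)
Your proposal is correct and is essentially the paper's argument: the exact cosine formula for the weighted-sum density, the quadratic relative-entropy bounds, and the contrast between the undamped $\cos(\sqrt2kx)$ mode at $t=1/2$ and the uniform $e^{-k^2/10}$ damping at $t=1/5$. The only difference is that you use the reverse triangle inequality and Lemma~\ref{lem:entropy-quadratic} with $\eta\le3e^{-k}$, where the paper uses nonnegative cosine covariances and the cruder bounds $\tfrac14\chi_t^2\le D_t\le\chi_t^2$.

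One slip to fix: since $\operatorname{Var}(G_1\pm G_2)=2$, the conditional variance of $k(G_1\pm G_2)$ given $S=x$ is $k^2\bigl(2-(\rho\pm\sigma)^2\bigr)=k^2(\rho\mp\sigma)^2$, not $k^2\bigl(1-(\rho\pm\sigma)^2\bigr)$. The damping factors you go on to use are the ones given by this corrected formula.
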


The threshold $14$ is a sufficient condition for the uniform estimates below. The amplitude in~\eqref{eq:family} tends to zero as the frequency increases;
the derivative order in~\eqref{eq:family-convergence} is always fixed.

The proof distinguishes the cross terms under equal and unequal weighting. With equal weights, multiplying two cosine factors produces a frequency
that is not damped by Gaussian integration. At $t=1/5$, every oscillatory term is exponentially damped.
Since the two weighted sums have equal variance, comparing their entropies is equivalent to reversing the comparison of their relative entropies with respect to the standard Gaussian.
We first treat amplitude and frequency as independent parameters and then set the amplitude to $e^{-k}$.

\subsection{Convolution formulas and entropy comparison}\label{sec:family-proof}
Fix $k\ge14$ and $0<\eta\le10^{-3}$, and put
\[
 C=1+\eta e^{-k^2/2},\qquad
 f_{\eta,k}(x)=C^{-1}\varphi(x)(1+\eta\cos(kx)).
\]
The Gaussian integral $\int\varphi(x)e^{iux}\,dx=e^{-u^2/2}$ shows that $C$ is exactly the normalization constant.
Since $1-\eta>0$, $f_{\eta,k}$ is a positive, even, smooth density bounded above and below by positive multiples of $\varphi$.
It therefore has mean zero and finite absolute moments of all orders. The same Gaussian control will ensure finiteness of the entropies and integrals below.

\subsubsection{Weighted convolution}
Let $a=\sqrt{1-t}$ and $b=\sqrt t$ for $t\in[0,1]$. For independent copies
$X,Y$ with density $f_{\eta,k}$, denote the density of $aX+bY$ by $p_t$. Make the orthogonal change of variables with determinant one
\[
 x=az-bw,\qquad y=bz+aw.
\]
Since $x^2+y^2=z^2+w^2$, integration in $w$ gives $p_t(z)=\varphi(z)R_t(z)$, where
\begin{equation}\label{eq:rotation}
 R_t(z)=C^{-2}\E\big[(1+\eta\cos(k(az-bG)))
                         (1+\eta\cos(k(bz+aG)))\big],\quad G\sim N(0,1).
\end{equation}
Formula~\eqref{eq:rotation} also holds at $t=0,1$. Using
$2\cos u\cos v=\cos(u+v)+\cos(u-v)$ and
$\E\cos(c+dG)=e^{-d^2/2}\cos c$ gives the exact formula
\begin{align}
 C^2R_t(z)&=1+\eta e^{-k^2b^2/2}\cos(kaz)
                 +\eta e^{-k^2a^2/2}\cos(kbz)\notag\\
 &\quad+\frac{\eta^2}{2}e^{-k^2(a-b)^2/2}\cos(k(a+b)z)\notag\\
 &\quad+\frac{\eta^2}{2}e^{-k^2(a+b)^2/2}\cos(k(a-b)z).
 \label{eq:convolution}
\end{align}
Equation~\eqref{eq:rotation} also immediately implies
\begin{equation}\label{eq:ratio-bounds}
 0<\frac{(1-\eta)^2}{C^2}\le R_t(z)
 \le\frac{(1+\eta)^2}{C^2}<2.
\end{equation}
These bounds are uniform in $t,z$. Hence $|p_t\log p_t|$ is bounded by $A\varphi(z)(1+z^2)$,
where $A$ depends only on $\eta,k$. Entropy is therefore finite. Since~\eqref{eq:convolution} is continuous in $t$,
dominated convergence gives $F_{f_{\eta,k}}\in C([0,1])$.

\subsubsection{Quadratic entropy bounds}
For positive densities $p,q$, write $D(p\Vert q)=\int p\log(p/q)$. In the present notation, put
\[
 D_t=D(p_t\Vert\varphi)=\int\varphi R_t\log R_t,
 \qquad \chi_t^2=\int\varphi(R_t-1)^2.
\]
Apply the integral Taylor formula from the proof of Lemma~\ref{lem:entropy-quadratic}.
Since $0<R_t<2$, the second derivative of $u\mapsto u\log u-u+1$ is at least $1/2$ along the segment joining $1$ to $R_t$.
On the other hand, $\log u\le u-1$. Together with $\int\varphi(R_t-1)=0$, this yields
\begin{equation}\label{eq:quadratic}
 \tfrac14\chi_t^2\le D_t\le\chi_t^2.
\end{equation}

Let $\sigma^2=\E X^2$. Independence and zero means give
$\E(aX+bY)^2=(a^2+b^2)\sigma^2=\sigma^2$. Thus
\begin{equation}\label{eq:entropy-relative}
 F_{f_{\eta,k}}(t)=\frac12\log(2\pi)+\frac12\sigma^2-D_t.
\end{equation}
In particular, even when $\sigma^2\ne1$, we have
$F_{f_{\eta,k}}(1/5)-F_{f_{\eta,k}}(1/2)=D_{1/2}-D_{1/5}$.

\subsubsection{Equal and unequal weights}
At $t=1/2$, we have $a=b=1/\sqrt2$, so the first quadratic cross term in~\eqref{eq:convolution}
is not exponentially damped. Centering the formula using $\E R_{1/2}(G)=1$ gives
\begin{align*}
 R_{1/2}(G)-1=C^{-2}\big[&2\eta e^{-k^2/4}
       \{\cos(kG/\sqrt2)-e^{-k^2/4}\}\\
       &+\tfrac12\eta^2\{\cos(\sqrt2kG)-e^{-k^2}\}\big].
\end{align*}
For any nonnegative $u,v$, the cosine product identity gives
\begin{align*}
 \operatorname{Cov}(\cos(uG),\cos(vG))
 &=\tfrac12 e^{-(u-v)^2/2}+\tfrac12 e^{-(u+v)^2/2}
       -e^{-(u^2+v^2)/2}\\
 &=e^{-(u^2+v^2)/2}(\cosh(uv)-1)\ge0.
\end{align*}
The coefficients of the two centered terms are positive. Retaining the variance of the second term therefore gives
\[
 \chi_{1/2}^2\ge\frac{\eta^4}{4C^4}
       \operatorname{Var}(\cos(\sqrt2kG))
 =\frac{\eta^4}{8C^4}(1-e^{-2k^2})^2.
\]
Since $C^4\le(1001/1000)^4<2$ and $e^{-2k^2}<1/4$, we have
$\operatorname{Var}(\cos(\sqrt2kG))>9/32>1/4$. Hence
\begin{equation}\label{eq:equal-bound}
 \chi_{1/2}^2>\frac{\eta^4}{32},\qquad
 D_{1/2}>\frac{\eta^4}{128}.
\end{equation}

At $t=1/5$, $a=2/\sqrt5$ and $b=1/\sqrt5$. The four oscillatory coefficients have damping factors
\[
 e^{-k^2/10},\quad e^{-2k^2/5},\quad e^{-k^2/10},\quad e^{-9k^2/10}.
\]
Since $C\ge1$ and $C^2-1=2\eta e^{-k^2/2}+\eta^2e^{-k^2}$,
formula~\eqref{eq:convolution} yields
\begin{align*}
 |R_{1/5}(z)-1|
 &\le\eta(e^{-k^2/10}+e^{-2k^2/5})
      +\frac{\eta^2}{2}(e^{-k^2/10}+e^{-9k^2/10})\\
 &\quad+2\eta e^{-k^2/2}+\eta^2e^{-k^2}\\
 &\le(4\eta+2\eta^2)e^{-k^2/10}
 \le6\eta e^{-k^2/10}.
\end{align*}
Consequently,
\begin{equation}\label{eq:unequal-bound}
 D_{1/5}\le36\eta^2e^{-k^2/5}.
\end{equation}
Equations~\eqref{eq:equal-bound} and~\eqref{eq:unequal-bound} are the two entropy bounds we need.

\subsubsection{Proof for the counterexample family}
\begin{proof}[Proof of Proposition~\ref{thm:family}]
Take $\eta=e^{-k}$. Since $e^{14}>2^{14}>1000$, the amplitude restriction holds.
Logarithmic differentiation gives
\begin{align}
 (\log f_k)''(x)
 &=-1-\frac{\eta k^2\cos(kx)}{1+\eta\cos(kx)}
       -\frac{\eta^2k^2\sin^2(kx)}{(1+\eta\cos(kx))^2}\notag\\
 &\le-1+\frac{k^2e^{-k}}{1-e^{-k}}.
 \label{eq:curvature}
\end{align}
The function $k^2e^{-k}$ decreases for $k\ge14$, while $1-e^{-k}$ increases. Hence
\[
 \frac{k^2e^{-k}}{1-e^{-k}}
 \le\frac{196e^{-14}}{1-e^{-14}}
 <\frac{196}{2^{14}-1}<\frac13.
\]
Thus $(\log f_k)''<-2/3<-1/2$.

The function $k^2/5-2k$ increases for $k\ge14$, with initial value $56/5>10$.
The exponential series gives $e>1+1+1/2+1/6=8/3$, and
$8^{10}>9216\cdot3^{10}$. Therefore
\[
 e^{k^2/5-2k}>e^{10}>(8/3)^{10}>9216=36\cdot256.
\]
Together with~\eqref{eq:unequal-bound}, this gives $D_{1/5}<e^{-4k}/256$;
then~\eqref{eq:equal-bound} and~\eqref{eq:entropy-relative} give~\eqref{eq:family-gap}.
Exchanging $X,Y$ gives $F_{f_k}(1-t)=F_{f_k}(t)$, and hence
\[
 F_{f_k}(1/2)<\frac{F_{f_k}(1/5)+F_{f_k}(4/5)}2,
\]
This directly violates concavity.

Finally, put $q_k=f_k/\varphi$. The explicit formula gives
\[
 \|q_k-1\|_\infty\le2e^{-k},\qquad
 \|q_k^{(j)}\|_\infty\le k^je^{-k}\quad(j\ge1).
\]
For each fixed $j$, the right-hand side tends to zero. The identity in~\eqref{eq:curvature} also gives
\[
 \|(\log f_k)''+1\|_\infty
 \le\frac{k^2e^{-k}}{1-e^{-k}}
       +\frac{k^2e^{-2k}}{(1-e^{-k})^2}\longrightarrow0.
\]
This proves~\eqref{eq:family-convergence}.
\end{proof}

\section{Periodic perturbations and a scaling limit near the midpoint}\label{sec:periodic-resonance}
The single-cosine example is not isolated among waveforms. We prove that every nonconstant smooth even periodic waveform, with a suitable small amplitude and high frequency, eventually gives entropy greater than the midpoint entropy at every fixed interior nonmidpoint weight.
At fixed weights the entropy deficit is determined by matching integer frequencies, whereas the scale $t=1/2+s/k$ yields a continuous limiting profile.
This extension describes both the mechanism of the counterexamples and the scale on which the entropy gap concentrates near the midpoint.
We first give explicit bounds for an exponential cosine potential, then treat general waveforms and the scaling limit.
\subsection{An exponential cosine potential}
The exponential form ensures positivity automatically and reduces log-concavity to an estimate of the second derivative of the potential. With amplitude $k^{-3}$, it gives an entropy gap greater than $1/(25k^{12})$ for every $k\ge10$.
\begin{proposition}\label{prop:exp-cos-simple}
For every real $k\ge10$, set $\epsilon=k^{-3}$ and define
\[
 p_k(y)=\frac{\varphi(y)e^{\epsilon\cos(ky)}}{Z_k},\qquad
 Z_k=\int\varphi(y)e^{\epsilon\cos(ky)}\,dy.
\]
Let $Y_k$ have this density, and put $\sigma_k^2=\mathbb EY_k^2$ and $X_k=Y_k/\sigma_k$.
Then $X_k$ is symmetric, has variance one, and has a positive analytic strongly log-concave density. If $X_k'$ is an independent copy and
$F_k(t)=h(\sqrt tX_k+\sqrt{1-t}X_k')$, then
\[
 F_k(1/5)=F_k(4/5)>F_k(1/2)+\frac1{25k^{12}}.
\]
Moreover, the negative logarithmic curvature of $p_k$ is exactly
\[
 - (\log p_k)''(y)=1+k^{-1}\cos(ky),
\]
After standardization, the negative logarithmic curvature still converges uniformly to $1$, and the density and its first two derivatives converge uniformly to the corresponding Gaussian derivatives.
\end{proposition}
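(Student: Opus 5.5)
The plan follows the template of Proposition~\ref{thm:family}, with the cosine perturbation replaced by the exponential potential. The log-curvature part is immediate: $\log p_k=\log\varphi+\epsilon\cos(ky)-\log Z_k$, so $-(\log p_k)''=1+\epsilon k^2\cos(ky)=1+k^{-1}\cos(ky)\ge 9/10$. Strong log-concavity and analyticity follow at once. Symmetry is clear, and standardization multiplies the curvature by $\sigma_k^2$. We will check that $\sigma_k^2=1+O(e^{-k^2/2}k^{-3})$, since $\mathbb EY^2$ differs from $1$ only through terms $\int y^2\varphi\cos(jky)\epsilon^j/j!$, which are Gaussian-damped. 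Uniform convergence of the curvature and of $f,f',f''$ then follows from $e^{\epsilon\cos}=1+O(k^{-3})$ with derivatives $O(k^{-2}),O(k^{-1})$, together with $\sigma_k\to1$.

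For the entropy gap, use the fact that common scaling does not change differences of $F$. By the variance identity~\eqref{eq:entropy-relative}, we get $F_k(1/5)-F_k(1/2)=D_{1/2}-D_{1/5}$, where $D_t=D(p_t\Vert\varphi)$ and $p_t$ is the density of $\sqrt{1-t}Y+\sqrt tY'$ for unscaled copies. This costs at most a common $\frac12\log$ correction, which cancels. Expand $e^{\epsilon\cos(ky)}=I_0(\epsilon)+2\sum_{j\ge1}I_j(\epsilon)\cos(jky)$, writing $Z_k^{-1}e^{\epsilon\cos}=1+g$. Here $g=c_1\cos(ky)+c_2\cos(2ky)+\dots$ up to a tiny constant, with $c_1\approx\epsilon$, $c_2\approx\epsilon^2/4$, and $|c_j|\le\epsilon^j/j!$-type bounds. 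The rotation formula~\eqref{eq:rotation} gives $R_t(z)=\mathbb E[(1+g(az-bG))(1+g(bz+aG))]$. Every frequency pair $(j,l)$ contributes cosines in $z$ with damping $e^{-k^2(ja\mp lb)^2/2}$ or single terms $e^{-k^2j^2b^2/2}$, $e^{-k^2 l^2a^2/2}$.

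At $t=1/5$ we have $a=2b$. The undamped resonances are $ja=lb$, i.e.\ $l=2j$, so the pair $(1,2)$ gives amplitude $c_1c_2\approx\epsilon^3/4$. This is the key difference from the single-cosine case. Here, however, the claim requires the opposite direction: the midpoint must carry more. At $t=1/2$, the resonances are all $j=l$, contributing $\frac12\sum_j c_j^2\cos(\sqrt2 jkz)$ with leading term $c_1^2/2\approx\epsilon^2/2$. At $t=1/5$, the resonances $(j,2j)$ give at most $c_1c_2/2+\dots\approx\epsilon^3/8$, while all other terms are damped by at least $e^{-k^2/10}$. Using the covariance positivity argument of Appendix~\ref{sec:main-family}, I would bound $\chi_{1/2}^2\ge \frac{c_1^4}{4}\operatorname{Var}\cos(\sqrt2kG)\ge\epsilon^4/20$. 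At $t=1/5$, I would bound $|R_{1/5}-1|\le \epsilon^3/6+Ce^{-k^2/10}$, hence $\chi^2_{1/5}\le\epsilon^6/30$. With the quadratic bounds~\eqref{eq:quadratic} this gives $D_{1/2}-D_{1/5}\ge \epsilon^4/80-\epsilon^6/30>1/(25k^{12})$ once the constants are tuned. It may be necessary to use the sharper Lemma~\ref{lem:entropy-quadratic} with $\eta=O(k^{-3})$ to get constants near $1/2$; then $D_{1/2}\approx\epsilon^4/16$ comfortably exceeds $1/25$ times $\epsilon^4$.

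The main obstacle is bookkeeping the infinite Bessel expansion: we must verify that the tails $\sum_{j\ge3}$ and the normalization constant stay below the margin for all $k\ge10$, not merely asymptotically. I would control them using $I_j(\epsilon)\le(\epsilon/2)^je^{\epsilon}/j!$ and $e^{-k^2/10}\le e^{-10}$, checking numerically that the explicit constant $1/25$ survives at $k=10$.
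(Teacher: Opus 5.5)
Your route is the paper's: nonnegative Fourier (Bessel) coefficients, covariance positivity at $t=1/2$ keeping the $(1,1)$ mode, damping plus the $(1,2)$ resonance at $t=1/5$, and the quadratic bounds of Lemma~\ref{lem:entropy-quadratic}. The gap is the step at $t=1/5$. From $|R_{1/5}-1|\le\epsilon^3/6+Ce^{-k^2/10}$ you conclude $\chi_{1/5}^2\le\epsilon^6/30$, which silently discards the damped term. At the bottom of the range that term dominates.

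The first-order modes contribute $c_1e^{-k^2/10}\cos(2kz/\sqrt5)$ with $c_1\approx\epsilon$. At $k=10$ this gives $\epsilon e^{-10}\approx0.045\,\epsilon^2$, about $45$ times $\epsilon^3$. In general $\epsilon e^{-k^2/10}$ drops below $\epsilon^3$ only once $k^2/10>6\log k$, i.e.\ for $k\gtrsim13$. So at $k=10$ the quantity $\chi^2_{1/5}$ is of order $10^{-3}\epsilon^4$, not $O(\epsilon^6)$. For the same reason, the bound $e^{-k^2/10}\le e^{-10}$ you propose is not enough. The target $1/(25k^{12})$ decays polynomially in $k$, so you need a bound on the damping relative to $\epsilon$ that is uniform over $k\ge10$.

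Two ingredients are missing.
\begin{enumerate}
\item[(i)] The damped modes carry total coefficient at most $\sum_{(m,l)\ne(0,0)}b_mb_l\le e^{2\epsilon}-1=O(\epsilon)$. It is not enough to treat $C$ as $O(1)$: with an $O(1)$ constant the argument fails outright at $k=10$, since $e^{-10}\gg\epsilon^2$.
\item[(ii)] You need $e^{-k^2/10}\le\epsilon/20$ for all $k\ge10$. This follows from monotonicity of $k^2/10-3\log k$ and the check $e^{10}>20000$ at $k=10$, which is nearly sharp.
\end{enumerate}
With sup-norm bookkeeping these give $\|R_{1/5}-1\|_\infty\le\epsilon^2e^{2\epsilon}(1/5+8\epsilon/3)$, so $D_{1/5}$ is roughly $\epsilon^4/50$, not $O(\epsilon^6)$. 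The margin is then thin: about $1/16-1/50$, or roughly $0.041$ after corrections, against $1/25=0.04$. It is not comfortable.

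Consequently the sharp form of Lemma~\ref{lem:entropy-quadratic} with $\eta=e^{4\epsilon}-1$ is mandatory, not optional. Your displayed chain $\epsilon^4/80-\epsilon^6/30>\epsilon^4/25$ is false as written, since $1/80<1/25$. The final constants must also be checked exactly, as in the paper. Alternatively, compute the $L^2(\varphi)$ norm of the damped first-order term directly instead of bounding it in sup norm; that leaves far more room.

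A minor slip: $\sigma_k^2=1-O(k^{-1}e^{-k^2/2})$, not $1+O(k^{-3}e^{-k^2/2})$, because $\int y^2\varphi(y)\cos(jky)\,dy=(1-j^2k^2)e^{-j^2k^2/2}$ contributes a factor $k^2$. This does not affect $\sigma_k\to1$ or the convergence claims.
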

\begin{proof}
Write $e^{\epsilon\cos\theta}=\sum_{m\in\mathbb Z}b_m e^{im\theta}$.
The exponential power series and $\cos\theta=(e^{i\theta}+e^{-i\theta})/2$ give
\[
 b_m=b_{-m}\ge0,\quad b_0\ge1,\quad b_1\ge\epsilon/2,
 \quad\sum_m b_m=e^\epsilon,\quad\sum_m m^2b_m=\epsilon e^\epsilon.
\]
Hence $1\le Z_k\le e^\epsilon$. Put $a=\sqrt t$ and $b=\sqrt{1-t}$.
The relative density with respect to $\varphi$ of the original weighted sum $aY_k+bY_k'$ is
\begin{equation}\label{eq:periodic-exact-output}
 R_t(z)=Z_k^{-2}\sum_{m,l\in\mathbb Z}b_m b_l
 e^{-k^2(-mb+la)^2/2}e^{ik(ma+lb)z}.
\end{equation}
This follows from the orthogonal decomposition of independent standard Gaussians $G,H$ given $aG+bH=z$,
namely $G=az-bW,H=bz+aW$. All Fourier series converge absolutely.
Put $D_t=\int\varphi R_t\log R_t$. Since all weights give the same variance,
$F_k(t)-F_k(1/2)=D_{1/2}-D_t$; standardization subtracts the same constant $\log\sigma_k$ from every entropy.
Also $e^{-4\epsilon}\le R_t\le e^{4\epsilon}$. Set
$\eta=e^{4\epsilon}-1<1$. Lemma~\ref{lem:entropy-quadratic} gives
\begin{equation}\label{eq:periodic-kl-bounds}
 \frac{\|R_t-1\|_{L^2(\varphi)}^2}{2(1+\eta)}
 \le D_t\le
 \frac{\|R_t-1\|_{L^2(\varphi)}^2}{2(1-\eta)}.
\end{equation}
At $t=1/2$, the modes $(m,l)=(1,1),(-1,-1)$ in~\eqref{eq:periodic-exact-output}
give a $\cos(\sqrt2kz)$ term with coefficient at least $\epsilon^2/(2e^{2\epsilon})$.
All remaining cosine coefficients are nonnegative, and for any real $u,v$,
\[
 \operatorname{Cov}(\cos(uG),\cos(vG))
 =e^{-(u^2+v^2)/2}(\cosh(uv)-1)\ge0.
\]
Retaining only this term therefore gives
\[
 D_{1/2}\ge\frac{\epsilon^4e^{-8\epsilon}}{16}(1-e^{-2k^2})^2.
\]
At $t=1/5$, every pair with $0<|m|+|l|\le2$ satisfies
$|-2m+l|\ge1$, so its damping factor is at most $e^{-k^2/10}$.
The sum of coefficients of total degree $|m|+|l|\ge3$ is at most
\[
 \sum_{j\ge3}\frac{(2\epsilon)^j}{j!}
 \le\frac43\epsilon^3e^{2\epsilon}.
\]
Indeed, every two-dimensional Fourier coefficient is nonnegative, and modes of total degree greater than two cannot arise from
the zeroth, first, or second Taylor terms of $e^{\epsilon(\cos\theta+\cos\psi)}$.
Subtracting Gaussian means term by term and using $|\cos x-\mathbb E\cos(uG)|\le2$ gives
\[
 \|R_{1/5}-1\|_\infty
 \le2\left((e^{2\epsilon}-1)e^{-k^2/10}
                  +\frac43\epsilon^3e^{2\epsilon}\right).
\]
For $k\ge10$, $e^{-k^2/10}\le\epsilon/20$: the function
$k^2/10-3\log k$ is increasing on this interval, and
$e^{10}>\sum_{j=0}^{14}10^j/j!=34374418261/1702701>20000$.
Thus
\[
 \|R_{1/5}-1\|_\infty
 \le\epsilon^2 e^{2\epsilon}(1/5+8\epsilon/3).
\]
Combining~\eqref{eq:periodic-kl-bounds} with $\epsilon\le1/1000$ gives
\[
 \frac{D_{1/2}-D_{1/5}}{\epsilon^4}
 \ge\frac{e^{-8/1000}(1-e^{-200})^2}{16}
 -\frac{e^{4/1000}(1/5+8/3000)^2}{2(2-e^{4/1000})}
 >\frac1{25}.
\]
For the last inequality, use $e^{-8/1000}>124/125$,
$e^{-200}<1/1000$, and $e^{4/1000}<250/249$. The left-hand side is therefore strictly greater than
\[
 \frac{124}{125}\frac{(999/1000)^2}{16}
 -\frac{250}{496}\left(\frac{76}{375}\right)^2
 =\frac{5743710649}{139500000000}>\frac1{25}.
\]
The bound $e^{-200}<1/1000$ follows from $e^{200}>1+200+200^2/2$.

Finally, the second derivative of the Gaussian characteristic function gives
\[
 \sigma_k^2=1-\frac{k^2}{Z_k}\sum_m m^2b_m e^{-m^2k^2/2},\qquad
 1-\epsilon k^2e^{\epsilon-k^2/2}\le\sigma_k^2<1.
\]
The original negative logarithmic curvature lies in $[1-1/k,1+1/k]$ and is multiplied by $\sigma_k^2$ upon standardization.
Uniform convergence of the density and its first two derivatives follows directly from $\epsilon k^2\to0$, Gaussian decay, and
$\sigma_k\to1$. This amplitude choice is used only for the stated second-order convergence.
\end{proof}

\subsection{Pointwise entropy deficits for arbitrary smooth even periodic waveforms}
For a general waveform, the leading entropy deficit is determined by frequency pairs that survive Gaussian integration. The following formula distinguishes rational and irrational weight ratios and gives a strict comparison with the midpoint at every fixed interior nonmidpoint weight.
\begin{proposition}\label{prop:periodic-wave-law}
Let $u$ be a nonconstant real even $C^\infty$, $2\pi$-periodic function with zero periodic mean, and write
\[
 u(\theta)=\sum_{m\ne0}c_m e^{im\theta},\qquad c_{-m}=c_m\in\mathbb R.
\]
Take $\epsilon_k>0$ satisfying
$\epsilon_k k^2\to0$ and $\log(1/\epsilon_k)=o(k^2)$.
Let $Y_k$ have density proportional to $\varphi(y)e^{\epsilon_k u(ky)}$, and set
$X_k=Y_k/\sqrt{\mathbb EY_k^2}$. For sufficiently large $k$, $X_k$ is symmetric, strongly log-concave, and of variance one.
Again let $F_k$ denote the entropy of the weighted sum of independent copies. For every fixed $t\in(0,1)$,
\begin{equation}\label{eq:periodic-deficit-limit}
 \lim_{k\to\infty}\frac{h(\varphi)-F_k(t)}{\epsilon_k^4}
 =\begin{cases}
 \displaystyle\frac12\sum_{j\ne0}|c_{pj}c_{lj}|^2,
 &\sqrt{t/(1-t)}=p/l,\ (p,l)=1,\ p,l\in\mathbb N,\\[3pt]
 0,&\sqrt{t/(1-t)}\notin\mathbb Q.
 \end{cases}
\end{equation}
In particular, for every fixed $t\in(0,1)\setminus\{1/2\}$ and sufficiently large $k$,
$F_k(t)>F_k(1/2)$. The threshold may depend on $t$; this is not a simultaneous assertion for all nonmidpoint weights.
\end{proposition}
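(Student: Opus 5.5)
Following Proposition~\ref{prop:exp-cos-simple}, I will write the relative density of the weighted sum as an absolutely convergent double Fourier series. Expand
\[
 e^{\epsilon_k u(\theta)}=\sum_{m}b_m e^{im\theta},\qquad b_m=\epsilon_k c_m+\tfrac12\epsilon_k^2(c*c)_m+O(\epsilon_k^3)
\]
for $m\ne0$, where the $O$ is uniform and summable in $m$ because $u$ is smooth. First I standardize. Since $\epsilon_kk^2\to0$, the variance satisfies $\sigma_k^2=1+O(\epsilon_k k^2e^{-k^2/2})$, which is negligible against $\epsilon_k^4$ by the hypothesis $\log(1/\epsilon_k)=o(k^2)$. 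The curvature bound $-(\log p)''=1-\epsilon_kk^2u''(ky)\ge 1-\epsilon_kk^2\|u''\|_\infty$ gives strong log-concavity for large $k$. Symmetry follows from evenness. Standardization subtracts the same constant from every $F_k(t)$. With $a=\sqrt t$ and $b=\sqrt{1-t}$, the analogue of~\eqref{eq:periodic-exact-output} is
\[
 R_t(z)=Z_k^{-2}\sum_{m,l}b_mb_l\,e^{-k^2(la-mb)^2/2}e^{ik(ma+lb)z},
\]
and I take the entropy deficit to be $h(\varphi)-F_k(t)=D_t+o(\epsilon_k^4)$.

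Next I split the modes of $R_t-1$ according to whether the damping exponent $la-mb$ vanishes. The first-order modes $(m,0)$ and $(0,l)$ carry the factors $e^{-k^2m^2b^2/2}$ or $e^{-k^2l^2a^2/2}$, so for fixed $t\in(0,1)$ they are $O(\epsilon_k e^{-ck^2})=o(\epsilon_k^2)$ in sup norm. For the remaining modes with $m,l\ne0$ the coefficient is $\epsilon_k^2c_mc_l+O(\epsilon_k^3)$. The exponent vanishes exactly when $l/m=b/a$. If $a/b$ is irrational there are no resonant pairs. If $a/b=p/l_0$ in lowest terms, the resonant pairs are $(m,l)=(pj,l_0j)$, up to the orientation dictated by the formula. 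Let $\delta>0$ be the minimum of $|la-mb|$ over nonresonant pairs with $|m|,|l|\le N$. Those modes are then damped by $e^{-k^2\delta^2/2}$. The tail $|m|+|l|>N$ is bounded by $\epsilon_k^2\sum_{|m|+|l|>N}|c_mc_l|$ plus $O(\epsilon_k^3)$, which is small since the Fourier coefficients decay rapidly. Sending $k\to\infty$ and then $N\to\infty$ gives the limit of $\epsilon_k^{-2}(R_t-1)$ in $L^\infty$, and hence in $L^2(\varphi)$:
\[
 S(z)=\sum_{j\ne0}c_{pj}c_{l_0j}\,e^{ikj(pa+l_0b)z},
\]
centered by its Gaussian mean. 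In the irrational case the limit is $0$.

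The main obstacle is computing $\|S\|_{L^2(\varphi)}^2$, because the frequencies scale with $k$. Writing $\omega=k(pa+l_0b)\to\infty$, the terms $e^{ij\omega z}$ are nearly orthonormal in $L^2(\varphi)$: $\int\varphi\, e^{i(j-j')\omega z}=e^{-(j-j')^2\omega^2/2}$. The Gram matrix therefore tends to the identity, and the mean, being the $j=0$ overlap, vanishes in the limit. Since the coefficients are summable, this gives $\|S\|^2\to\sum_{j\ne0}|c_{pj}c_{l_0j}|^2$. Lemma~\ref{lem:entropy-quadratic} with $\eta=O(\epsilon_k)\to0$ then yields $D_t/\epsilon_k^4\to\frac12\sum_{j\ne0}|c_{pj}c_{l_0j}|^2$, which is~\eqref{eq:periodic-deficit-limit}.

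For the final claim: at $t=1/2$ we have $p=l_0=1$, so the limit is $\frac12\sum_{j\ne0}c_j^4>0$, since $u$ is nonconstant. I still need the limit at every $t\ne1/2$ to be strictly smaller. In the irrational case the limit is $0$. In the rational case $(p,l_0)\ne(1,1)$, so say $p\ge2$. By Cauchy--Schwarz,
\[
 \sum_{j\ne0}c_{pj}^2c_{l_0j}^2\le\Bigl(\sum_{j\ne0} c_{pj}^4\Bigr)^{1/2}\Bigl(\sum_{j\ne0} c_{l_0j}^4\Bigr)^{1/2}\le\sum_{j\ne0}c_j^4.
\]
Equality would require $\sum_{j\ne0}c_{pj}^4=\sum_{j\ne0}c_j^4$, that is, $c_m=0$ for every $m$ not divisible by $p$, together with proportionality in the Cauchy--Schwarz step. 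I expect this to be the delicate point: a waveform such as $u(\theta)=\cos(p\theta)$ seems to attain equality. If it does, I would refine the comparison by rescaling frequency, replacing $u$ by its primitive period $u(\theta/g)$ with $g=\gcd\{m:c_m\ne0\}$, which changes $k$ to $k/g$. After this reduction some $c_m$ with $p\nmid m$ is nonzero, the inequality is strict, and comparing the limits gives $F_k(t)>F_k(1/2)$ for large $k$.
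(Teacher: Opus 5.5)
Your proposal is correct and follows the paper's proof essentially step for step: the double Fourier representation of the output density, suppression of the first-order and nonresonant modes via $\log(1/\epsilon_k)=o(k^2)$, asymptotic orthogonality of the resonant frequencies $kj(pa+l_0b)$ in $L^2(\varphi)$, Lemma~\ref{lem:entropy-quadratic}, and Cauchy--Schwarz against the midpoint value. The only variation is the strictness step: the paper takes the smallest positive $m_0$ with $c_{m_0}\ne0$ and uses the equality case of Cauchy--Schwarz, while your gcd reduction is an equally valid alternative — though with two small corrections. First, $u=\cos(p\theta)$ does not attain equality, because $(c_{pj}^2)_j$ and $(c_{l_0j}^2)_j$ have disjoint supports, so in fact $\sum_{j\ne0}c_{pj}^2c_{l_0j}^2=0$ for it. Second, no rescaling of $k$ is needed (it would be $k\mapsto gk$, not $k/g$): $\gcd(p,l_0)=1$ forces $g\mid j$ in every nonzero term, so both limiting sums are literally unchanged under $c_m\mapsto c_{gm}$.
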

\begin{proof}
Normalize the perturbation by its periodic mean:
\[
 w_\epsilon(\theta)=\frac{e^{\epsilon u(\theta)}}{
 (2\pi)^{-1}\int_0^{2\pi}e^{\epsilon u(s)}ds}
 =1+\epsilon v_\epsilon(\theta),\qquad
 v_\epsilon=\sum_{m\ne0}d_m(\epsilon)e^{im\theta}.
\]
For every fixed integer $r\ge0$, Taylor expansion in smooth-function algebras and integration by parts give
\[
 \sum_{m\ne0}(1+|m|)^r|d_m(\epsilon)-c_m|=O_r(\epsilon),
 \qquad \sum_{m\ne0}(1+|m|)^r|d_m(\epsilon)|=O_r(1).
\]
More explicitly, first expand $w_\epsilon=1+\epsilon u+O(\epsilon^2)$ in the $C^{r+2}$ norm,
and then integrate the Fourier coefficients by parts $r+2$ times.
The actual normalization constant relative to the Gaussian is
$C_k=\int\varphi(y)w_{\epsilon_k}(ky)dy=1+O(\epsilon_k e^{-k^2/2})$,
and $\sigma_k^2=1+O(\epsilon_k k^2e^{-k^2/2})$.
Put $a=\sqrt t,b=\sqrt{1-t}$. The original weighted sum has relative Gaussian density exactly
\[
 R_{k,t}=C_k^{-2}(1+\epsilon_k L_{k,t}+\epsilon_k^2M_{k,t}),
\]
where
\[
 \|L_{k,t}\|_\infty\le C_t e^{-k^2\min(t,1-t)/2},
\]
\[
 M_{k,t}(z)=\sum_{m,l\ne0}d_m(\epsilon_k)d_l(\epsilon_k)
 e^{-k^2(-mb+la)^2/2}e^{ik(ma+lb)z}.
\]
Since $\log(1/\epsilon_k)=o(k^2)$, the linear term and normalization error are negligible; more precisely,
\[
 \|(R_{k,t}-1)/\epsilon_k^2-M_{k,t}\|_\infty\longrightarrow0.
\]
The double Fourier coefficients are uniformly absolutely summable. If $a/b$ is irrational, every nonzero integer pair
has $-mb+la\ne0$, so dominated convergence gives $\|M_{k,t}\|_\infty\to0$.
If $a/b=p/l$, the undamped modes are exactly $(m,n)=(pj,lj)$, and hence
\[
 \left\|M_{k,t}-\sum_{j\ne0}c_{pj}c_{lj}
             e^{ikj\sqrt{p^2+l^2}z}\right\|_\infty\longrightarrow0.
\]
Here $n$ is the second mode index, distinguished from the denominator integer $l$.
The cross terms between distinct surviving frequencies decay exponentially in the Gaussian $L^2$ inner product. Thus
\[
 \|M_{k,t}\|_{L^2(\varphi)}^2\longrightarrow
 \sum_{j\ne0}|c_{pj}c_{lj}|^2.
\]
Again, absolute summability permits truncation to finitely many modes followed by uniform control of the tail.
Since $\|R_{k,t}-1\|_\infty=O_t(\epsilon_k^2)$, the second-order entropy expansion gives
\[
 D(\varphi R_{k,t}\Vert\varphi)/\epsilon_k^4
 \longrightarrow\frac12\lim\|M_{k,t}\|_{L^2(\varphi)}^2.
\]
The standardized entropy deficit minus this relative entropy is exactly
$\tfrac12(\log\sigma_k^2-\sigma_k^2+1)
 =O(\epsilon_k^2 k^4e^{-k^2})=o(\epsilon_k^4)$, proving the limit.

At the midpoint, the limit is $S/2$, where $S=\sum_{j\ne0}|c_j|^4>0$.
For $p\ne l$, Cauchy--Schwarz gives
\[
 \sum_{j\ne0}|c_{pj}c_{lj}|^2
 \le\left(\sum_{j\ne0}|c_{pj}|^4\right)^{1/2}
       \left(\sum_{j\ne0}|c_{lj}|^4\right)^{1/2}\le S.
\]
Equality is impossible. Indeed, assume without loss of generality that $p>l$, and choose the smallest positive index $m_0$ with $c_{m_0}\ne0$.
If $p$ does not divide $m_0$, the second inequality is already strict. If it does and equality holds throughout,
both sequences have squared norm $S$, and equality in Cauchy--Schwarz forces
$|c_{pj}|^2=|c_{lj}|^2$ for every $j$.
Taking $j=m_0/p$ gives a nonzero mode $lm_0/p<m_0$, a contradiction.
The irrational case has limit zero, likewise strictly smaller than $S/2$.

Finally, the original negative logarithmic curvature is $1-\epsilon_k k^2u''(ky)$ and converges uniformly to one.
Multiplication by $\sigma_k^2$ gives the standardized curvature. Symmetry gives mean zero.
All moments of fixed order converge to the Gaussian moments. If also $\epsilon_k k^r\to0$, the ordinary density and its first $r$
derivatives converge uniformly to their Gaussian counterparts; for example, $\epsilon_k=e^{-k}$ meets this condition for every fixed order simultaneously.
Standardization may change the relative Gaussian ratio in the tails, so this assertion uses derivative norms of the ordinary densities.
\end{proof}

\subsection{The scale of the dip near the midpoint}
The Fourier argument also gives a uniform conclusion without interchanging second derivatives. For any fixed $S_0<\infty$,
\begin{equation}\label{eq:periodic-central-profile}
 \sup_{|s|\le S_0}\left|
 \frac{h(\varphi)-F_k(1/2+s/k)}{\epsilon_k^4}
 -\frac12\sum_{m\ne0}|c_m|^4e^{-2m^2s^2}\right|\longrightarrow0.
\end{equation}
To prove this, consider the modes $m=l$ in the double Fourier representation. Then
\[
 k(-m\sqrt{1-t}+m\sqrt t)\longrightarrow\sqrt2ms,
 \qquad t=1/2+s/k,
\]
uniformly on compact $s$-intervals, so their amplitudes tend to $c_m^2e^{-m^2s^2}$.
For each fixed $m\ne l$, the damping factor tends to zero uniformly on the interval. Uniform absolute summability of the coefficients allows us
to truncate the modes and then control the tail. The difference between $M_{k,t}$ and
\[
 \sum_{m\ne0}c_m^2e^{-m^2s^2}
          e^{ikm(\sqrt t+\sqrt{1-t})z}
\]
tends to zero uniformly in $(s,z)$. The absolute differences between distinct retained frequencies tend to infinity uniformly for $|s|\le S_0$,
so the squared Gaussian $L^2$ norm tends uniformly to $\sum|c_m|^4e^{-2m^2s^2}$.
The linear output, normalization, and standardization errors remain $o(\epsilon_k^2)$ or
$o(\epsilon_k^4)$ on this compact interval. The same entropy Taylor formula proves~\eqref{eq:periodic-central-profile}.

In particular, for each fixed $s\ne0$ and sufficiently large $k$,
\[
 F_k(1/2+s/k)>F_k(1/2).
\]
For $u=\cos$, the profile is exactly $e^{-2s^2}/16$.
Thus the entropy variation near the midpoint has width of order $k^{-1}$ and depth of order $\epsilon_k^4$.
Although~\eqref{eq:periodic-central-profile} holds uniformly on compact $s$-intervals, its error is not controlled relative to $s^2$. It therefore does not determine the sign of the midpoint second derivative for finite $k$.

Equal weighting retains perturbations of the same frequency in the two inputs, whereas unequal weighting requires matching integer frequency ratios.
For instance, when $u=\cos$, the normalized midpoint entropy deficit tends to $1/16$, while the limit at every fixed interior nonmidpoint weight is zero.
The threshold for these convergences depends on the weight, and convergence is not uniform near the endpoints; the entropy deficit of the input density at an endpoint is typically of order $\epsilon_k^2$.

\section{Strictly positive midpoint curvature}\label{app:midpoint-curvature}
An entropy value at the midpoint below a nonmidpoint value does not by itself make the midpoint a local minimum; nor does the preceding scaling limit justify interchanging second derivatives.
We compute midpoint curvature directly and prove that, for every $q>2-\sqrt2$, a Hermite family requiring no moment correction satisfies $F''(1/2)>0$.
Thus, in this smaller noise range, equal weighting is indeed a strict local minimum.
The key is to estimate both the output perturbation and its second derivative with respect to the weight. This construction also matches more Gaussian moments as the degree increases.
Let $\varphi_s$ be the centered Gaussian density of variance $s$, let $\varphi=\varphi_1$, and use probabilists' Hermite polynomials $H_n$. Throughout, $n$ tends to infinity through even integers at least $4$.
\begin{proposition}\label{thm:hermite-noise-counterexamples}
Define
\[
 h_n(x)=\sqrt2 e^{-x^2/2}\frac{H_n(\sqrt2 x)}{\sqrt{n!}},\qquad
 \epsilon_n=e^{-\sqrt n},\qquad f_n(x)=\varphi(x)(1+\epsilon_n h_n(x)).
\]
For sufficiently large $n$, $f_n$ is a strictly positive, smooth, even probability density of variance one with $(\log f_n)''<-1/2$. All its moments through order $n-1$ agree exactly with those of the standard Gaussian, and, for each fixed $m\ge0$,
\[
 \|f_n/\varphi-1\|_{C_b^m}\longrightarrow0.
\]
Let $X_n\sim f_n$, let $G$ be an independent standard Gaussian, and set
\[
 Z_{q,n}=\sqrt q X_n+\sqrt{1-q}G,\qquad 0<q\le1.
\]
The variables $Z_{q,n}$ retain the stated symmetry, unit variance, moment matching, and uniform convergence of relative densities and their fixed-order derivatives; for sufficiently large $n$ their densities are also strongly log-concave. Let $Z_{q,n}'$ be an independent copy and write
\[
 F_{q,n}(t)=h\bigl(\sqrt{1-t}Z_{q,n}+\sqrt tZ_{q,n}'\bigr).
\]
For every fixed $q>2-\sqrt2$ and sufficiently large $n$,
\[
 F_{q,n}''(1/2)>0.
\]
Thus equal weighting is a strict local minimum. Moreover, for fixed $q_0>2-\sqrt2$, one common sufficiently large degree works for all $q\in[q_0,1]$.
\end{proposition}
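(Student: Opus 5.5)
The properties of $f_n$ and $Z_{q,n}$ come from the arguments of Section~\ref{sec:construction-main} with $a=1$, which need no moment correction. Lemma~\ref{lem:hermite-bounds} gives polynomial growth of every $\|h_n^{(m)}\|_\infty$. Hence $\epsilon_n=e^{-\sqrt n}$ yields positivity, $(\log f_n)''<-1/2$ and $C_b^m$ convergence, exactly as in Section~\ref{sec:construction-main}. Exact moment matching through order $n-1$ follows from Lemma~\ref{lem:moments}'s computation: $\int x^j e^{-x^2/2}H_n(\sqrt2x)\varphi=2^{-(j+1)/2}\int y^jH_n\varphi=0$ for $j<n$. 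The binomial moment identity transfers this to $Z_{q,n}$. The smoothed relative perturbation is $\epsilon_nT_{\sqrt q}h_n$, and derivative contraction of $T_{\sqrt q}$ gives strong log-concavity uniformly in $q$.

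For the curvature, I will work in the parameter $s$ with $t=\tfrac12+s$ and write the output relative density as $R_s=1+\epsilon_nL_s+\epsilon_n^2M_s$, as in Section~\ref{sec:linear}. Put $D(s)=\int\varphi R_s\log R_s$, so that $F_{q,n}=\tfrac12\log(2\pi e)-D$. Symmetry in $s$ gives $D'(0)=0$ and
\[
 D''(0)=\int\varphi(\partial_s^2R_s)(1+\log R_s)\Big|_0+\int\varphi\frac{(\partial_sR_s)^2}{R_s}\Big|_0 .
\]
Since $\int\varphi\,\partial_s^2R_s=0$ and $\log R=(R-1)+O((R-1)^2)$, the main contributions are $\langle\partial_s^2R,R-1\rangle_{L^2(\varphi)}+\|\partial_sR\|_2^2$, that is, $\tfrac12\partial_s^2\|R_s-1\|_2^2$ at $s=0$. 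The goal is to show that this quantity is negative, of order $-\epsilon_n^4 n\,\kappa^{2n}$ up to polynomial factors. The remaining cubic terms should be controlled by $\|R-1\|_\infty\le\eta_n$ times the same norms, with $\eta_n\to0$.

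The dominant term is $\epsilon_n^4\,\tfrac12\partial_s^2\|M_s\|_2^2$. By \eqref{eq:scaled-four-coefficient} with $a=1$ (so $b=0$), this equals a prefactor times the coefficient of $z_1^nz_2^nz_3^nz_4^n$ in $\exp(\tfrac12z^\mathsf TK'_tz)$. Here $K'_t$ depends on $t$ only through $vv^\mathsf T$, and $Q_t(1,1,1,1)=2\kappa_t$ with $\kappa_t=\kappa-\kappa(\tfrac12-\sqrt{t(1-t)})$, which gives $\partial_s^2\kappa_t|_0=-2\kappa$. I would differentiate the coefficient of $Q_t^{2n}$ twice in $s$. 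The leading contribution should be $2n\,Q^{2n-1}\partial_s^2Q$, whose coefficient extraction, combined with the multinomial local-limit argument of Section~\ref{sec:quadratic}, gives a negative term of size $n\cdot\|M_{1/2}\|_2^2$. The term $2n(2n-1)Q^{2n-2}(\partial_sQ)^2$ is of the same order, since $\partial_sQ$ changes the edge weights, so it must be computed rather than merely bounded. Equivalently, I would use a Gaussian/local-CLT approximation of the coefficient as a function of $s$ near $0$. This should give $\partial_s^2\log\|M_s\|_2^2|_0\approx-c n$ with an explicit $c>0$, and the sign of the total is the crux. The cross term $\epsilon_n^3\langle L,M\rangle$ and the linear part $\epsilon_n^2\partial_s^2\|L_s\|^2$ are handled with Lemma~\ref{lem:linear-mehler}. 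Differentiating the generating function in $r$ gives derivatives of $\|T_{\sqrt r}h_n\|_2^2$ of size at most $n^2\tau(r)^n$. Negligibility then requires $\tau(q/2)^{n/2}\ll\epsilon_n\kappa^n$, which is a condition of the form $\kappa^2>\tau(q/2)$ computed with $a=1$. With $a=1$ we have $\kappa=q/(2-q)$ and $\tau(q/2)=q/(4-q)$, so the condition $q^2(4-q)>q(2-q)^2$ reduces to $q^2-4q+2<0$, i.e.\ $q>2-\sqrt2$. This confirms the threshold and that the plan is on track.

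The main obstacle is the second-derivative step for the quadratic term. Extracting $\partial_s^2$ of the four-variable coefficient needs sign control, or an asymptotic with a definite negative leading constant, rather than a one-sided bound. I would first try to write the coefficient as a positive sum over edge multigraphs and differentiate the weights termwise: the variance of a log-linear statistic under the multinomial measure, plus the expectation of its second derivative. From this I would show that the expectation term, of order $-n$, dominates, using that $\kappa_t$ is maximised at $s=0$ with nondegenerate curvature. Error terms from $\log R$ and $1/R$ would then be bounded using derivative estimates of $\partial_sR_s$ via the rotated representation \eqref{eq:rotated-density}, each costing at most polynomial factors in $n$, which $e^{-\sqrt n}$ does not disturb but the exponential gap absorbs. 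For $q\in[q_0,1]$ all constants are continuous in $q$, so one common degree threshold works.
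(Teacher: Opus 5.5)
Your architecture matches the paper's: the exact decomposition $R=1+\epsilon_nL+\epsilon_n^2M$, the requirement that the $\epsilon_n^4$ quadratic contribution dominate at order $n\|M\|_2^2$, negligibility of the linear part exactly when $\kappa^2>\tau(q/2)$ (your computation that this is $q>2-\sqrt2$ for $a=1$ is correct), and absorption of the logarithmic remainder by $\eta_n\to0$.

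However, the step that carries the entire sign, namely the second derivative of the quadratic term at the midpoint, is left open. The justification you sketch would not establish it. You propose to show that the ``expectation'' part beats the ``variance'' part because $\kappa_t$ has a nondegenerate maximum at $s=0$. But $2\kappa_t=Q_t(1,1,1,1)$ governs only the sum of all coefficients of $Q_t^{2n}$, not the extracted coefficient of $z_1^nz_2^nz_3^nz_4^n$. Its curvature, $\partial_s^2\log\kappa_t^{2n}|_0=-4n$, is not the curvature of $\|M_s\|_2^2$. Nothing in your plan rules out the positive contribution from $2n(2n-1)Q^{2n-2}(\partial_sQ)^2$ cancelling the negative one; you correctly note that it has the same order.

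The missing idea is an exact factorization special to $a=1$. Since $\varphi h_n=\varphi_{1/2}H_n(\sqrt2x)/\sqrt{n!}$ is a multiple of $\varphi_{1/2}^{(n)}$, rescaling by $\sqrt{qt}$ and $\sqrt{q(1-t)}$ gives $(qt)^{n/2}\varphi_{qt/2}^{(n)}$ and $(q(1-t))^{n/2}\varphi_{q(1-t)/2}^{(n)}$. Convolving these two and then the noise yields $(q\sqrt{t(1-t)})^n$ times a fixed multiple of $\varphi_{1-q/2}^{(2n)}$. This is the paper's formula $M_t=(q\sqrt{t(1-t)}/(2s_M))^n\sqrt{\binom{2n}{n}}\,R_{2n,s_M}$. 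Hence $\partial_tM_t|_{1/2}=0$ and $\partial_t^2M_t|_{1/2}=-4nM$ exactly. So $-\int\varphi\ddot r\,r$ has leading term $4n\epsilon_n^4\|M\|_2^2>0$ with no cancellation to analyse, and the cross terms and remainder follow by Cauchy--Schwarz as you intend.

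The same fact is visible in your own combinatorial framework. With $b=0$, $Q_t$ depends on $z$ only through $\sqrt t\,z_1+\sqrt{1-t}\,z_2$ and $\sqrt t\,z_3+\sqrt{1-t}\,z_4$. So the weight of each edge $\{i,j\}$ carries the factor $v_iv_j$, with $v_1=v_3=\sqrt t$ and $v_2=v_4=\sqrt{1-t}$. Because every vertex has degree $n$, every multigraph term carries the same factor $(t(1-t))^n$. Your log-linear statistic is therefore deterministic (zero at $s=0$), its variance vanishes, and the expectation term is exactly $-8n$; that is, $\|M_s\|_2^2\propto(1/4-s^2)^n$.

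With this in place, the rest of your plan goes through. One point for the linear part: you need $L^2$ norms of $\partial_s^2L_s$, not second derivatives of $\|T_{\sqrt r}h_n\|_2^2$. These can be recovered from $(r\partial_r)^k$ applied to the generating function via the Hermite expansion. Alternatively, as in the paper, use the explicit three-term expression \eqref{eq:hermite-Lsecond} together with the positive-term norm formula \eqref{eq:hermite-norm-sum}.
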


\begin{proof}
\leavevmode
\par\medskip\noindent\textbf{Density and moment conditions}\par\smallskip
Lemma~\ref{lem:hermite-bounds} with $a=1$ gives
\[
 \|h_n\|_\infty\le B_n=\sqrt{2e}(n+1)^{1/4},\qquad
 \|h_n''\|_\infty\le(2n+2)B_{n+2},
\]
and $\epsilon_n\|h_n\|_{C_b^m}\to0$ for every fixed $m$.
The logarithmic differentiation estimate of Section~\ref{sec:construction-main} therefore shows that $f_n$ is eventually strictly positive and
\[
 (\log f_n)''\le-1+
 \frac{\epsilon_n(2n+2)B_{n+2}}{1-\epsilon_n B_n}<-1/2.
\]

We also have the exact identity
\[
 \varphi h_n=\varphi_{1/2}(x)\frac{H_n(\sqrt2 x)}{\sqrt{n!}}.
\]
Hermite orthogonality makes its integral against every polynomial of degree less than $n$ vanish. Normalization, variance one, and all the stated moment matches are therefore exact. Evenness follows from even $n$.

After Gaussian smoothing the density ratio is $1+\epsilon_n T_{\sqrt q}h_n$, where $T_\rho$ is the Gaussian Ornstein--Uhlenbeck operator. Its spatial derivatives satisfy
$\|(T_{\sqrt q}h_n)^{(m)}\|_\infty\le q^{m/2}\|h_n^{(m)}\|_\infty$, so the relative convergence of every fixed order and strong log-concavity are even uniform for $q\in[0,1]$. Adding an independent Gaussian also directly preserves the low-order moment matches.

\par\medskip\noindent\textbf{Exact decomposition of the weighted convolution}\par\smallskip
Write
\[
 R_{j,s}(x)=\frac{\varphi_s(x)}{\varphi(x)}\frac{H_j(x/\sqrt s)}{\sqrt{j!}},\quad
 d=\frac q2,\quad s(w)=1-dw,\quad s_M=1-\frac q2.
\]
The Fourier transform of $\varphi_{1/2}H_n(\sqrt2 x)$, or Gaussian derivative convolution identities, gives the exact output density
\[
 p_{q,n,t}=\varphi(1+\epsilon_n L_t+\epsilon_n^2M_t),
\]
where
\begin{align}
 \mathcal A_w&=\left(\frac{qw}{2s(w)}\right)^{n/2}R_{n,s(w)},\qquad
 L_t=\mathcal A_t+\mathcal A_{1-t},\label{eq:hermite-L}\\
 M_t&=\left(\frac{q\sqrt{t(1-t)}}{2s_M}\right)^n
 \sqrt{\binom{2n}{n}}\,R_{2n,s_M}.\label{eq:hermite-M}
\end{align}
All densities and the required parameter derivatives have Gaussian times polynomial envelopes, and the density ratios are uniformly bounded away from zero, justifying the entropy differentiations below.

Write $s_L=1-q/4$, $L=L_{1/2}$, $M=M_{1/2}$, and $\ddot L=\partial_t^2L_t|_{1/2}$. Exchange symmetry of the independent copies gives $\partial_t p|_{1/2}=0$, while
\[
 \partial_t^2M_t|_{1/2}=-4nM.
\]
Define $c_L=(q/(4s_L))^{n/2}$. Direct differentiation of~\eqref{eq:hermite-L} gives
\begin{align}
 L&=2c_LR_{n,s_L},\\
 \ddot L&=2c_L\left[
 n(n-2)R_{n,s_L}
 -\frac{nd}{s_L}\sqrt{(n+1)(n+2)}R_{n+2,s_L}\right.\nonumber\\
 &\hspace{30mm}\left.
 +\frac{d^2}{4s_L^2}\sqrt{(n+1)(n+2)(n+3)(n+4)}R_{n+4,s_L}\right].
 \label{eq:hermite-Lsecond}
\end{align}
Alternatively, the Fourier transform shows that the multiplier for the second derivative of a single term relative to the original term is
$n(n-2)+nd\xi^2+d^2\xi^4/4$; the Fourier signs of the consecutive even Hermite terms give the displayed expression.

\par\medskip\noindent\textbf{Exponential comparison of the linear and quadratic terms}\par\smallskip
All norms below are in $L^2(\varphi)$. Put
$a_j=\binom{2j}{j}/4^j$ and $b=s/(2-s)$. Mehler's identity and termwise integration give the exact positive-term formula
\begin{equation}\label{eq:hermite-norm-sum}
 \|R_{j,s}\|_2^2
 =\frac1{\sqrt{s(2-s)}}\sum_{i=0}^j a_i a_{j-i}b^i.
\end{equation}
Its generating function is $[s(2-s)(1-r)(1-br)]^{-1/2}$. For $0<s<1$, we have $0<b<1$, so the sum lies between $a_j$ and 1, using $\sum_i a_i a_{j-i}=1$ for the upper bound. Induction also gives $a_j\ge1/(2\sqrt j)$ for $j\ge1$.

For fixed $q>0$, or uniformly for $q\in[q_0,1]$ with $q_0>0$, these estimates give
\begin{align*}
 \|L\|_2&\le C\left(\frac q{4s_L}\right)^{n/2},\\
 \|\ddot L\|_2&\le Cn^2\left(\frac q{4s_L}\right)^{n/2},\\
 \|M\|_2&\ge c n^{-1/2}\left(\frac q{2s_M}\right)^n.
\end{align*}
The last bound uses both $\binom{2n}{n}\ge4^n/(2\sqrt n)$ and~\eqref{eq:hermite-norm-sum}. The exponential rate of the key ratio is therefore determined by
\[
 \mathcal R(q)=\frac{q s_L}{s_M^2}
 =\frac{q(1-q/4)}{(1-q/2)^2}
\]
and we have
\[
 \frac{\|L\|_2}{\epsilon_n\|M\|_2}
 +\frac{\|\ddot L\|_2}{4n\epsilon_n\|M\|_2}
 \le C n^{3/2}e^{\sqrt n}\mathcal R(q)^{-n/2}.
\]
Simplification gives
\[
 \mathcal R(q)>1\quad\Longleftrightarrow\quad q>2-\sqrt2.
\]
Hence the right-hand side tends to zero in this range.

\par\medskip\noindent\textbf{The logarithmic remainder and strictly positive curvature}\par\smallskip
Write $r=\epsilon_n L+\epsilon_n^2M$ and $\ddot r=\epsilon_n\ddot L-4n\epsilon_n^2M$. The conditional expectation representation with three independent Gaussians gives, for all weights,
\[
 \|p_{q,n,t}/\varphi-1\|_\infty\le
 \eta_n:=2\epsilon_nB_n+\epsilon_n^2B_n^2\longrightarrow0.
\]
The output variance is always one, so $\int\varphi\ddot r=\int x^2\varphi\ddot r=0$. Thus
\[
 F_{q,n}''(1/2)=-\int\varphi\ddot r\log(1+r).
\]
Set
\[
 A_n=\frac{\|L\|_2}{\epsilon_n\|M\|_2},\qquad
 B_n^*=\frac{\|\ddot L\|_2}{4n\epsilon_n\|M\|_2}.
\]
Using
$|\log(1+r)-r|\le\eta_n|r|/[2(1-\eta_n)]$ and Cauchy--Schwarz, we obtain
\begin{equation}\label{eq:hermite-curvature-certificate}
 \frac{F_{q,n}''(1/2)}{4n\epsilon_n^4\|M\|_2^2}
 \ge 1-A_n-B_n^*(1+A_n)
 -\frac{\eta_n}{2(1-\eta_n)}(1+B_n^*)(1+A_n).
\end{equation}
Indeed, expanding $-\int\varphi\ddot r\,r$ and applying Cauchy--Schwarz to each of the three cross terms gives
\[
 \left|\frac{F_{q,n}''(1/2)}{4n\epsilon_n^4\|M\|_2^2}-1\right|
 \le A_n+B_n^*(1+A_n)
 +\frac{\eta_n}{2(1-\eta_n)}(1+B_n^*)(1+A_n).
\]
For $q>2-\sqrt2$, the quantities $A_n,B_n^*,\eta_n$ tend to zero. The right-hand side thus tends to zero, and the normalized curvature tends to one. Consequently,
\[
 F_{q,n}''(1/2)\sim4n\epsilon_n^4\|M\|_2^2>0.
\]
All estimates are uniform for $q\in[q_0,1]$ with $q_0>2-\sqrt2$, giving a common degree threshold. Positive curvature and symmetry in the weight make equal weighting a strict local minimum.

\end{proof}

This construction also gives an entropy comparison at fixed weights: if $q>2-\sqrt2$, $t\ne1/2$, and
\[
 \max(t,1-t)<\frac q{q^2-2q+2},
\]
then $F_{q,n}(t)>F_{q,n}(1/2)$ for sufficiently large $n$.
Indeed, equations~\eqref{eq:hermite-L}--\eqref{eq:hermite-norm-sum} give
\[
 \|L_t\|_2=o(\epsilon_n\|M\|_2),\qquad
 \frac{\|M_t\|_2}{\|M\|_2}=[4t(1-t)]^{n/2}\longrightarrow0,
\]
so the relative entropy comparison of Section~\ref{sec:comparison} applies.

\section{An interpolation inequality for Gaussian smoothing}\label{sec:interpolation}\label{app:interpolation}
The symmetric proof requires the midpoint quadratic term to dominate the linear term, producing the restriction $4/7$. Is this restriction specific to the chosen Hermite sequence, or does it reflect a more general estimate?
We prove a Gaussian interpolation inequality for all perturbations in $L^2(dx)$ and determine its exact parameter range, including the endpoint.
At the parameters of the main proof, the inequality shows that for $0<q\le4/7$, if the amplitude times the Lebesgue $L^2$ norm of the perturbation tends to zero, then the midpoint quadratic term is negligible relative to the linear term.
This identifies the range of this dominance argument; it does not itself give sufficient conditions for entropy concavity.
Two integration measures occur here. We distinguish the Lebesgue norm $\|\cdot\|_{L^2(dx)}$ from the Gaussian norm $\|\cdot\|_{L^2(\varphi\,dx)}$.
\begin{theorem}\label{thm:sharp-gaussian-interpolation}
Let $0<r<1$ and $0<s\le1$. There is a finite constant $C_{r,s}$ such that every $g\in L^2(\mathbb R,dx)$ satisfies
\begin{equation}\label{eq:sharp-gaussian-interpolation}
 \|T_{\sqrt s}g\|_{L^2(\varphi\,dx)}^2
 \le C_{r,s}\|g\|_{L^2(dx)}\|T_{\sqrt r}g\|_{L^2(\varphi\,dx)},
\end{equation}
if and only if
\begin{equation}\label{eq:sharp-gaussian-interpolation-region}
 s\le\sqrt{2r(1+r)}-r.
\end{equation}
\end{theorem}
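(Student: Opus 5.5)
The plan is to turn both sides of \eqref{eq:sharp-gaussian-interpolation} into quadratic forms with Gaussian kernels on $L^2(dx)$, prove necessity by testing on the scaled Hermite functions $h_{n,a}$ of Lemma~\ref{lem:hermite-bounds}, and prove sufficiency through an operator inequality between Gaussian operators.

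\textbf{Setup.} For $0<\rho<1$, write $T_\rho g(x)=\int g(y)\,k_\rho(x,y)\,dy$, where $k_\rho$ is the Gaussian transition kernel. Then
\[
 \|T_{\sqrt s}g\|_{L^2(\varphi\,dx)}^2=\langle g,\mathcal K_sg\rangle_{L^2(dx)},
\]
where $\mathcal K_s$ has kernel
\[
 \kappa_s(y,y')=\int k_{\sqrt s}(x,y)\,k_{\sqrt s}(x,y')\,\varphi(x)\,dx .
\]
This kernel is an explicit centered Gaussian in $(y,y')$ of the form $c_s\exp\{-A_s(y^2+y'^2)/2+B_syy'\}$, and $\mathcal K_s$ is positive semidefinite on $L^2(dx)$. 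The same construction gives $\mathcal K_r$. This reformulation is what makes both directions tractable.

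\textbf{Necessity.} Take $g=h_{n,a}$ for a free scale $a>0$. These functions are orthonormal in $L^2(dx)$ up to a fixed constant, so $\|g\|_{L^2(dx)}\asymp1$. Lemma~\ref{lem:linear-mehler} gives the exact generating function of $\|T_{\sqrt r}h_{n,a}\|_2^2$ in $n$. Its coefficients behave like $\max(|\tau_a(r)|,|\nu_a(r)|)^n$ up to polynomial factors; the lower bound uses the positive-coefficient comparison of Section~\ref{sec:linear} for $a>1$, and a direct check of the other sign regime for $a\le1$. The inequality therefore forces
\[
 \tau_a(s)^2\le\max(|\tau_a(r)|,|\nu_a(r)|)\qquad\text{for every }a>0,
\]
and likewise with $\nu$. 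The next step is to optimize over $a$; I expect the extremal scale to be interior, and eliminating $a$ from the tangency condition should produce exactly $s\le\sqrt{2r(1+r)}-r$. A sanity check is available: at $r=q/2$ with the paper's $a=7/(3\sqrt5)$, this must reproduce the $4/7$ threshold of Section~\ref{sec:comparison}. If the Hermite family fails to reach the endpoint, I would add Gaussian coherent states $g(y)=e^{-\alpha y^2/2+\beta y}$ with $\beta\to\infty$, which probe the same quadratic symbols.

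\textbf{Sufficiency.} By Cauchy--Schwarz,
\[
 \langle g,\mathcal K_sg\rangle\le\|g\|_{L^2(dx)}\,\|\mathcal K_sg\|_{L^2(dx)},
\]
so it suffices to prove the operator inequality $\mathcal K_s^2\le C^2\mathcal K_r$ on $L^2(dx)$. Both sides are Gaussian-kernel operators; $\mathcal K_s^2$ is Gaussian by composing two Gaussian integrals. After a dilation and multiplication by a fixed Gaussian weight, each such operator is a constant times $e^{-\theta H_\omega}$, where $H_\omega=-\tfrac{d^2}{dx^2}+\omega^2x^2$ is a harmonic oscillator; this is Mehler's formula read backwards. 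The task is then to compare $c_1e^{-\theta_1H_{\omega_1}}$ with $c_2e^{-\theta_2H_{\omega_2}}$ for non-commuting oscillators. I would first try to conjugate so that $\mathcal K_r$ becomes diagonal in the Hermite basis, and then show that $\mathcal K_r^{-1/2}\mathcal K_s^2\mathcal K_r^{-1/2}$ is again a Gaussian-kernel operator whose boundedness is decided by a $2\times2$ positivity condition on the exponent matrix. That condition should coincide with the limit of the Hermite test above, that is, with \eqref{eq:sharp-gaussian-interpolation-region}.

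\textbf{Main obstacle.} I expect the hard step to be this sufficiency comparison, in particular at the endpoint $s=\sqrt{2r(1+r)}-r$. There the exponent matrix is degenerate, and one must show the constant stays finite rather than merely subexponential. For this I would compute the composed kernel exactly and check that it is a bounded, and not merely densely defined, Gaussian operator, using Lemma~\ref{lem:gaussian-coefficient} for the determinant and covariance bookkeeping.
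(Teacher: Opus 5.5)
\textbf{Gap in the sufficiency half.} You reduce to the operator inequality $\mathcal K_s^2\le C^2\mathcal K_r$, but this inequality is false on part of the claimed range, so no kernel bookkeeping can close the argument.

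The kernel of $\mathcal K_s$ is the standard bivariate normal density with correlation $s$. It has the form $c_se^{-\alpha_s(x^2+y^2)+\beta_sxy}$ with $2\alpha_s+\beta_s=(1-s)^{-1}$ and $2\alpha_s-\beta_s=(1+s)^{-1}$. Composing two such kernels shows that $\mathcal K_s^2$ is again Gaussian, with $2\alpha+\beta=(1-s^2)^{-1}$ and $2\alpha-\beta=1$.

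Now test with $g_w(y)=e^{-y^2/2}\cos(wy)$. For any such Gaussian kernel $K$, a direct Gaussian integral gives $\langle g_w,Kg_w\rangle\asymp\exp\{-w^2/(2\alpha+\beta+1)\}$ as $w\to\infty$. Comparing $\mathcal K_s^2$ with $\mathcal K_r$ on $g_w$ therefore forces $(1-s^2)^{-1}\le(1-r)^{-1}$, i.e.\ $s\le\sqrt r$.

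But $\sqrt{2r(1+r)}-r>\sqrt r$ for every $0<r<1$, since this is equivalent to $(1-\sqrt r)^2>0$. So your route cannot reach the range $\sqrt r<s\le\sqrt{2r(1+r)}-r$. In the paper's application $r=q/2$, $s=q$, it yields only $q\le1/2$ instead of $q\le4/7$. The loss occurs in the step $\langle g,\mathcal K_sg\rangle\le\|g\|\,\|\mathcal K_sg\|$, which is far from sharp on these high-frequency states. Consequently, your expectation that the resulting $2\times2$ positivity condition would coincide with the Hermite threshold is wrong.

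\textbf{The missing idea.} Apply Cauchy--Schwarz on the $r$-side instead. Prove $\mathcal K_s\preceq C\,\mathcal K_r^{1/2}$; by L\"owner--Heinz this is implied by your inequality but is strictly weaker. Then use
\[
 \langle g,\mathcal K_r^{1/2}g\rangle\le\|g\|_{L^2(dx)}\|\mathcal K_r^{1/2}g\|_{L^2(dx)}=\|g\|_{L^2(dx)}\|T_{\sqrt r}g\|_{L^2(\varphi\,dx)}.
\]
Since $\mathcal K_r$ is diagonal in $h_{n,a}$, $a=(1-r^2)^{-1/2}$, with eigenvalues proportional to $\lambda^n$, where $\lambda=r/(1+\sqrt{1-r^2})$, Mehler's formula with ratio $\sqrt\lambda$ shows that $\mathcal K_r^{1/2}$ is again a Gaussian kernel.

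The paper then shows that $K_A\preceq CK_B$ for Gaussian kernels whenever $2\alpha_A+\beta_A\le2\alpha_B+\beta_B$ and $2\alpha_B-\beta_B\le2\alpha_A-\beta_A$. It does this by bounding a weighted composition operator on Fock space. That bound remains valid with equality in the weight condition, which also removes your concern about the endpoint. For $A=\mathcal K_s$ and $B=\mathcal K_r^{1/2}$, these two conditions are exactly $s\le\sqrt{2r(1+r)}-r$.

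\textbf{Necessity.} Your necessity half is fine and is the paper's argument, with one adjustment: do not look for an interior tangency, simply take $a=(1-r^2)^{-1/2}$. At this scale $\tau_a(r)=\nu_a(r)=\lambda$, so the $h_{n,a}$ are eigenfunctions of $\mathcal K_r$. The condition $\tau_a(s)>\sqrt\lambda$ is then exactly the complement of the region. For $r=2/7$ this is the paper's $a=7/(3\sqrt5)$.
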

The range includes equality at the endpoint. The proof is given at the end of this appendix.
Taking $r=q/2$ and $s=q$, condition~\eqref{eq:sharp-gaussian-interpolation-region} is equivalent to
$3q/2\le\sqrt{q(1+q/2)}$, or $0<q\le4/7$.

For $g\in L^2(dx)$, denote the midpoint linear and quadratic conditional expectations by
\[
 L(g)=2T_{\sqrt{q/2}}g,\qquad
 M(g,g)(x)=\mathbb E[g(G_1)g(G_2)\mid
  \sqrt{q/2}(G_1+G_2)+\sqrt{1-q}G_3=x].
\]
Contraction of conditional expectation gives
$\|M(g,g)\|_{L^2(\varphi\,dx)}\le\|T_{\sqrt q}g\|_{L^2(\varphi\,dx)}^2$: smooth the two independent copies separately and then condition on their equally weighted sum.
We obtain the following corollary.
\begin{corollary}\label{cor:quadratic-obstruction}
For $0<q\le4/7$,
\[
 \|M(g,g)\|_{L^2(\varphi\,dx)}
 \le\tfrac12 C_{q/2,q}\|g\|_{L^2(dx)}\|L(g)\|_{L^2(\varphi\,dx)}.
\]
If $g_n\ne0$, $\epsilon_n>0$, and $\epsilon_n\|g_n\|_{L^2(dx)}\to0$, then
$\epsilon_n^2\|M(g_n,g_n)\|_{L^2(\varphi\,dx)}/
 (\epsilon_n\|L(g_n)\|_{L^2(\varphi\,dx)})\to0$.
\end{corollary}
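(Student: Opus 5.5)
The first inequality follows from the contraction bound for $M$ stated just before the corollary, together with Theorem~\ref{thm:sharp-gaussian-interpolation} applied with $r=q/2$ and $s=q$. The second is then a ratio argument.

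\textbf{Step 1 (admissible parameters).} Take $0<q\le4/7$ and set $r=q/2\in(0,2/7]\subset(0,1)$ and $s=q\in(0,1]$. As computed after the theorem, condition~\eqref{eq:sharp-gaussian-interpolation-region} reads $q\le\sqrt{q(1+q/2)}-q/2$. This is equivalent to $3q/2\le\sqrt{q(1+q/2)}$, i.e.\ $9q^2/4\le q+q^2/2$, i.e.\ $q\le4/7$. So the theorem supplies a finite constant $C_{q/2,q}$ with
\[
 \|T_{\sqrt q}g\|_{L^2(\varphi\,dx)}^2\le C_{q/2,q}\|g\|_{L^2(dx)}\|T_{\sqrt{q/2}}g\|_{L^2(\varphi\,dx)}
\]
for every $g\in L^2(dx)$.

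\textbf{Step 2 (the quadratic term).} Next I justify $\|M(g,g)\|_{L^2(\varphi)}\le\|T_{\sqrt q}g\|_{L^2(\varphi)}^2$.

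Set $U_i=\sqrt q G_i+\sqrt{1-q}\,G_i'$ for $i=1,2$, with independent Gaussians. Then $(U_1+U_2)/\sqrt2$ has the law of the conditioning variable, so $M(g,g)$ is the conditional expectation of $T_{\sqrt q}g(U_1)\,T_{\sqrt q}g(U_2)$ given $(U_1+U_2)/\sqrt2$.

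Conditional Jensen (contraction in $L^2$) together with independence gives
\[
 \|M(g,g)\|_{L^2(\varphi)}^2\le \E\bigl[(T_{\sqrt q}g(U_1))^2\bigr]\,\E\bigl[(T_{\sqrt q}g(U_2))^2\bigr]=\|T_{\sqrt q}g\|_{L^2(\varphi)}^4 .
\]

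For $g\in L^2(dx)$ rather than $L^2(\varphi)$, I note that $T_\rho g$ is well defined for $\rho<1$. Indeed, $T_\rho g$ is a Gaussian convolution of $g$ and is bounded by Cauchy--Schwarz, so the expression is finite. This is the only measure-theoretic point to check, and it should also cover $q=1$, which is excluded here anyway.

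\textbf{Step 3 (combine).} Since $L(g)=2T_{\sqrt{q/2}}g$, Steps 1 and 2 give
\[
 \|M(g,g)\|_{L^2(\varphi)}\le C_{q/2,q}\|g\|_{L^2(dx)}\|T_{\sqrt{q/2}}g\|_{L^2(\varphi)}=\tfrac12C_{q/2,q}\|g\|_{L^2(dx)}\|L(g)\|_{L^2(\varphi)} ,
\]
which is the first inequality.

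\textbf{Step 4 (the limit).} Suppose $L(g_n)\ne0$. Dividing the inequality of Step 3 by $\epsilon_n\|L(g_n)\|_{L^2(\varphi)}$ bounds the ratio $\epsilon_n^2\|M(g_n,g_n)\|/(\epsilon_n\|L(g_n)\|)$ by $\tfrac12C_{q/2,q}\,\epsilon_n\|g_n\|_{L^2(dx)}$, and this tends to zero by hypothesis.

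It remains to confirm that $L(g_n)\ne0$ whenever $g_n\ne0$. The operator $T_\rho$ for $\rho<1$ acts by Gaussian convolution after rescaling, and the Fourier transform of a Gaussian never vanishes, so $T_\rho$ is injective on $L^2(dx)$. Hence the ratio is well defined.

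The only real content is Theorem~\ref{thm:sharp-gaussian-interpolation}, which is assumed. The main obstacle is therefore just verifying the well-definedness of the objects for $L^2(dx)$ inputs.
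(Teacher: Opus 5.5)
Your proposal is correct and follows the paper's proof: the contraction bound $\|M(g,g)\|_{L^2(\varphi\,dx)}\le\|T_{\sqrt q}g\|_{L^2(\varphi\,dx)}^2$, then Theorem~\ref{thm:sharp-gaussian-interpolation} with $r=q/2$, $s=q$, then division by $\epsilon_n\|L(g_n)\|_{L^2(\varphi\,dx)}$. The one difference is how you show $L(g_n)\ne0$. You use the nonvanishing Gaussian Fourier multiplier on $L^2(dx)$, whereas the paper uses that $T_\rho$ multiplies the $j$th Hermite coefficient by $\rho^j$ in $L^2(\varphi\,dx)$; both arguments are valid.
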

\begin{proof}
Apply the theorem to the preceding contraction bound.
Since $\varphi\le(2\pi)^{-1/2}$, we have $L^2(dx)\subset L^2(\varphi\,dx)$.
In the complete orthonormal basis $H_j/\sqrt{j!}$ of $L^2(\varphi\,dx)$, $T_\rho$ multiplies the $j$th coefficient by $\rho^j$.
For $\rho=\sqrt{q/2}>0$, the equality $T_\rho g=0$ forces every coefficient to vanish, hence $g=0$. The denominator is therefore nonzero for nonzero $g_n$.
Completeness can also be obtained from uniqueness of the Gaussian-weighted Fourier transform: if $g$ is orthogonal to all polynomials, the entire function
$z\mapsto\int g(x)\varphi(x)e^{zx}\,dx$ has all Taylor coefficients zero and is identically zero. Taking purely imaginary $z$ gives $g\varphi=0$.
Entire analyticity and differentiation under the integral follow from Cauchy--Schwarz and uniform bounds on Gaussian exponential moments over compact sets.
\end{proof}
Our moment-corrected Hermite sequence meets this small-norm condition: the original Hermite functions have degree-independent Lebesgue norms, and the fixed-order correction tends to zero exponentially.
Thus, for $0<q\le4/7$, the midpoint quadratic term is negligible relative to the linear term for sequences satisfying this small-norm condition.
This compares only the sizes of the terms; it does not determine the curvature of entropy as a function of the weight. Concavity for symmetric inputs at $q\le4/7$ remains open.

The Gaussian smoothing norm corresponds to a positive integral operator on $L^2(dx)$. Sufficiency in Theorem~\ref{thm:sharp-gaussian-interpolation} follows from an operator-order comparison of Gaussian kernels; necessity is tested on high-degree Hermite eigenfunctions.
\begin{proof}
\leavevmode
\par\medskip\noindent\textbf{The Gaussian kernel of the smoothing quadratic form}\par\smallskip
Let $A_s$ be the positive operator on $L^2(dx)$ satisfying
$\langle g,A_sg\rangle_{dx}=\|T_{\sqrt s}g\|_{L^2(\varphi\,dx)}^2$.
For $s<1$, its kernel is
\[
 A_s(x,y)=c_s e^{-\alpha_s(x^2+y^2)+\beta_sxy},\qquad
 c_s=\frac1{2\pi\sqrt{1-s^2}},\quad
 \alpha_s=\frac1{2(1-s^2)},\quad\beta_s=\frac{s}{1-s^2}.
\]
Indeed, the kernel is the bivariate standard Gaussian density with correlation $s$.
For $s=1$, $A_1$ is multiplication by $\varphi$.

Fix $r$ and put
\[
 a=(1-r^2)^{-1/2},\qquad
 \lambda=\frac{r}{1+\sqrt{1-r^2}},\qquad v=\sqrt\lambda,
 \qquad \Lambda_r=\sqrt{\frac{a}{2\pi(1+a)}}.
\]
Let $h_{n,a}(x)=\sqrt2e^{-ax^2/2}H_n(\sqrt{2a}x)/\sqrt{n!}$.
The Hermite functions normalized in Lebesgue measure,
\[
 e_{n,a}=h_{n,a}/\sqrt{J_a},\qquad J_a=2\sqrt{\pi/a},
\]
form an orthonormal basis, and
\begin{equation}\label{eq:gaussian-operator-diagonal}
 A_r e_{n,a}=\Lambda_r\lambda^n e_{n,a}.
\end{equation}
Summing $\sum_{n\ge0}\lambda^n e_{n,a}(x)e_{n,a}(y)$ by Mehler's formula and multiplying by $\Lambda_r$ gives
\[
 \Lambda_r\sqrt{a/\pi}(1-\lambda^2)^{-1/2}
 \exp\left\{-\frac{a(1+\lambda^2)}{2(1-\lambda^2)}(x^2+y^2)
                 +\frac{2a\lambda}{1-\lambda^2}xy\right\}.
\]
Using $r=2\lambda/(1+\lambda^2)$ and $a=(1+\lambda^2)/(1-\lambda^2)$, the three kernel parameters equal $c_r,\alpha_r,\beta_r$, respectively. This proves~\eqref{eq:gaussian-operator-diagonal}.
For completeness, if $g\in L^2(dx)$ is orthogonal to all $e_{n,a}$, then
$\int g(x)e^{-ax^2/2}x^n\,dx=0$ for every $n$.
The function $z\mapsto\int g(x)e^{-ax^2/2}e^{zx}\,dx$ is entire, since Cauchy--Schwarz gives uniform integrability on compact sets.
All its derivatives at zero vanish, so it is identically zero. Taking purely imaginary $z$ and using uniqueness of the Fourier transform gives $g=0$.

Let $B=A_r^{1/2}$. Equation~\eqref{eq:gaussian-operator-diagonal} and Mehler's formula give
\[
 B(x,y)=c_Be^{-\alpha_B(x^2+y^2)+\beta_Bxy},
\]
\[
 c_B=\sqrt{\Lambda_r}\sqrt{\frac a\pi}\frac1{\sqrt{1-v^2}},\qquad
 \alpha_B=\frac{a(1+v^2)}{2(1-v^2)},\qquad
 \beta_B=\frac{2av}{1-v^2}.
\]

\par\medskip\noindent\textbf{An operator-order comparison of Gaussian kernels}\par\smallskip
We will apply the following kernel comparison to the square-root operator. For $c_j>0$ and $2\alpha_j>\beta_j>0$ ($j=A,B$), write
\[
 K_j(x,y)=c_j e^{-\alpha_j(x^2+y^2)+\beta_jxy}\quad(j=A,B).
\]
Suppose
\begin{equation}\label{eq:gaussian-kernel-condition}
 \beta_A+2|\alpha_A-\alpha_B|\le\beta_B.
\end{equation}
Then, in the positive operator order,
\begin{equation}\label{eq:gaussian-kernel-comparison}
 K_A\preceq\frac{c_A}{c_B}\frac{\beta_B}{\beta_A}K_B.
\end{equation}
The notation $\preceq$ denotes order of quadratic forms.

To prove this, consider the Hilbert space of entire functions on the complex plane, the Fock space, with norm
\[
 \|f\|_{\mathcal F}^2=\frac1\pi\int_{\mathbb C}|f(z)|^2e^{-|z|^2}\,dA(z).
\]
If $f(z)=\sum_{n\ge0}a_nz^n$, Parseval's identity on circles followed by radial integration gives
$\|f\|_{\mathcal F}^2=\sum_{n\ge0}|a_n|^2n!$; monotone convergence justifies interchanging the integrals of nonnegative terms.
Thus $z^n/\sqrt{n!}$ is an orthonormal basis. Summing gives the reproducing kernel
$\sum_{n\ge0}z^n\overline w^{\,n}/n!=e^{z\overline w}$, and Cauchy--Schwarz gives boundedness of point evaluation.
The weighted composition operator used here falls under Le's boundedness criterion~\cite[Theorem 2.2]{Le} and the critical quadratic-weight case of Carroll--Gilmore~\cite{CG}. For the real quadratic weights needed here, we give a direct norm estimate including the boundary.
Put $\delta=\alpha_A-\alpha_B$ and $\rho^2=\beta_A/\beta_B$, and consider
\[
 (Cf)(z)=e^{-\delta z^2/\beta_B}f(\rho z).
\]
After the substitution $w=\rho z$, condition~\eqref{eq:gaussian-kernel-condition} gives
\begin{align*}
 \|Cf\|_{\mathcal F}^2
 &=\frac1{\pi\rho^2}\int_{\mathbb C}|f(w)|^2
 \exp\!\left[-\frac{|w|^2}{\rho^2}
       -\frac{2\delta\operatorname{Re}(w^2)}{\beta_B\rho^2}\right]dA(w)\\
 &\le\rho^{-2}\|f\|_{\mathcal F}^2.
\end{align*}
The estimate remains valid at equality, since the quadratic decay coefficients in both real directions are at least one.
Write $k_w(z)=e^{z\overline w}$. The reproducing property gives
$C^*k_w=e^{-\delta\overline w^2/\beta_B}k_{\rho w}$.
For real $x$, take $\xi_x=e^{-\alpha_Bx^2}k_{\sqrt{\beta_B}x}$. Then
\[
 \langle\xi_y,\xi_x\rangle=K_B(x,y)/c_B,\qquad
 \langle C^*\xi_y,C^*\xi_x\rangle=K_A(x,y)/c_A.
\]
Since $\|C^*\|^2\le\rho^{-2}$, the Gram matrices of finite linear combinations satisfy the claimed comparison.
Approximate the quadratic forms of compactly supported continuous functions by Riemann sums, then use density in $L^2$ to obtain~\eqref{eq:gaussian-kernel-comparison}.
The kernels here satisfy $2\alpha_j>\beta_j>0$ and are therefore Hilbert--Schmidt, so their quadratic forms are continuous on $L^2$.

\par\medskip\noindent\textbf{Sufficiency}\par\smallskip
In the asserted parameter range, the smoothing quadratic form is controlled by the square-root quadratic form. A single Cauchy--Schwarz inequality then yields the interpolation bound.
Substitute $K_A=A_s$ and $K_B=B$. Condition~\eqref{eq:gaussian-kernel-condition} is equivalent to
\[
 \frac1{1-s}\le a\frac{1+v}{1-v},\qquad
 a\frac{1-v}{1+v}\le\frac1{1+s}.
\]
Write
\[
 r=\frac{2v^2}{1+v^4},\qquad a=\frac{1+v^4}{1-v^4}.
\]
The first inequality is equivalent to
\[
 s\le1-\frac{1-v}{a(1+v)}
   =\frac{2v(1-v+v^2)}{1+v^4}
   =\sqrt{2r(1+r)}-r.
\]
The second gives the upper bound
$2v(1+v+v^2)/(1+v^4)$, strictly larger than the first. Thus the required range is precisely
\eqref{eq:sharp-gaussian-interpolation-region}.
In this range,~\eqref{eq:gaussian-kernel-comparison} and Cauchy--Schwarz give
\begin{align*}
 \|T_{\sqrt s}g\|_{L^2(\varphi\,dx)}^2
 &\le C_{r,s}\langle g,A_r^{1/2}g\rangle_{dx}\\
 &\le C_{r,s}\|g\|_{L^2(dx)}\|A_r^{1/2}g\|_{L^2(dx)}
 =C_{r,s}\|g\|_{L^2(dx)}\|T_{\sqrt r}g\|_{L^2(\varphi\,dx)},
\end{align*}
with the explicit choice
$C_{r,s}=(c_s/c_B)(\beta_B/\beta_s)$. This constant suffices to prove sufficiency of the parameter range.

\par\medskip\noindent\textbf{Necessity}\par\smallskip
Outside this range, high-degree Hermite functions make the ratio of the two sides tend to infinity, so no perturbation-independent constant exists.
Again take $a=(1-r^2)^{-1/2}$ and use $g=h_{n,a}$, with $n$ tending to infinity through even integers.
The squared Lebesgue norm is always $J_a$, and
\[
 \|T_{\sqrt r}h_{n,a}\|_{L^2(\varphi\,dx)}^2
 =J_a\Lambda_r\lambda^n.
\]
If $s>\sqrt{2r(1+r)}-r$, then necessarily $s>r$. In the linear generating function above,
\[
 \tau_a(s)=\frac{1-a+as}{1+a-as}
 >\nu_a(s)=\frac{a-1+as}{1+a+as}>0,
 \qquad \tau_a(s)>v.
\]
The last strict inequality is equivalent to
$s>1-(1-v)/(a(1+v))$. Let $A_n=4^{-n}\binom{2n}{n}$. By nonnegativity of the generating coefficients, retaining
the $n$th term of $(1-\tau_a(s)z)^{-1/2}$ and the constant term of the other factor gives
\[
 \|T_{\sqrt s}h_{n,a}\|_{L^2(\varphi\,dx)}^2
 \ge\frac{2A_n\tau_a(s)^n}{\sqrt{(1+a)^2-a^2s^2}}
 \ge c_{r,s}n^{-1/2}\tau_a(s)^n,
\]
The last step uses the Stirling lower bound for the central binomial coefficient.
Thus the ratio of the left-hand side of~\eqref{eq:sharp-gaussian-interpolation} to the right-hand side without its constant
grows at least as $n^{-1/2}(\tau_a(s)/v)^n$ and tends to infinity. No uniform constant exists.
For $s=1$, the generating function remains valid and $\tau_a(1)=1>v$, giving the same conclusion.

\end{proof}

\section{An explicit example with noise variance fraction 17/40}\label{sec:finite-certificate}
The preceding asymptotic construction produces counterexamples at sufficiently high degree. We finish with fully specified parameters that turn this conclusion into a strict finite inequality.
We construct a strongly log-concave counterexample with noise variance fraction $17/40$ and give an explicit positive lower bound for the entropy gap.
The example demonstrates a finite realization; the general existence result is proved by the analytic estimates in the main text.
Take
\[
 q=23/40,\quad t=503/1000,\quad n=2^{24},\quad
 a=7/(3\sqrt5),\quad \epsilon=e^{-4096}.
\]
Correct only mass and variance, so $K=1$. The original zeroth and second moments, denoted by $m_0,m_2$, are exactly
\[
 m_0=\sqrt{\frac2{1+a}}\frac{\sqrt{n!}}{2^{n/2}(n/2)!}b^{n/2},
 \qquad
 m_2=m_0\left(\frac1{1+a}+\frac{2an}{(1+a)^2b}\right).
\]
Set $c_{n,0}=m_0$ and $c_{n,1}=\sqrt2(m_2-m_0/2)$, and put
\[
 f=\varphi\{1+\epsilon(h_n-c_{n,0}\psi_0-c_{n,1}\psi_1)\}.
\]
\begin{proposition}\label{prop:finite-noise}
The density $f$ above is strictly positive, even, smooth, and of variance one. Both $f$ and $\mathcal S_qf$ have second logarithmic derivative less than $-1/2$.
Moreover, $\|f/\varphi-1\|_\infty<2.013\cdot10^{-1777}$, and
\[
 F_{\mathcal S_qf}(503/1000)=F_{\mathcal S_qf}(497/1000)
 >F_{\mathcal S_qf}(1/2)+10^{-13896577}.
\]
\end{proposition}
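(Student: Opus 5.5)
The plan is to make every asymptotic estimate of Sections~\ref{sec:construction-main}--\ref{sec:comparison} explicit at the fixed values $q=23/40$, $t=503/1000$, $n=2^{24}$, $\epsilon=e^{-4096}$. The argument splits into a density part and an entropy part.

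\textbf{Density part.} First I verify the correction formulas. The displayed $m_0$ is the $z^n$ coefficient of $e^{bz^2/2}/\sqrt{1+a}$, multiplied by $\sqrt2\,n!/\sqrt{n!}$. The formula for $m_2$ follows by inserting $x^2$ and completing the square, as in Section~\ref{sec:construction-main}. Triangular inversion with $A_{00}=1$, $A_{10}=1/2$, $A_{11}=1/\sqrt2$ gives $c_{n,0}$ and $c_{n,1}$. Hence mass one and variance one hold, and evenness holds because $n$ is even.

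Next I bound the perturbation. Lemma~\ref{lem:hermite-bounds} gives $\|h_n\|_\infty\le\sqrt{2e}(n+1)^{1/4}<2.4\cdot 64\approx 150$. The corrections are of size $b^{n/2}n$, which is astronomically small. Therefore $\epsilon\|\widetilde h_n\|_\infty\le e^{-4096}\cdot 150.1$. Since $e^{-4096}\approx10^{-1778.9}$, this yields the bound $2.013\cdot10^{-1777}$; I will check the decimal carefully with rigorous logarithm bounds.

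For curvature, $\|h_n''\|_\infty\le a(2n+2)B_{n+2}\approx 10^{10}$, which is tiny against $e^{4096}$. The logarithmic-derivative inequality then gives $(\log f)''<-1/2$. By contraction of $T_{\sqrt q}$, the same holds after smoothing.

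\textbf{Entropy part.} I use Lemma~\ref{lem:entropy-quadratic} with $\eta\le 2\epsilon B_n+\epsilon^2B_n^2$. This gives
\[
 F(t)-F(1/2)\ge \frac{\|r_{1/2}\|_2^2}{2(1+\eta)}-\frac{\|r_t\|_2^2}{2(1-\eta)}.
\]
For the midpoint lower bound I use the explicit single positive multinomial term from Section~\ref{sec:quadratic}. Computing $d,e,c,\kappa$ at $q=23/40$, I choose $k_e,k_c,k_d$ and evaluate
\[
 \log\Bigl[\frac{4(n!)^2}{\Delta}\,\frac{(d/2)^{4k_d}e^{2k_e}c^{4k_c}}{(k_d!)^4(k_e!)^2(k_c!)^4}\Bigr]
\]
with Robbins' Stirling bounds. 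This gives $\|M_{1/2}\|_2\ge$ an explicit $10^{-X}$.

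For the comparison weight, \eqref{eq:scaled-M-other} bounds $\|M_t\|_2^2$ by $\tfrac4\Delta\tfrac{(n!)^2}{(2n)!}(2\kappa_t)^{2n}$. Here $\kappa_t/\kappa=1-O((t-1/2)^2)$, and with $n=2^{24}$ and $|t-1/2|=0.003$ the factor $(\kappa_t/\kappa)^{2n}$ is about $e^{-c\cdot 150}$. The linear terms are controlled by \eqref{eq:scaled-L-bound} with $\tau(q\cdot0.503)^{n/2}$. The moment-correction errors are controlled via \eqref{eq:scaled-correction-small}. The factor $\epsilon^{-1}=e^{4096}$ must lose against $(\sqrt{\tau}/\kappa)^{n}$. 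Since $q=23/40$ exceeds $4/7$ by only $1/280$, the gap $\kappa^2-\tau(q/2)$ is small, roughly $\propto (7q-4)$. This is why $n$ must be as large as $2^{24}$, and I verify numerically that $n\log(\kappa^2/\tau)\gg 4096$.

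I then assemble the pieces. The midpoint bound $\|r_{1/2}\|_2\ge\epsilon^2\|M_{1/2}\|_2-\epsilon\|\widetilde L_{1/2}\|_2-\epsilon^2\|\widetilde M_{1/2}-M_{1/2}\|_2$ combines with the upper bound for $\|r_t\|_2$ to give a gap of roughly $\tfrac12\epsilon^4\|M_{1/2}\|_2^2$. With $\epsilon^4=e^{-16384}$ and $\|M_{1/2}\|_2^2\approx\kappa^{2n}$, this is of order $10^{-1.39\cdot10^7}$, consistent with the stated $10^{-13896577}$. The symmetry $F(t)=F(1-t)$ gives the value at $497/1000$.

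The main obstacle is the \emph{rigorous numerics}: certifying $\log\kappa$, $\log\tau$ and $\log\kappa_t$ to enough digits, multiplied by $n=2^{24}$, so that each asymptotic ratio becomes an explicit number below, say, $1/10$. I would first compute $\kappa$ and $\tau$ in closed form in $\sqrt5$ with interval arithmetic. Then I would bound the Stirling corrections crudely, since polynomial factors in $n$ are negligible against the exponential margins.
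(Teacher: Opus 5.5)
Your plan follows the paper's proof. It uses the same explicit single multinomial term for $\|M_{1/2}\|_2$, the coefficient-sum bound \eqref{eq:scaled-M-other} for $\|M_t\|_2$, the linear bound \eqref{eq:scaled-L-bound}, explicit moment-correction errors and Lemma~\ref{lem:entropy-quadratic}, all certified by interval arithmetic; the paper uses Arb with $\log\Gamma$ and packages the errors as $R,H,J,E,\eta,\ell,v$.

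One numerical correction: $e^{-4096}\approx1.3483\cdot10^{-1779}$, not $10^{-1778.9}$, so your crude bound $150.1$ gives about $2.024\cdot10^{-1777}$ and misses the stated constant. You need the sharp value $\sqrt{2e}(n+1)^{1/4}\approx149.23$ plus the negligible correction, which gives about $2.0121\cdot10^{-1777}$ (the threshold is about $149.29$). Similarly, $\eta$ should be built from $H=B_n+R$ rather than $B_n$.
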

\begin{proof}

Normalization and the moment conditions follow from the correction formulas in Section~\ref{sec:construction-main}.
Use $d,e,c,\kappa,\Delta$ as in Section~\ref{sec:quadratic}, and choose the integer counts
\[
 k_d=541411,\qquad k_e=2009132,\qquad k_c=6842631.
\]
These satisfy $2k_d+k_e+2k_c=n$. Let
\[
 A^2=\frac{4(n!)^2}{\Delta}
       \frac{(d/2)^{4k_d}e^{2k_e}c^{4k_c}}{(k_d!)^4(k_e!)^2(k_c!)^4},
 \qquad B^2=\frac4\Delta\frac{(n!)^2}{(2n)!}(2\kappa_t)^{2n},
\]
Then $\|M_{1/2}\|_2\ge A>0$ and $\|M_t\|_2\le B$.
In this appendix, $A,B$ denote only these norm bounds.
Define the linear upper bounds
\[
 U_s^2=\frac8{\sqrt{(1+a)^2-a^2q^2w_s^2}}\tau(qw_s)^n,
 \qquad w_s=\max(s,1-s),\quad s\in\{1/2,t\}.
\]
Also write
\begin{gather*}
 R=|c_{n,0}|B_0+|c_{n,1}|B_2,\qquad H=B_n+R,\\
 J=a(2n+2)B_{n+2}+2|c_{n,0}|B_2+6|c_{n,1}|B_4,\\
 E=2B_nR+R^2,\qquad \eta=2\epsilon H+\epsilon^2H^2,\\
 \ell=\frac EA+\frac{2R}{\epsilon A}+\frac{U_{1/2}}{\epsilon A},\qquad
 v=\frac EA+\frac{2R}{\epsilon A}+\frac{U_t}{\epsilon A}+\frac BA.
\end{gather*}
Lemma~\ref{lem:hermite-bounds} gives $\|\widetilde h_n\|_\infty\le H$ and $\|\widetilde h_n''\|_\infty\le J$.
If $\epsilon H<1$ and $-1+\epsilon J/(1-\epsilon H)<-1/2$, then $f$ and its Gaussian smoothing are strongly log-concave.
If also $\ell<1$ and $\eta<1$, the triangle inequality and Lemma~\ref{lem:entropy-quadratic} give
\begin{equation}\label{eq:finite-gap}
 F(t)-F(1/2)\ge\epsilon^4A^2\,
 \frac12\left\{\frac{(1-\ell)^2}{1+\eta}-\frac{v^2}{1-\eta}\right\}.
\end{equation}

Evaluating these expressions with 512-bit Arb interval arithmetic~\cite{Arb} gives the following strict upper bounds:
\[
\begin{array}{c|c}
\text{Expression}&\text{Strict upper bound}\\\hline
 \epsilon H&2.013\cdot10^{-1777}\\
 U_{1/2}/(\epsilon A)&3\cdot10^{-35499}\\
 U_t/(\epsilon A)&9\cdot10^{-6260}\\
 B/A&3\cdot10^{-54}\\
 E/A&10^{-7080183}\\
 2R/(\epsilon A)&5\cdot10^{-7078407}
\end{array}
\]
In addition, $-1+\epsilon J/(1-\epsilon H)<-1/2$, and
\[
 \frac12\left\{\frac{(1-\ell)^2}{1+\eta}-\frac{v^2}{1-\eta}\right\}>
 \frac{4999}{10000},\qquad
 \log_{10}\!\left(\epsilon^4A^2\frac{4999}{10000}\right)>-13896577.
\]
Factorials are evaluated through $\log\Gamma(k+1)$, and the inequalities follow from the corresponding directed interval endpoints.
The expressions and interval endpoints are supplied with the paper.
Substitution into~\eqref{eq:finite-gap} yields
\[
 F(503/1000)=F(497/1000)>F(1/2)+10^{-13896577}.
\]
\end{proof}

\end{document}